\newcommand{\supket}[1]{|#1 ))}
\newcommand{\supbra}[1]{(( #1 |}
\newcommand{\supbraket}[1]{\left(\left( #1 \right)\right)}
\newtheorem{theorem}{Theorem}
\newtheorem{lemma}{Lemma}
\newtheorem{corollary}{Corollary}
\newtheorem{definition}{Definition}
\DeclareMathOperator{\diag}{diag}
\newcommand{\abs}[1]{\left| #1 \right|}
\newcommand{\pbra}[1]{\left( #1 \right)}
\newcommand{\cbra}[1]{\left\{ #1 \right\}}
\newcommand{\sbra}[1]{\left[ #1 \right]}
\newcommand{\tr}[1]{\text{Tr}\left( #1 \right)}
\newcommand{\Ebb}{\mathbb{E}}
\newcommand{\Ibb}{\mathbb{I}}
\newcommand{\Pbb}{\mathbb{P}}
\newcommand{\Rbb}{\mathbb{R}}
\newcommand{\Ccal}{\mathcal{C}}
\newcommand{\Hcal}{\mathcal{H}}
\newcommand{\Ical}{\mathcal{I}}
\newcommand{\Lcal}{\mathcal{L}}
\newcommand{\Pcal}{\mathcal{P}}
\newcommand{\Scal}{\mathcal{S}}
\newcommand{\Tcal}{\mathcal{T}}
\newcommand{\Ucal}{\mathcal{U}}
\newcommand{\Umat}{\bm{\mathrm U}}
\newcommand{\iswap}{\mathrm{iSWAP}}
\newcommand{\SQiSW}{\mathrm{SQiSW}}
\newcommand{\ave}{\mathrm{ave}}
\tikzset{FAstyle/.style={
    shorten >=1pt,
    node distance=3cm,
    on grid,
    auto,
    every state/.style={
      draw=blue!50,
      very thick,
      top color=white,
      bottom color=blue!20,
      minimum size=0pt
    },
    thick,
    draw=black!50
  }
}
\begin{document}

\title{Leakage Benchmarking for Universal Gate Sets}
\author[1,2]{Bujiao Wu}

\author[3,1]{Xiaoyang Wang}
\author[1,2]{Xiao Yuan}
\author[4]{Cupjin Huang \thanks{Email:\href{mailto:cupjin.huang@alibaba-inc.com}{\texttt{cupjin.huang@alibaba-inc.com}}}}
\author[4]{Jianxin Chen}
\affil[1]{Center on Frontiers of Computing Studies, Peking University, Beijing 100871, China}

\affil[2]{School of Computer Science, Peking University, Beijing 100871, China}

\affil[3]{School of Physics, Peking University, Beijing 100871, China}

\affil[4]{Alibaba Quantum Laboratory, Alibaba Group USA, Bellevue, Washington 98004, USA}
\date{}

\maketitle

\begin{abstract}
Errors are common issues in quantum computing platforms, among which leakage is one of the most challenging to address. This is because leakage, i.e., the loss of information stored in the computational subspace to undesired subspaces in a larger Hilbert space, is more difficult to detect and correct than errors that preserve the computational subspace. As a result, leakage presents a significant obstacle to the development of fault-tolerant quantum computation.
In this paper, we propose an efficient and accurate benchmarking framework called \emph{leakage randomized benchmarking} (LRB) for measuring leakage rates on multi-qubit quantum systems. Our approach is more insensitive to state preparation and measurement (SPAM) noise than existing leakage benchmarking protocols, requires fewer assumptions about the gate set itself, and can be used to benchmark multi-qubit leakages, which was not done previously.
We also extend the LRB protocol to an interleaved variant called interleaved LRB (iLRB), which can benchmark the average leakage rate of generic $n$-site quantum gates with reasonable noise assumptions. We demonstrate the iLRB protocol on benchmarking generic two-qubit gates realized using flux tuning, and analyze the behavior of iLRB under corresponding leakage models. Our numerical experiments show good agreement with theoretical estimations, indicating the feasibility of both the LRB and iLRB protocols.
\\
\\
\textbf{Keywords:} quantum computing; randomized benchmarking; leakage error; quantum gates
\end{abstract}

\section{Introduction}
Quantum computation maps information processing into the manipulation of (typically microscopic) physical systems governed by quantum mechanics. Although quantum computation holds the promise to solve problems that are believed to be classically intractable, practical quantum computation suffers from various noise sources, ranging from fabrication defects and control inaccuracies to fluctuations in external physical environments. Such noise greatly hinders the practicability of quantum computation on unprotected, bare physical qubits beyond proof-of-concept demonstrations.

While any kind of error is unwanted and would possibly affect the quality of the computation processes, there is a significant difference between the harmfulness of different types of errors. The most ``benign'' error happens locally and independently on single qubits; such errors can, in principle, be compressed arbitrarily with quantum error correction under reasonable assumptions on the error rates~\cite{Harper_2020,Sun21Mitigating}. More malicious errors might introduce time correlations (e.g., \ non-Markovian errors) or space correlations (e.g., \ crosstalk) and are more challenging to mitigate. Of particular interest is the \emph{leakage error}, where a piece of quantum information escapes from a confined, finite-dimensional Hilbert space used for computation, called \emph{computational subspace},  to a \emph{leaked subspace} of a larger Hilbert space. Such escaped information might undergo arbitrary and uncontrolled processes and is harder to detect, let alone correct. More seriously, typical frameworks of quantum error correction only deal with errors happening within the computational subspace and are either unable to apply or scale poorly with the leakage error. It is thus of great importance to be able to detect, correct, or even suppress leakage errors in order to conduct large-scale quantum computation.

This paper focuses on estimating the leakage error rate associated with a given quantum processor, preferably efficiently and accurately. 
This task is part of a process usually referred to as
\emph{benchmarking}, provides an estimate of certain characteristics of a piece of the quantum device before proceeding with subsequent actions. In the context of leakage benchmarking, the information can be used as a criterion to accept or abort a newly-fabricated quantum processor or as feedback information on leakage-suppressing gate schemes.

Given the diverse nature of errors occurring in quantum computation, many different benchmarking schemes have been proposed over the years. A large class of benchmarking schemes, collectively called \emph{randomized benchmarking}(RB), extracts error information from the fitted result of multiple experiments with different lengths~\cite{knill2008randomized,carignan2015characterizing,PhysRevA.97.062323,francca2018approximate,helsen2019new,claes2021character,erhard2019characterizing,magesan2012efficient,chasseur15Complete,wallman2015robust,Heinrich22General,Merkel2021Randomized}. Compared to tomography-based methods or direct fidelity estimation~\cite{flammia_direct_2011,da_silva_practical_2011}, RB schemes are typically more gate-efficient, and the fitting results are typically insensitive to state preparation and measurement (SPAM) errors, making them ideal candidates for benchmarking gate errors.  These protocols have been successfully implemented in many quantum experiments~\cite{PhysRevA.103.042604,PhysRevLett.129.010502,sung2021realization,Morvan21Qutrit,Xue19Benchmarking}.

The first theoretical framework for RB-based leakage benchmarking was given by Wallman et al.~\cite{wallman2016robust}. Without any prior assumption on the SPAM noise, this protocol was able to provide an estimate for the sum of the leakage rate and the \emph{seepage rate}, i.e., the rate information in the leaked subspace comes back to the computational subspace. Refs.~\cite{wood2018quantification,claes2021character} later gave a detailed analysis of the protocol and illustrated this framework with several examples relevant to superconducting devices. The authors were also able to differentiate the leakage from the seepage with reasonable assumptions on the SPAM noise. Based on these protocols, several experimental characterizations of single qubit leakage noise have been proposed in superconducting quantum devices~\cite{PhysRevLett.116.020501,PhysRevA.96.022330}, quantum dots~\cite{andrews2019quantifying}, and trapped ions~\cite{haffner2008quantum}.

There are two major limitations to the existing protocols~\cite{claes2021character,chasseur15Complete,wallman2016robust,wood2018quantification}. First, all protocols require that the quantum gates act nontrivially on the leakage subspaces, in order to eliminate non-Markovian behavior originating from residual information stored in the leakage subspace. As most practical gate schemes only focus on their actions on computational subspaces rather than the leakage subspaces, leakage benchmarking schemes built upon them typically do not work in general, multi-qubit quantum systems.
Second, most existing protocols {can only estimate the sum of the leakage rate and the seepage rate without prior knowledge of SPAM noise, and the SPAM information is required if we need to get the leakage and seepage rates separately.} As there is typically only one set of state preparation and measurement within one run of benchmarking, the SPAM errors do not get amplified and cannot be measured accurately~\cite{nielsen2021gate}. 
 Such inaccuracy would further affect the accuracy of gate leakage rate estimation. A natural question arises:
\emph{How can one characterize the leakage rate of a multi-qubit system without operating the leakage subspace while maintaining robustness to SPAM noise?}

In this paper, we propose a leakage benchmarking scheme based on RB, dedicated to benchmark leakage rates on multi-qubit systems. Compared to existing protocols requiring the leakage subspace to be fully twirled, our scheme only requires having access to the Pauli group
with gate-independent, time-independent and Markovian noise. Assuming each qubit has only one-dimensional leakage space, such a gate set does not twirl the leakage subspace as a whole, but instead twirls each invariant subspace of the Pauli group individually. This allows us to formulate the LRB process as a classical Markovian process between different invariant subspaces, which can be described by a Markovian $Q$-matrix~\cite{LevinPeresWilmer2006}. The leakage and seepage rates of the system can then be estimated by leveraging the spectral property of the Markovian process, which can, in turn, be estimated similarly to RB protocols on the computational subspace.

The $Q$-matrix has a dimension exponential with respect to the number of qubits in general and thus the spectral property is hard to be measured using LRB experiments. To further simplify the problem, we study the spectrum of the $Q$-matrix in two physically-motivated scenarios: The first model, named as \emph{leakage damping noise}, assumes that leakage happens at most one qubit, and leakage does not ``hop'' from one qubit to another, which is the generalization of amplitude damping noise~\cite{nielsen2002quantum} in the computational subspace; the second model assumes that each qubit undergoes an independent leakage process. In both cases, the spectral property of the $Q$-matrix can be significantly simplified, and easier for data analysis.
We also show how to calculate the corresponding average leakage rates on the above two noise scenarios of the proposed LRB protocol.
As an illustration of the leakage damping noise model,   we found the noise model of commonly used two-qubit gates such as iSWAP, SQiSW, and CZ gates all belong to this form.

Building upon the foundation of leakage randomized benchmarking (LRB) protocols, we delve deeper into the study of leakage benchmarking for specific multi-qubit gates, which is a crucial aspect of quantum hardware development. To this end, we propose an interleaved variant of the LRB (iLRB) protocol that allows for the benchmarking of individual gates, rather than a set of gates. We show that leakage rate can be extracted in general for arbitrary target gates with access to noiseless Pauli gates, and perform more careful analysis when Pauli gates are implemented noisy. In addition, we show that the leakage rate of the target gate can still be extracted under certain physically-motivated assumptions that inherently apply to flux-tuning gates in superconducting quantum computation.
To demonstrate the applicability of the iLRB protocol, we apply it to the case of flux tunable superconducting quantum devices~\cite{Krantz_2019}, construct its noise model, and benchmark the leakage rate of the $\iswap$ gate.

{This paper targets both theorists and experimentalists, as it seeks to establish an experimental-friendly leakage benchmarking scheme. We offer a thorough theoretical analysis for multi-qubit scenarios, as well as numerical verification of the average leakage rate for the iSWAP gate. This is achieved by extracting the noise model of the iSWAP gate from its Hamiltonian evolution.

In Section \ref{sec:notation}, we introduce the fundamental concepts and notations. Section \ref{sec:leak_rateGen} presents our LRB protocol and analyzes the calculation of the average leakage rate using this method. In Section \ref{sec:lrb_specific}, we provide a detailed examination of the average leakage rate under two leakage models: single-site leakage and no cross-talk. Section \ref{sec:iLRB_protocol} proposes the iLRB protocol for any target gate that commutes with the noise channel, focusing on a special leakage damping noise. In Section \ref{sec:numerical_res}, we numerically validate the LRB and iLRB protocols. Additionally, we introduce the leakage damping noise model for $\iswap/\SQiSW/\mathrm{CZ}$ gates in flux-tunable superconducting quantum devices, based on their Hamiltonian evolution. We also test the iLRB protocols numerically using the noise model of the $\iswap$ gate. Finally, Section \ref{sec:discuss} concludes the paper with a discussion of our work and suggestions for future research directions.
}
\section{Notations}
\label{sec:notation}
In order to characterize leakage, we assume that the quantum states lie in a Hilbert space $\Hcal$ with finite dimension $d$ that decomposes into a \emph{computational} and a \emph{leakage} subspace, denoted as $\Hcal_c$ and $\Hcal_l$  respectively. Let $d_c:=\dim\pbra{\Hcal_c}$ and $d_l:=\dim\pbra{\Hcal_l}=d-d_c$ be the dimensions of $\Hcal_c$ and $\Hcal_l$. Unless explicitly specified, we assume throughout the paper that a single qubit (site) lies in a three-dimensional Hilbert space with basis $\{|0\rangle,|1\rangle,|2\rangle\}$, where the computational subspace is spanned by $\{|0\rangle,|1\rangle\}$ and the leakage subspace by $\{|2\rangle\}$. In other words, higher-level excitations of a qubit can be ignored. We call such a system \emph{a single qubit with leakage}.

A composite system of $n$ qubits with leakage lies in a Hilbert space $\Hcal=\bigotimes_{k=1}^n (\Hcal_{c_k}\oplus \Hcal_{l_k})$, where $\Hcal_{c_k}$ ($\Hcal_{l_k}$) represents the computational (leakage) subspace of the qubit $k$. We define the computational subspace of $\Hcal$ be where no qubits leaks, that is, $\Hcal_c= \bigotimes_{k=1}^n\Hcal_{c_k}$.
Hence $d = 3^n$, and $d_c = 2^n$. The projector on the computational subspace ${\Pi}_c=\otimes_{k=1}^n{\Pi}_{c_k}$ is a tensor product where ${\Pi}_{c_k}$ is the projector onto the computational subspace on the $k$-th qubit.
Note that the projector onto the leakage subspace on the $k$-th qubit is ${\Pi}_{l_k}=|2\rangle\langle 2|$ and the projector onto the leakage subspace ${\Pi}_{l}:=\Ibb-{\Pi}_c\ne\otimes_{k=1}^n{\Pi}_{l_k}$, where $\Ibb$ is the identity operator on $\Hcal$. For each $\bm i=(i_1,i_2,\cdots,i_n)\in\cbra{c,l}^n$, we define $\Hcal_{\bm i}:=\bigotimes_{k=1}^n \Hcal_{i_k}$ to be the subspace where qubit $k$ is leaked if and only if $i_k=l$. The corresponding projector onto $\Hcal_{\bm i}$ is ${\Pi}_{\bm i}:=\otimes_{k=1}^n{\Pi}_{(i_k)_k}$. Note that $\Hcal=\bigoplus_{\bm i\in\cbra{c,l}^n}\Hcal_{\bm i}, \Hcal_c=\Hcal_{c^n}$, and $\Hcal_l = \bigoplus_{\bm i\neq c^n}\Hcal_{\bm i}$. For each Hilbert space $\Hcal_{\bm i}$, denote $\widetilde{{\Pi}}_{\bm i}:={\Pi}_{\bm i}/\dim(\Hcal_{\bm i})$ the trace-normalized projector associated to the projector  ${\Pi}_{\bm i}$.

We assume the noise of interest to be Markovian and time-independent throughout this paper. 
Given an ideal unitary $U\in \Umat(d_c)$, we denote $\Ucal(\cdot) :=({\Pi}_l\oplus U)\cdot ({\Pi}_l\oplus U^\dagger)$ as the corresponding ideal unitary channel acting on the whole space. Given a completely-positive trace-preserving (CPTP) channel $\hat{\Ucal}$ characterizing the noisy implementation of $\Ucal$, we further denote $\Lambda:= \Ucal^\dagger \circ\hat{\Ucal}$ as the noise information of $\Ucal$ accounting leakage. Note that $\hat{\Ucal}=\Ucal\circ\Lambda$ as $\Ucal$ is a unitary channel. The average leakage and seepage rates of a channel $\Lambda$ are defined as~\cite{wood2018quantification} 
\begin{align}
L_{\ave}\left(\Lambda\right)=\tr{{\Pi}_l\Lambda(\widetilde{{\Pi}}_c)};\\
S_{\ave}\left(\Lambda\right)=\tr{{\Pi}_c \Lambda(\widetilde{{\Pi}}_l)}.
\label{eq:leakage_seepage_rate}
\end{align}
We often write $L_{\ave}$ and $S_{\ave}$ when the noise channel $\Lambda$ being referred to is unambiguous. Unless explicitly specified, we use the term ``leakage noise'' to 
represent both leakage and seepage errors.

The Pauli group with phase  $\Pbb< \Umat(2)$ is defined as $\pm\cbra{1,\text{i}}\times \cbra{I, X, Y, Z}$, where $X,Y,Z$ are $2\times2$ Pauli-X/Y/Z matrices respectively. Let $\Pbb_n :=\Pbb^{\times n}< \Umat(2)^{\times n}$. For an element $P=\bigotimes_i P_i\in \Pbb_n$, its corresponding ideal unitary channel in the full space is defined as $\Pcal:=\bigotimes_i\Pcal_i$. For sake of simplicity, we identify the element $P$ with its corresponding ideal channel $\Pcal$, and use $\hat{\Pcal}$ as a shorthand for the corresponding noisy implementation $\Pcal\circ\Lambda$.

Inspired by the Pauli-transfer matrix (PTM) representation~\cite{chow2012universal}, here we define the \emph{condensed-operator representation} $\supket{\cdot}$  of linear operators as the Liouville representation~\cite{wood2011tensor} with respect to the orthonormal operator basis $\Ical=\{{\Pi}_{\bm i}/\sqrt{\dim(\Hcal_{\bm i})}\}_{i\in \{c,l\}^n}$. The basis is not complete in the sense that it does not span $\Lcal(\Hcal)$;  for a linear operator $\rho$ not lying in the span of $\Ical$, $\supket{\rho}$ is understood as the projection of $\rho$ onto the span of $\Ical$ followed by the vectorization, that is,
$$\supket{\rho}:=\supket{\bar{\Pcal}(\rho)},$$where $\bar{\Pcal}:\Lcal(\Hcal)\rightarrow \mathrm{span}(\Ical);
\bar{\Pcal}(\rho):=\sum_{\bm i}\tr{{\Pi}_{\bm i}\rho}\widetilde{{\Pi}}_{\bm i}$ is the \emph{twirling projector} from $\Lcal(\Hcal)$ to $\mathrm{span}(\Ical)$.

For sake of clarity, in the following, we represent the condensed operator representations under the basis $\{\supket{\widetilde{{\Pi}}_{\bm i}}\}_{\bm i}$, and the adjoints under the basis $\{\supbra{{\Pi}_{\bm i}}\}_{\bm i}$. Note that $\supbraket{{\Pi}_{\bm i}|\widetilde{{\Pi}}_{\bm j}} =\delta_{\bm{i}\bm{j}}$. Under such basis choice, for a generic linear operator $A\in\Lcal(\Hcal)$, we have
$$\supket{A}=\sum_{\bm i}\tr{{\Pi}_{\bm i}A}\supket{\widetilde{{\Pi}}_{\bm i}} \text{ and } \supbra{A}=\sum_{\bm i}\tr{\widetilde{{\Pi}}_{\bm i}A^\dag}\supbra{{\Pi}_{\bm i}}.$$
For a superoperator $\Lambda$, the corresponding condensed operator representation is then
\begin{equation}
    Q_\Lambda :=\sum_{\bm i, \bm j} \tr{{\Pi}_{\bm i}\Lambda(\widetilde{{\Pi}}_{\bm j})}\supket{\widetilde{{\Pi}}_{\bm i}}\supbra{{\Pi}_{\bm j}}.
\label{eq:channel_COR}
\end{equation}

Since $\Ical$ does not form a complete basis, compositions of condensed operator representations do not directly translate to compositions of the corresponding linear operators; rather they translate to compositions of the twirled versions of the corresponding linear operators through the twirling projector $\bar{\Pcal}$. More specifically, we have
\begin{align}
    Q_{\Lambda_1}Q_{\Lambda_2}&=Q_{\Lambda_1\circ\bar{\Pcal}\circ\Lambda_2},\label{eqn:channel_comp}\\
    Q_{\Lambda}\supket{\rho}&=\supket{(\Lambda\circ \bar{\Pcal})(\rho)},\label{eqn:channelket_comp}\\
    \supbra{M}Q_\Lambda\supket{\rho}&=\tr{M\cdot (\bar{\Pcal}\circ \Lambda\circ \bar{\Pcal})(\rho)}.\label{eqn:braket_comp}
\end{align}
We denote $[n]:=\{1,\ldots, n\}$ throughout the paper.
\section{Leakage randomized benchmarking protocol}
\label{sec:leak_rateGen}

Here we present a leakage randomized benchmarking protocol that does not require actions on the leakage subspace or assumptions about SPAM errors. Our protocol is based on the assumption that the noise, represented by the operator $\Lambda$, is Markovian, time-independent, and gate independent. We further assume we have access to a noisy measurement operator $\hat{{\Pi}}_c$ close to the projector to the computational subspace ${\Pi}_c$.

\begin{itemize}
    \item [(1)] Given a sequence length $m$, sample a sequence of $m$ Paulis $\Pcal_1,\ldots, \Pcal_m$ from $\Pbb_n$ uniformly i.i.d., and perform them sequentially to a fixed (noisy) initial state $\hat{\rho}_0$, obtaining $\hat\Pcal_{m} \circ\cdots \circ\hat\Pcal_{1}(\hat{\rho}_0)$. Measure the output state under $\hat{{\Pi}}_c$ and estimate
     the probability $p_{{\Pi}_c} (\Pcal_1,\ldots, \Pcal_m)= \tr{\hat{{\Pi}}_c\hat\Pcal_{m}\circ \cdots \circ \hat\Pcal_{1}(\hat{\rho}_0) }$ through repeated experiments. 
    \item[(2)] {Repeat Step (1) multiple times to estimate $p_{{\Pi}_c}(m)$, the expectation of $p_{{\Pi}_c} (\hat{\Pcal}_1,\ldots, \hat{\Pcal}_m)$ under random choices of $\Pcal_1,\ldots,\Pcal_m$ from $\Pbb_n$.}
    \item[(3)] Repeat Step (2) for different $m$, and fit $\{(m, p_{{\Pi}_c } (m))\}$ to a multi exponential decay curve ${p}_{{\Pi}_c}(m)=\sum_i A_i\cdot \lambda_i^m$.
\end{itemize}
The average leakage rate $L_{\ave}$ and seepage rate $S_{\ave}$ are estimated with the fitted exponents $\lambda_i$. The number of exponents for $p_{{\Pi}_c}(m)$ depends on the specific noise model of $\Lambda$. In the following, we will show the explicit representation of $\Ebb\sbra{p_{{\Pi}_c}(m)}$ and $\lambda_i$. 

The Pauli group $ \Pbb_n$ can twirl any quantum state in computational subspace to the maximum mixed state~\cite{dankert2009exact,ambainis2000private}, i.e., $\frac{1}{\abs{\Pbb_n}}\sum_{\Pcal_c\in\Pbb_n} \Pcal_c(\rho_c)=\tr{\rho_c}\widetilde{{\Pi}}_c$, where $\rho_c\in\Lcal(\Hcal_c)$ is a quantum state in computational subspace.
Here we expand the twirling of a Pauli group from computational subspace to the entire Hilbert space, as shown in Lemma \ref{lem:PerfectPauli}.
\begin{lemma}
{Let $\bar{\Pcal}$ be the twirling projector such that $\bar\Pcal(\rho) = \sum_{\bm i} \tr{\Pi_{\bm i} \rho} \widetilde{\Pi}_{\bm i}$ for any quantums state $\rho$. Then it can be equivalently represented as the expectation of all the Pauli channels,}
\begin{equation}
    \bar{\Pcal}  = \frac{1}{\abs{\Pbb_n}}\sum_{\Pcal \in \Pbb_n} \Pcal.
\end{equation}
\label{lem:PerfectPauli}
\end{lemma}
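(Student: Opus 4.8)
The plan is to show that the super-operator $\frac{1}{|\Pbb_n|}\sum_{\Pcal\in\Pbb_n}\Pcal$ acts exactly as $\bar\Pcal$ on every operator, and since both sides are linear in their argument it suffices to check agreement on a spanning set of $\Lcal(\Hcal)$. A convenient spanning set is $\{\Pi_{\bm i}\,A\,\Pi_{\bm j}\}$ where $A$ ranges over a product basis of operators and $\bm i,\bm j\in\{c,l\}^n$; because the Pauli channel $\Pcal=\bigotimes_k\Pcal_k$ is a tensor product and each $\Pi_{(i_k)_k}$ is diagonal in the $\{|0\rangle,|1\rangle,|2\rangle\}$ basis, the whole computation factorizes over the $n$ sites. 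So the first step I would take is to reduce to the single-qubit-with-leakage case: prove that for one site, $\frac{1}{4}\sum_{P\in\{I,X,Y,Z\}}\Pcal(\cdot)$ sends a single-site operator $\sigma$ to $\tr{\Pi_c\sigma}\widetilde\Pi_c + \tr{\Pi_l\sigma}\widetilde\Pi_l$, i.e.\ it is the single-site twirling projector onto $\spans\{\Pi_c/\sqrt2,\ \Pi_l\}$. (The phase factors $\pm1,\pm\mathrm i$ in $\Pbb$ are irrelevant since a global phase cancels in $\Pcal(\cdot)=P\cdot P^\dagger$, so averaging over $\Pbb$ is the same as averaging over the four cosets $\{I,X,Y,Z\}$.)

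For the single-site claim, note the Paulis $X,Y,Z$ are defined on the computational qubit $\{|0\rangle,|1\rangle\}$ and act as identity on $|2\rangle$ (this is the $\Pi_l\oplus U$ convention from Section~\ref{sec:notation}). Write $\sigma$ in block form with respect to $\Hcal_c\oplus\Hcal_l$: a $2\times2$ block $\sigma_{cc}$, a $1\times1$ block $\sigma_{ll}$, and off-diagonal blocks $\sigma_{cl},\sigma_{lc}$. Averaging $P\sigma P^\dagger$ over $P\in\{I,X,Y,Z\}$: on the $\sigma_{cc}$ block this is exactly the standard single-qubit Pauli twirl, which gives $\tfrac12\tr{\sigma_{cc}}\,I_c=\tr{\Pi_c\sigma}\widetilde\Pi_c$; the $\sigma_{ll}$ block is fixed by all four (they act as $1$ there), giving $\sigma_{ll}=\tr{\Pi_l\sigma}\widetilde\Pi_l$; and the off-diagonal blocks average to zero because for each such block the four sign patterns $\{+,+,-,-\}$ or $\{+,-,+,-\}$ cancel (e.g.\ $Z$ fixes $|0\rangle$, flips sign on the $|1\rangle$ row, while $X$ and $Y$ permute rows so that the four terms sum to $0$). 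This establishes the single-site identity.

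Then I would assemble the $n$-site statement by tensoring: $\frac{1}{|\Pbb_n|}\sum_{\Pcal\in\Pbb_n}\Pcal = \bigotimes_{k=1}^n\Big(\frac14\sum_{P\in\{I,X,Y,Z\}}\Pcal^{(k)}\Big)$, which by the single-site result equals $\bigotimes_k(\text{twirl onto }\spans\{\widetilde\Pi_{c_k},\widetilde\Pi_{l_k}\})$; expanding the tensor product and using $\Pi_{\bm i}=\otimes_k\Pi_{(i_k)_k}$ together with $\tr{\Pi_{\bm i}\rho}=\prod_k\tr{\Pi_{(i_k)_k}\rho}$ on product operators recovers exactly $\sum_{\bm i}\tr{\Pi_{\bm i}\rho}\widetilde\Pi_{\bm i}=\bar\Pcal(\rho)$, and linearity extends this from product operators to all of $\Lcal(\Hcal)$. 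I do not expect a genuine obstacle here; the only point requiring a little care is bookkeeping the off-diagonal cancellation cleanly and making sure the tensor-product-of-twirls step correctly produces the cross terms $\widetilde\Pi_{\bm i}$ with mixed $c$/$l$ entries rather than only the all-$c$ and all-$l$ projectors — that is where the factorized structure of both $\Pbb_n$ and the basis $\Ical$ is doing the real work.
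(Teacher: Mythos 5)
Your overall route (reduce to a single site with leakage, then tensorize over the $n$ sites) is the same as the paper's, but your single-site step contains a genuine error. The phases $\pm 1,\pm\mathrm{i}$ in $\Pbb$ are \emph{not} irrelevant in this setting: the channel attached to a phased Pauli $U=e^{\mathrm{i}\theta}P$ is conjugation by $\Pi_l\oplus U$, so the phase multiplies only the computational block and is a \emph{relative} phase between $\Hcal_c$ and $\Hcal_l$. It cancels on the $cc$ and $ll$ blocks but survives on the coherence blocks: the $(c,l)$ block of the output is $e^{\mathrm{i}\theta}P\,\sigma_{cl}$ (left multiplication only, since the one-dimensional leakage sector is fixed and there is no $P^\dagger$ acting from the right there). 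Consequently averaging over $\Pbb$ is not the same as averaging over the representatives $\{I,X,Y,Z\}$, and your claimed cancellation of the off-diagonal blocks under the unphased average is false: $\tfrac14\sum_{P\in\{I,X,Y,Z\}}P\,\sigma_{cl}=\tfrac14(I+X+Y+Z)\,\sigma_{cl}$, and $I+X+Y+Z\neq 0$ (its $\bra{0}\cdot\ket{0}$ entry equals $2$), so this block does not vanish for generic $\sigma_{cl}$; the ``sign pattern'' argument would only apply under two-sided conjugation. Indeed, the identity $\frac{1}{|\Pbb_n|}\sum_{\Pcal\in\Pbb_n}\Pcal=\bar\Pcal$ becomes false if $\Pbb$ is replaced by the phase-free set $\{I,X,Y,Z\}$, so no bookkeeping can rescue the step as written.

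The repair is precisely where the phases do the work, and it is how the paper argues: for each fixed $P$, the sum over its four phases already annihilates the cross blocks, because $\sum_{U\in\pm\{1,\mathrm{i}\}\times\{P\}}U=(1-1+\mathrm{i}-\mathrm{i})P=0$, hence $\frac{1}{|\Pbb|}\sum_{U}U\sigma_{cl}=0$ and similarly for $\sigma_{lc}$. The $cc$ block is handled by the standard single-qubit Pauli twirl (phases cancel there), and the $ll$ block is fixed since $\Hcal_l$ is one-dimensional; with the single-site identity corrected in this way, your tensorization step goes through verbatim and reproduces the mixed projectors $\widetilde{\Pi}_{\bm i}$ exactly as you describe.
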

Lemma \ref{lem:PerfectPauli} can be obtained from the twirling properties of Pauli group $\Pbb_n$ in the computational subspace. We postpone the proof of Lemma \ref{lem:PerfectPauli} into Appendix \ref{app:proof_lem_PerfectPauli}. With Lemma \ref{lem:PerfectPauli},
we can construct the connections of $L_{\ave}, S_{\ave}$ and the multi-exponential decay curve $p_{{\Pi}_c}(m)$, as shown in the following theorem.
\begin{theorem}
Given Pauli group $\Pbb_n$ with gate-independent leakage error channel $\Lambda$, the average output probability in LRB protocol ${p_{{\Pi}_c}(m)}=\supbra{\hat{{\Pi}}_c} Q^{m-1}\supket{\tilde{\rho}_0}$,
where $Q:=Q_\Lambda$ is the condensed-operator representation of $\Lambda$ and $\tilde\rho_0$ is some noisy state determined by the input state $\hat{\rho}_0$.
The average leakage rate equals $L_\ave = 1 - Q_{c^n,c^n}$ and the average seepage rate equals $S_\ave  = \frac{1}{3^n - 2^n}\sum_{\bm i\ne c^n}\dim\pbra{{\Pi}_{\bm i}} Q_{c^n,\bm i}$.
\label{thm:lrb_global}
\end{theorem}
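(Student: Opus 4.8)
\emph{Proof plan.} I will split the statement into three essentially independent pieces: (i) the exponential-decay identity $p_{{\Pi}_c}(m)=\supbra{\hat{{\Pi}}_c}Q^{m-1}\supket{\tilde\rho_0}$; (ii) $L_\ave=1-Q_{c^n,c^n}$; (iii) the seepage formula. Pieces (ii) and (iii) are short computations once $p_{{\Pi}_c}(m)$ is understood and the entries of $Q$ are identified, so the bulk of the work is (i).

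For (i): substitute $\hat\Pcal_j=\Pcal_j\circ\Lambda$ into $p_{{\Pi}_c}(\Pcal_1,\dots,\Pcal_m)=\tr{\hat{{\Pi}}_c\,\hat\Pcal_m\circ\cdots\circ\hat\Pcal_1(\hat\rho_0)}$, which rewrites the applied channel as $\Pcal_m\circ\Lambda\circ\Pcal_{m-1}\circ\Lambda\circ\cdots\circ\Pcal_1\circ\Lambda$, where the \emph{same} $\Lambda$ appears in every slot by the gate-independence hypothesis. The key structural observation---simpler here than in standard RB because there is no inversion gate---is that each sampled Pauli channel $\Pcal_j$ occurs exactly once and in a fixed slot, so the average over i.i.d.\ uniform $\Pcal_1,\dots,\Pcal_m\in\Pbb_n$ factors slot by slot, and each factor $\tfrac{1}{\abs{\Pbb_n}}\sum_{\Pcal\in\Pbb_n}\Pcal$ equals the twirling projector $\bar{\Pcal}$ by Lemma~\ref{lem:PerfectPauli}. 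Pulling the expectation through the (linear) trace then yields $p_{{\Pi}_c}(m)=\tr{\hat{{\Pi}}_c\,(\bar{\Pcal}\circ\Lambda)^m(\hat\rho_0)}$. To pass to the $Q$-picture I would set $\tilde\rho_0:=\Lambda(\hat\rho_0)$, a genuine density operator since $\Lambda$ is CPTP; then $(\bar{\Pcal}\circ\Lambda)^m(\hat\rho_0)=(\bar{\Pcal}\circ\Lambda)^{m-1}\bar{\Pcal}(\tilde\rho_0)$. Iterating~(\ref{eqn:channelket_comp}) gives $Q^{m-1}\supket{\tilde\rho_0}=\supket{(\Lambda\circ\bar{\Pcal})^{m-1}(\tilde\rho_0)}$, and pairing with $\supbra{\hat{{\Pi}}_c}$ (using $\supbraket{\hat{{\Pi}}_c|\rho}=\tr{\hat{{\Pi}}_c\,\bar{\Pcal}(\rho)}$, immediate from the definitions of $\supbra{\cdot}$ and $\supket{\cdot}$ together with Hermiticity of $\hat{{\Pi}}_c$) produces exactly $\tr{\hat{{\Pi}}_c\,(\bar{\Pcal}\circ\Lambda)^{m-1}\bar{\Pcal}(\tilde\rho_0)}$, i.e.\ the claimed formula. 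The projection onto $\spans(\Ical)$ implicit in $\supbra{\hat{{\Pi}}_c}$ is harmless, since it is paired against an operator already in $\spans(\Ical)$.

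For (ii) and (iii): from~(\ref{eq:channel_COR}) and $\supbraket{{\Pi}_{\bm i}|\widetilde{{\Pi}}_{\bm j}}=\delta_{\bm i\bm j}$ one reads off $Q_{\bm i,\bm j}=\tr{{\Pi}_{\bm i}\Lambda(\widetilde{{\Pi}}_{\bm j})}$. For leakage, $L_\ave=\tr{{\Pi}_l\Lambda(\widetilde{{\Pi}}_c)}=\tr{(\Ibb-{\Pi}_c)\Lambda(\widetilde{{\Pi}}_c)}=\tr{\widetilde{{\Pi}}_c}-\tr{{\Pi}_{c^n}\Lambda(\widetilde{{\Pi}}_{c^n})}=1-Q_{c^n,c^n}$, using that $\Lambda$ is trace-preserving, ${\Pi}_{c^n}={\Pi}_c$, and $\tr{\widetilde{{\Pi}}_c}=1$. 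For seepage, decompose ${\Pi}_l=\sum_{\bm i\ne c^n}{\Pi}_{\bm i}=\sum_{\bm i\ne c^n}\dim({\Pi}_{\bm i})\,\widetilde{{\Pi}}_{\bm i}$, so $\widetilde{{\Pi}}_l={\Pi}_l/(3^n-2^n)=\tfrac{1}{3^n-2^n}\sum_{\bm i\ne c^n}\dim({\Pi}_{\bm i})\,\widetilde{{\Pi}}_{\bm i}$; plugging into $S_\ave=\tr{{\Pi}_c\Lambda(\widetilde{{\Pi}}_l)}$ and using linearity of $\Lambda$ and of the trace gives $S_\ave=\tfrac{1}{3^n-2^n}\sum_{\bm i\ne c^n}\dim({\Pi}_{\bm i})\,Q_{c^n,\bm i}$.

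The only place that needs real care---and hence the main obstacle---is the bookkeeping in (i): because $\Ical$ is not a complete operator basis, composing condensed-operator representations silently inserts the twirl $\bar{\Pcal}$ (equations~(\ref{eqn:channel_comp})--(\ref{eqn:braket_comp})), and one must track the resulting off-by-one so that the exponent is $m-1$ rather than $m$. This is precisely what forces the choice $\tilde\rho_0=\Lambda(\hat\rho_0)$, and explains why the statement can only assert that $\tilde\rho_0$ is ``some noisy state determined by $\hat\rho_0$'' rather than $\tilde\rho_0=\hat\rho_0$. Everything else is routine.
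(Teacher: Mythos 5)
Your proposal is correct and follows essentially the same route as the paper: average each slot via Lemma~\ref{lem:PerfectPauli} to get $(\bar{\Pcal}\circ\Lambda)^m$, absorb one $\Lambda$ into $\tilde\rho_0=\Lambda(\hat\rho_0)$ to account for the $m-1$ exponent, translate to the condensed-operator picture via Eqs.~\eqref{eqn:channel_comp}--\eqref{eqn:braket_comp}, and read off $L_\ave$ and $S_\ave$ from the entries $Q_{\bm i,\bm j}=\tr{{\Pi}_{\bm i}\Lambda(\widetilde{{\Pi}}_{\bm j})}$ using trace preservation and the decomposition of ${\Pi}_l$. The only cosmetic difference is that you obtain $L_\ave=1-Q_{c^n,c^n}$ directly from ${\Pi}_l=\Ibb-{\Pi}_c$ and trace preservation, whereas the paper phrases the same fact through the column-stochasticity of $Q$.
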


\begin{proof}
Let $\Pcal_{1},\ldots, \Pcal_{m}$ be the ideal gate elements sampled from $\Pbb_n$. Then the expectation of the probability for measuring computational basis equals
\begin{align}
   p_{{\Pi}_c}(m) &=\frac{1}{\abs{\Pbb_n}^m}\sum_{\Pcal_{1},\ldots,\Pcal_{m}\in \Pbb_n} \tr{\hat {\Pi}_c\hat{\Pcal}_m\circ\cdots\circ \hat{\Pcal}_1 \pbra{\hat{\rho}_0}} \\
   &= \tr{\hat {\Pi}_c\pbra{\frac{1}{|\Pbb_n|}\sum_{\Pcal\in \Pbb_n}\Pcal\circ \Lambda}^{m}\pbra{\hat{\rho}_0}} \\
   &= \tr{\hat {\Pi}_c\pbra{\bar\Pcal\circ \Lambda}^{m}\pbra{\hat{\rho}_0}} \label{eq:pauli_to_twirl}\\
   &= \tr{\hat {\Pi}_c\pbra{\bar\Pcal\circ \Lambda}^{m-1}\bar{\Pcal}\circ\Lambda(\hat{\rho}_0})\\
   &= \tr{\hat {\Pi}_c \bar{\Pcal} \circ(\Lambda\circ \pbra{\bar\Pcal\circ \Lambda}^{m-2})\circ \bar{\Pcal}\circ{\Lambda(\hat{\rho}_0})} \\
   &= \supbra{\hat {\Pi}_c} Q_{\Lambda\circ (\bar{\Pcal} \circ\Lambda)^{m-2}}\supket{\tilde{\rho}_0} \label{eq:braket}\\
   &= \supbra{\hat {\Pi}_c} Q_{\Lambda}^{m-1}\supket{\tilde{\rho}_0} .
\label{eq:gen_pauli_sec}
\end{align}
where $\tilde{\rho}_0 := \Lambda \pbra{\hat{\rho}_0}, \hat{\rho}_0$ is the input state with state preparation noise.
Eq. \eqref{eq:pauli_to_twirl} holds by Lemma \ref{lem:PerfectPauli};  Eqs. \eqref{eq:braket} and \eqref{eq:gen_pauli_sec} follows from Eqs. \eqref{eqn:braket_comp} and \eqref{eqn:channel_comp} respectively.

By the definition of $Q$, we have $Q_{\bm i,\bm j}=\tr{{\Pi}_{\bm i}\Lambda(\widetilde{{\Pi}}_{\bm j})}.$ Moreover, for every ${\bm j}$ it holds that $$\sum_{\bm i}Q_{\bm i, \bm j}=\tr{\Lambda(\widetilde{{\Pi}}_{\bm j})}=\tr{\widetilde{{\Pi}}_{\bm j}}=1$$ since $\Lambda$ preserves the trace. This indicates that $Q$ is a Markov chain transition matrix. By the definitions of $L_\ave $ and $S_\ave $ in Eq. \eqref{eq:leakage_seepage_rate}, we have
\begin{align}
L_\ave = \tr{{\Pi}_{l}\Lambda\pbra{\widetilde{{\Pi}}_{c}}}
=\supbra{{\Pi}_{l}} Q\supket{\widetilde{{\Pi}}_{c}}=\sum_{\bm i\neq c^n} \supbra{{\Pi}_{\bm i}} Q\supket{\widetilde{{\Pi}}_{c}}= 1 - Q_{c^n,c^n}
\end{align}
and
\begin{align}
    S_\ave &= \tr{{\Pi}_{c}\Lambda\pbra{\widetilde{{\Pi}}_{l}}}\\
    &= \sum_{\bm i\ne c^n} \tr{{\Pi}_{c}\Lambda\pbra{\frac{\dim(\Hcal_{\bm i})}{d_l}\widetilde{{\Pi}}_{\bm i}}}\\
    &=\frac{1}{3^n - 2^n} \sum_{\bm i\ne c^n}\dim\pbra{\Hcal_{\bm i}} Q_{c^n,\bm i}.
\end{align}
\end{proof}

Theorem \ref{thm:lrb_global} demonstrates that Pauli-twirled quantum channels with leakage can be represented as Markov chains operating on distinct leakage subspaces, including the computational subspace itself. The leakage properties can be inferred from the spectral characteristics of the transition matrix, akin to analyses of RB protocols in the computational space \cite{helsen2022general}. However, this framework does not directly provide an easily applicable LRB scheme, as the transition matrix $Q$ typically has a dimension of $2^n$, resulting in complex matrix exponential decay behavior as the number of qubits increases.

Nonetheless, estimating the leakage rate can be significantly simplified in scenarios where the number of qubits is small enough to allow manageable matrix exponential decay or when additional assumptions can be made about the leakage behavior. In the subsequent sections, we propose several physically relevant leakage noise models with straightforward theoretical exponential decay curves suitable for experimental implementation.

\section{Average leakage rate for specific noise}
\label{sec:lrb_specific}

In this section, we present two specific leakage noise models - single-site leakage damping noise and cross-talk-free leakage noise. We also provide the respective average leakage rates for each model.

In the following, we investigate the average leakage rate for specific leakage noise where leakage only happens on a single site (qubit). For any $1\leq i\leq n$, we define $$B_i=\{{\bm a}\in\{0,1,2\}^n|a_i=2; a_j\in\{0,1\} ,\forall j\neq i\}$$
such that $\{|k\rangle\}_{k\in B_i}$ forms a basis of the specific leakage subspace $\Hcal_{c^{i-1}lc^{n-i}}$ where only the qubit $i$ is leaking. Let $\Hcal_{l,(1)}:=\bigoplus_{i} \Hcal_{c^{i-1}lc^{n-i}}$ be the leakage subspace that exactly one qubit is leaking, with the corresponding basis set $B:=\bigcup_i B_i$.
We propose a {\emph{single-site leakage damping noise model}} as a generalization to the amplitude damping noise~\cite{nielsen2002quantum}:
\begin{definition}
{Let set $W:=(\cbra{0,1}^n, B)\cup (B, \cbra{0,1}^n)\cup \cbra{(B_i,B_i)}_{i=1}^n\cup(\cbra{0,1}^n,\cbra{0,1}^n)$.
Define the Kraus operators
\begin{align}
\begin{aligned}
    &E_{kk'} := \sqrt{p_{kk'}}\ket{k'}\bra{k}, \forall (k,k')\in W,\\
    &E_0 = \sqrt{{\Ibb} - \sum_{(k,k')\in W}E_{kk'}^{\dagger}E_{kk'}}
\end{aligned}
\label{eq:specific_noise_2}
\end{align}
where probabilities $p_{kk'},\sum_{k'}p_{kk'}, \sum_{k'}p_{k'k}\in[0,1]$ for any $(k,k')\in W$ with well-defined probabilities $p_{kk'}$ and $p_{k'k}$.
The single-site leakage damping noise model is defined as a CPTP map $\Lambda$ such that
\begin{align}
\Lambda(\rho) = E_0\rho E_0^\dagger + 
\sum_{(k,k')\in W} 
 E_{kk'} \rho E_{kk'}^\dagger
\label{eq:specific_noise}
\end{align}
for any input state $\rho$.
Denote the average leak and seep probabilities associated with the $i$-th site as
\begin{align}
   p_i := \frac{1}{2^n}\sum_{k\in\{0,1\}^n, k'\in B_i} p_{kk'}, \quad
   q_{i} := \frac{1}{2^n}\sum_{k\in\{0,1\}^n, k'\in B_i} p_{k'k}
   \label{eq:p_iq_i}
\end{align}
\label{def:Single-site-leakage-noise}
respectively.}
\end{definition}
In the above definition, the parameters $p_{kk'}$ can be understood as the probability of the state $\ket{k}$ flipped to $\ket{k'}$ after the leakage damping noise, and ${\Pi}_{\Hcal\backslash \Hcal_c \cup \Hcal_{l,(1)}}$ in $E_{0}$ denotes that the noise model has no effect on the Hilbert space with leakage happens on more than one site. It is easy to check that $\sum_{i}E_i^{\dagger}E_i= {\Ibb}$, hence $\Lambda$ is a CPTP map~\cite{nielsen2002quantum} in Hilbert space $\Hcal$. 
Additionally, we introduce Eq. \eqref{eq:p_iq_i} to simplify the representation, and we will find that the average leakage and seepage rates are only related to $p_i$ and $q_i$ for all of $i\in [n]$.
 The prefactor $1/2^n$ is added to fit the definition of ``average'' leakage and seepage rates in Eq. \eqref{eq:leakage_seepage_rate}.

\subsection{Single-site leakage noise}
\label{sec:single-site-leakage-noise}

For the particular noise model described in Definition \ref{def:Single-site-leakage-noise}, we can simplify the average leakage rate from Theorem \ref{thm:lrb_global} as stated in the following theorem.

\begin{theorem}
Let $\Lambda$ be a single-site leakage damping channel as described in Definition \ref{def:Single-site-leakage-noise}. Let $p_i$ and $q_{i}$ be as defined in Eq. \eqref{eq:p_iq_i}, and assume that $p_i> 0$ for all $i$ and $q_1\geq\cdots \geq q_n$. Then after performing $n$-site LRB protocol, the expectation of the probability for measuring computational basis $p_{{\Pi}_{c}}(m)=\sum_{i=0}^n A_i\lambda_i^{m}$, where $A_i$ are real numbers, $\lambda_0\leq \lambda_1\leq\ldots\leq \lambda_n =1$, and $1-2q_{i}\leq \lambda_i\leq 1-2q_{i+1}$ for $1\leq i\leq n-1$, $1-2q_1-\sum_i p_i\leq \lambda_0\leq \min\pbra{1-2q_1,1- 2q_{n}-\sum_i p_i}$. The average leakage and seepage rates of $\Lambda$ are $L_{\ave} = \sum_i p_i$ and $S_{\ave} = \frac{2^n }{3^n - 2^n}\sum_i q_i$ respectively.
\label{thm:lrb_specific}
\end{theorem}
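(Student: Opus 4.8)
The plan is to reduce the whole statement to a spectral analysis of the condensed-operator representation $Q=Q_\Lambda$ furnished by Theorem \ref{thm:lrb_global}. That theorem already gives $p_{{\Pi}_c}(m)=\supbra{\hat{\Pi}_c}Q^{m-1}\supket{\tilde\rho_0}$ together with $L_\ave=1-Q_{c^n,c^n}$ and $S_\ave=\frac{1}{3^n-2^n}\sum_{\bm i\ne c^n}\dim(\Hcal_{\bm i})Q_{c^n,\bm i}$, so it suffices to (i) compute $Q$ for the single-site damping channel and (ii) locate its eigenvalues. The first thing I would record is that $E_0$ is \emph{diagonal} in the computational product basis: $\sum_{(k,k')\in W}E_{kk'}^\dagger E_{kk'}=\sum_{(k,k')\in W}p_{kk'}\ket k\bra k$ is diagonal, hence so is $E_0$. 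Two consequences follow. For $\bm j$ with at least two leaked sites, every $E_{kk'}$ annihilates $\widetilde{\Pi}_{\bm j}$ (since $\ket k$ lives in $\Hcal_c\oplus\Hcal_{l,(1)}$) and $E_0$ acts as the identity there, so $\Lambda(\widetilde{\Pi}_{\bm j})=\widetilde{\Pi}_{\bm j}$; and for $\bm j\in\{c^n\}\cup\{c^{i-1}lc^{n-i}\}_i$ the image $\Lambda(\widetilde{\Pi}_{\bm j})$ is supported on $\Hcal_c\oplus\Hcal_{l,(1)}$. Therefore, ordering the index set $\{c,l\}^n$ as $c^n$, then $\bm j_i:=c^{i-1}lc^{n-i}$ for $i\in[n]$, then the rest, we get a block decomposition $Q=Q'\oplus\Imat$, with $Q'\in\Rbb^{(n+1)\times(n+1)}$ acting on $\spans\{\supket{\widetilde{\Pi}_{c^n}},\supket{\widetilde{\Pi}_{\bm j_i}}\}$ and $\Imat$ the identity on the remaining $2^n-n-1$ coordinates.

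Next I would compute $Q'$ by tracking diagonal matrix elements in the Kraus sum, using that $\bra k\widetilde{\Pi}_{c^n}\ket k$ equals $2^{-n}$ for $k\in\{0,1\}^n$ and $0$ otherwise, that $\bra k\widetilde{\Pi}_{\bm j_i}\ket k$ equals $2^{-(n-1)}$ for $k\in B_i$ and $0$ otherwise, and the decomposition of $W$. This yields that $Q'$ is the arrowhead matrix with $(c^n,c^n)$ entry $1-\sum_i p_i$, remaining diagonal entries $Q'_{\bm j_i,\bm j_i}=1-2q_i$, first-column off-diagonal entries $Q'_{\bm j_i,c^n}=p_i$, first-row off-diagonal entries $Q'_{c^n,\bm j_i}=2q_i$, and all other off-diagonal entries zero. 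In particular $Q_{c^n,c^n}=1-\sum_i p_i$, giving $L_\ave=\sum_i p_i$; and since the only $\bm i\ne c^n$ with $Q_{c^n,\bm i}\ne 0$ are the $\bm j_i$, each with $\dim(\Hcal_{\bm j_i})=2^{n-1}$, Theorem \ref{thm:lrb_global} gives $S_\ave=\frac{2^{n-1}}{3^n-2^n}\sum_i 2q_i=\frac{2^n}{3^n-2^n}\sum_i q_i$.

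It remains to locate the eigenvalues of $Q'=\Imat-M$, where $M$ has diagonal $(\sum_i p_i,2q_1,\dots,2q_n)$, first column off-diagonal $(-p_i)$, and first row off-diagonal $(-2q_i)$. Writing $\mu$ for an eigenvalue of $M$ and expanding $\det(M-\mu\Imat)$ by a bordered-determinant (Schur complement) computation about the diagonal block $\diag(2q_i-\mu)$, I would obtain
$\det(M-\mu\Imat)=-\mu\prod_i(2q_i-\mu)\bigl(1+\sum_i\tfrac{p_i}{2q_i-\mu}\bigr)$,
so the eigenvalues of $M$ are $\mu=0$ (which is $\lambda_n=1$) together with the roots of the secular equation $g(\mu):=\sum_i\tfrac{p_i}{\mu-2q_i}=1$. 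Since $g$ is strictly decreasing on each interval between consecutive poles $2q_i$ (as $g'(\mu)=-\sum_i p_i(\mu-2q_i)^{-2}<0$ there), a sign analysis at the poles and at $\pm\infty$ puts exactly one root in each gap $(2q_{i+1},2q_i)$ for $i=1,\dots,n-1$ — giving $\lambda_i=1-\mu_i\in[1-2q_i,1-2q_{i+1}]$ — and exactly one root $\mu_0\in(2q_1,\infty)$. Evaluating $g$ at $\mu=2q_1+\sum_i p_i$ and at $\mu=2q_n+\sum_i p_i$ (with a one-line case split on whether the latter exceeds $2q_1$) shows $2q_1<\mu_0$ and $2q_n+\sum_i p_i\le\mu_0\le 2q_1+\sum_i p_i$, which is precisely the stated two-sided bound on $\lambda_0=1-\mu_0$. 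Coincidences among the $q_i$ (the equality cases) are handled by deflation: if $q_i=q_{i+1}$ then $\supket{\widetilde{\Pi}_{\bm j_i}}-\supket{\widetilde{\Pi}_{\bm j_{i+1}}}$ is an eigenvector of $M$ with eigenvalue $2q_i$, and the remaining spectrum is governed by a smaller arrowhead matrix with distinct poles; this also confirms $Q'$ is diagonalizable (alternatively $M$ is symmetrizable by a diagonal congruence when all $q_i>0$, and the eigenvalue $0$ of the column-stochastic matrix $Q'$ is always semisimple). Expanding $\supbra{\hat{\Pi}_c}Q^{m-1}\supket{\tilde\rho_0}$ in the real eigenbasis of $Q=Q'\oplus\Imat$, and absorbing all the extra $\Imat$-block eigenvalues $1$ into the $\lambda_n=1$ term, then gives $p_{{\Pi}_c}(m)=\sum_{i=0}^n A_i\lambda_i^m$ with real coefficients $A_i$.

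I expect the main obstacle to be this last step: extracting the precise interlacing/containment bounds on the $\lambda_i$ — especially the lower bound $\lambda_0\ge 1-2q_1-\sum_i p_i$ and the upper bound $\lambda_0\le 1-2q_n-\sum_i p_i$ — from the secular equation requires careful bookkeeping of the pole structure and an honest treatment of the degenerate cases where several $q_i$ coincide. By contrast, the block decomposition $Q=Q'\oplus\Imat$ and the explicit entries of $Q'$ are routine once one notices that $E_0$ is diagonal.
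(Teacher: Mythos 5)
Your proposal is correct and follows essentially the same route as the paper's proof: you arrive at the same $(n+1)$-dimensional arrowhead transition matrix of Eq.~\eqref{eq:Q_specificNoise}, and your secular function $g(\mu)=\sum_i p_i/(\mu-2q_i)$ is precisely the paper's $h(x)=1-\sum_i p_i/(x_i-x)$ under the substitution $\mu=1-x$, with the same monotonicity/sign analysis giving the interlacing of $\lambda_1,\ldots,\lambda_{n-1}$, the two-sided bound on $\lambda_0$, and the same use of Theorem~\ref{thm:lrb_global} for $L_{\ave}$ and $S_{\ave}$. Your deflation argument for coincident $q_i$ (plus the semisimplicity/symmetrizability remarks) plays the same role as the similarity-transformation treatment of the degenerate cases in Appendix~\ref{app:lrb_specific}.
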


\begin{proof}
If the noise model is described by Definition \ref{def:Single-site-leakage-noise}, the corresponding condensed-operator representation only acts non-trivially on the $n+1$-dimensional subspace spanned by $\left\{\supket{\widetilde{{\Pi}}_{\bm_i}}\mid |\{k|i_k=l\}|\leq 1\right\}$, as follows
\begin{equation}
    Q = \begin{pmatrix}
    1-\sum_i p_i & 2q_1& \ldots & 2q_n\\
    {p_1} & 1 - 2{q_1} & \ldots & 0\\
    \vdots & \vdots & \vdots & \vdots \\
    {p_n} & 0 & \ldots & 1 - 2{q_n}
    \end{pmatrix},
    \label{eq:Q_specificNoise}
\end{equation}
where $p_i,q_i$ are defined in Eq. \eqref{eq:p_iq_i}. This transition matrix can be illustrated in Fig.~\ref{fig:model}.
Eq. \eqref{eq:Q_specificNoise} holds since $Q_{c^{i-1}lc^{n-i},c^{n}} = \tr{{\Pi}_{c^{i-1}lc^{n-i} }\Lambda(\widetilde{\Pi}_{c^n})} = {p_i}$, and similarly we can get other elements of $Q$.

\begin{figure}
    \centering
    \begin{tikzpicture}[FAstyle]
  \node[state,bottom color=red!20,draw=red!50] (S) {$
  0$};
  \node[state,above of=S] (A) {$1$};
  \node[state,right of=S] (B) {$2$};
  \node[below of=S] (C) {$\cdots$};
  \node[state, left of=S] (D) {$n$};
  \path[->] (S) edge[bend right=20] node[swap]{${p_1}$} (A)
            (S) edge[bend right=20] node[swap]{${p_2}$} (B)
            (S) edge[bend right=20] node[swap]{$\cdots$} (C)
            (S) edge[bend right=20] node[swap]{${p_n}$} (D)
            (A) edge[bend right=20] node[swap]{$2{q_1}$} (S)
            (B) edge[bend right=20] node[swap]{$2{q_2}$} (S)
            (C) edge[bend right=20] node[swap]{$\cdots$} (S)
            (D) edge[bend right=20] node[swap]{$2{q_n}$} (S)
  ;
\end{tikzpicture}
    \caption{Single-site leakage model described as a Markov chain. Self-loops are omitted. Here $0$ denotes computational subspace $\Hcal_c$, $i$ where $1\leq i\leq n$ denotes the subspace where only the $i$th qubit is leaked, i.e., $\Hcal_{c^{i-1}lc^{n-i}}$.}
    \label{fig:model}
\end{figure}
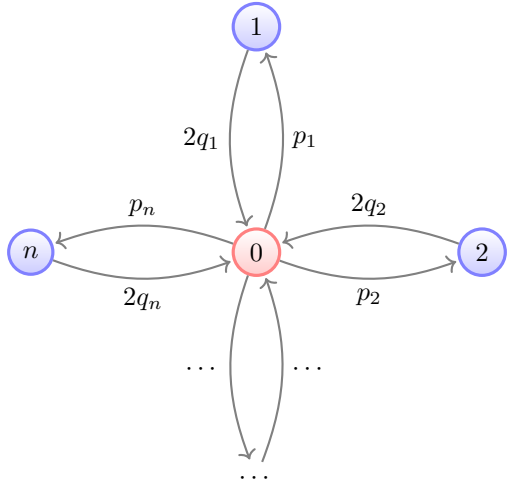

Although the spectrum of the transition matrix $Q$ cannot be explicitly solved in the general case, it is possible to derive bounds on all its eigenvalues by examining its characteristic polynomial. For simplicity, we prove the theorem under a generic scenario where $n\geq 2$ and $q_1>q_2>\cdots>q_n$. In this case, it can be demonstrated that all eigenvalues of $Q$ are distinct, making $Q$ inherently diagonalizable. A detailed analysis of situations where algebraic multiplicities arise can be found in Appendix \ref{app:lrb_specific}.

Denote $x_i:=1-2q_i$. Consider
\begin{align}
    \det( Q - \lambda {\Ibb})&= (1 - \sum_{i=1}^n p_i - \lambda)\prod_{j = 1}^n (1 - \lambda - 2q_j) - \sum_{i = 1}^n 2p_iq_i\prod_{j\in[n]\backslash \cbra{i}}(1 - \lambda -  2q_j)\\
    &=(1-\lambda) \left(\prod_{j = 1}^n (x_j - \lambda) - \sum_{i = 1}^n p_i\prod_{j\in[n]\backslash \cbra{i}}(x_j-\lambda) \right).
\label{eq:characteristic-polynomial}
\end{align}
where $[n]:=\{1,\ldots, n\}$. Hence $\lambda = 1$ is an eigenvalue of $Q$. 
Let 
\begin{equation}
f(x) := \prod_{i=1}^n(x_i-x) - \sum_{i=1}^n p_i\prod_{j\in [n]\backslash \cbra{i}}(x_j-x),
\label{eq:f_definition}
\end{equation}
then the roots of function $f(x)$ are meanwhile the eigenvalues of $Q$. Note that
\begin{equation}
    f(x_k) = -p_k\prod_{j\in [n]\backslash \cbra{k}}( x_j-x_k).
\label{eq:f_root}
\end{equation}

As $q_1>q_2>\cdots>q_n\geq0$ and $p_i>0$, we have $x_1<x_2<\cdots<x_n\leq 1$. It can be seen that $f(x_i)$ and $f(x_{i+1})$ always have different signs, indicating a zero in $(x_i, x_{i+1})$ for all $i\in[n-1]$. As $\deg(f)=n$, there is only one zero left to be determined, which is guaranteed to be real since all the other zeros are real. Let
\begin{equation}
h(x)=\frac{f(x) }{\prod_{i\in [n]}(x_i -x)}  = 1-\sum_{i\in [n]} \frac{p_i}{(x_i-x)}.
    \label{eq:g-lambda}
\end{equation}
When $x<x_1$, $h(x)$ and $f(x)$ have the same sign, and 
$$1-\frac{\sum_{i\in[n]}p_i}{(x_1-x)}> h(x)> 1-\frac{\sum_{i\in[n]}p_i}{(x_n-x)}.$$
Therefore we have
\begin{itemize}
    \item $f(x_1)=-p_i\prod_{j\in[n]\setminus\{1\}}(x_j-x_1)<0$,
    \item $h(x_1-\sum_{i\in[n]}p_i)< 0$,
    \item $h(x_n-\sum_{i\in[n]}p_i)> 0$,
\end{itemize}
indicating $f$ having a zero in $(x_1-\sum_{i\in[n]}p_i, \min(x_1, x_n-\sum_{i\in[n]}p_i))$. 

To summarize, we have a complete characterization of all eigenvalues $\lambda_0<\lambda_1<\cdots<\lambda_n$ of $Q$, namely
\begin{itemize}
    \item $\lambda_0\in (x_1-\sum_{i\in[n]}p_i, \min(x_1, x_n-\sum_{i\in[n]}p_i))$,
    \item $\lambda_i\in (x_i, x_{i+1}), \forall i \in[n-1]$;
    \item $\lambda_n$=1.
\end{itemize}

By Theorem \ref{thm:lrb_global}, the average leakage and seepage rates for the Pauli group with this specific noise equal $\sum_i p_i$ and $\frac{2^{n-1}}{3^n - 2^n}\sum_i 2q_i$ respectively.
\end{proof}
We assume in Theorem \ref{thm:lrb_specific} that $p_i>0$ for all $i$. When  $p_i=0$ for some $i$, the matrix $Q$ might not be fully diagonalizable, requiring more complex data processing schemes. From a physical perspective, such complications can be mitigated by preparing the initial state such that the initial leakage on qubit $i$ is negligible.
Theorem \ref{thm:lrb_specific} shows that when the seepage probability of all qubits are close to each other and close to leakage probability, i.e., $p_i\approx p_j\approx \bar{p}_{\Pbb}$ and $q_i\approx q_j\approx \bar{q}_{\Pbb}$ for all of $i,j\in[n]$, then the multi-exponential decay will approximately collapse to two-exponential decay with $\lambda_1\approx 1-2\bar{q}_{\Pbb}, \lambda_0 = 1 - 2q_n - \sum_{i}p_i\approx 1 - 2\bar{q}_{\Pbb} - n\bar{p}_{\Pbb}$. 
With the properties of the eigenstates for eigendecomposition of the transition matrix $Q$, we can further simplify the exponential curve to a single decay since the coefficient of $\lambda_1$ equals zero when the state preparation noise is negligible.
The leakage and seepage of the $n$-qubit system can be consequently derived according to $L_{\ave} = \sum_{i}p_i\approx n\bar{p}_{\Pbb}$, and $S_{ave} = \frac{2^{n}}{3^n-2^n}\sum_{j}q_j\approx \frac{n2^{n}}{3^n-2^n} \bar{p}_{\Pbb}$, as shown in the following corollary. 

\begin{corollary}
Let the leakage noise $\Lambda$ be as described in Definition \ref{def:Single-site-leakage-noise} such that $p_i = p_j = \bar{p}_{\Pbb}>0$ and $q_i = q_j = \bar{q}_{\Pbb}$ for different $i,j\in [n]$, and assume state preparation is noiseless, then after performing $n$-site LRB protocol, the expectation of the probability for measuring computational basis $p_{{\Pi}_c}(m) = A + B \pbra{1 - 2\bar{q}_{\Pbb}-n\bar{p}_{\Pbb}}^m$, where $A,B$ are some real constants. The average leakage and seepage rates of $\Lambda$ are $L_{\ave} = \sum_{i}p_i\approx n\bar{p}_{\Pbb}$, and $S_{ave} = \frac{2^{n}}{3^n-2^n}\sum_{j}q_j\approx \frac{n2^{n}}{3^n-2^n} \bar{q}_{\Pbb}$ respectively.
\label{cor:lrb_arp}
\end{corollary}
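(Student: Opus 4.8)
The plan is to specialize Theorem~\ref{thm:lrb_specific} to the symmetric case and then use the extra structure to collapse the multi-exponential curve to a single exponential. First I would observe that under the hypotheses $p_i = \bar{p}_{\Pbb}$ and $q_i = \bar{q}_{\Pbb}$ for all $i$, the transition matrix $Q$ of Eq.~\eqref{eq:Q_specificNoise} becomes highly symmetric: the lower-right $n\times n$ block is $(1-2\bar{q}_{\Pbb})\Imat$, the first column below the diagonal is $\bar{p}_{\Pbb}\onemat$, and the first row to the right of the diagonal is $2\bar{q}_{\Pbb}\onemat$. I would then diagonalize $Q$ explicitly. The key point is that $Q$ acts as $1-2\bar{q}_{\Pbb}$ on the $(n-1)$-dimensional subspace $V_0 := \{(0, v_1,\dots,v_n) : \sum_i v_i = 0\}$, since on such vectors the coupling to the first coordinate vanishes (the first row sees $2\bar{q}_{\Pbb}\sum_i v_i = 0$) and each of the remaining coordinates is simply scaled by $1-2\bar{q}_{\Pbb}$. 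On the complementary $2$-dimensional invariant subspace spanned by $e_0$ and $\onemat' := (0,1,\dots,1)$, the matrix $Q$ restricts to $\bigl(\begin{smallmatrix} 1-n\bar{p}_{\Pbb} & 2\bar{q}_{\Pbb} \\ \bar{p}_{\Pbb} & 1-2\bar{q}_{\Pbb}\end{smallmatrix}\bigr)$ (with respect to a suitable normalization), whose eigenvalues are $\lambda_n = 1$ and $\lambda_0 = 1 - 2\bar{q}_{\Pbb} - n\bar{p}_{\Pbb}$ (consistent with the trace $2 - 2\bar q_\Pbb - n\bar p_\Pbb$ and the specialization of the bounds in Theorem~\ref{thm:lrb_specific}).

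Next I would plug this spectral decomposition into the formula $p_{{\Pi}_c}(m) = \supbra{\hat{{\Pi}}_c} Q^{m-1}\supket{\tilde\rho_0}$ from Theorem~\ref{thm:lrb_global}. Writing $\supket{\tilde\rho_0}$ in the eigenbasis, the components in $V_0$ would a priori contribute a term proportional to $(1-2\bar q_\Pbb)^{m-1}$. The crucial step is to show this coefficient vanishes when state preparation is noiseless: in that case $\tilde\rho_0 = \Lambda(\rho_0)$ where $\rho_0 = \widetilde{\Pi}_{c^n}$ is (proportional to) the maximally mixed state on the computational subspace, so $\supket{\tilde\rho_0}$ has support only on $e_0$ and $\onemat'$ — indeed $\Lambda$ maps $\widetilde\Pi_{c^n}$ to a combination of $\widetilde\Pi_{c^n}$ and the $\widetilde\Pi_{c^{i-1}lc^{n-i}}$ with equal weights $p_i/\text{(appropriate dim)}$, hence symmetric under permuting the leaked-qubit index, hence orthogonal to $V_0$. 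Therefore only the eigenvalues $1$ and $\lambda_0$ survive, giving $p_{{\Pi}_c}(m) = A + B(1-2\bar q_\Pbb - n\bar p_\Pbb)^m$ for constants $A, B$ absorbing the measurement overlaps and the single power of $Q$ already accounted for.

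Finally, the average leakage and seepage rates follow immediately by substituting $p_i = \bar p_\Pbb$, $q_i = \bar q_\Pbb$ into the expressions $L_{\ave} = \sum_i p_i$ and $S_{\ave} = \frac{2^n}{3^n-2^n}\sum_i q_i$ from Theorem~\ref{thm:lrb_specific}, yielding $L_{\ave} = n\bar p_\Pbb$ and $S_{\ave} = \frac{n2^n}{3^n-2^n}\bar q_\Pbb$. I expect the main obstacle to be the careful bookkeeping in the second step: one must verify that $\supket{\tilde\rho_0}$ genuinely lies in the span of the two relevant eigenvectors (equivalently, is invariant under the $S_n$-action permuting the single-leaked-qubit subspaces) and that $\supbra{\hat\Pi_c}$ does not reintroduce a $V_0$-component — the former uses the permutation symmetry of the noise model in the symmetric case together with noiseless state preparation, and the latter is automatic because $\hat\Pi_c$ is close to $\Pi_c = \Pi_{c^n}$ which pairs only with $e_0$. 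A minor subtlety worth a remark is that when $n = 1$ the subspace $V_0$ is empty and the statement is just the standard two-parameter leakage decay, so the corollary is vacuously consistent there.
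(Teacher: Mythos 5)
Your proposal is correct and follows essentially the same route as the paper's proof in Appendix~\ref{app:lrb_barp}: diagonalize the symmetric transition matrix $Q$, use noiseless preparation to place the initial condensed vector at $e_0$, which has no component along the $(n-1)$-dimensional $(1-2\bar q_{\Pbb})$-eigenspace $\{(0,v_1,\ldots,v_n):\sum_i v_i=0\}$, so only the eigenvalues $1$ and $1-2\bar q_{\Pbb}-n\bar p_{\Pbb}$ survive while measurement noise only affects the constants $A,B$ — the paper simply writes out $V$ and $V^{-1}$ explicitly where you argue via the invariant-subspace decomposition. Two minor points: in the basis $\pbra{e_0,\tfrac{1}{n}(0,1,\ldots,1)}$ the restricted block is $\bigl(\begin{smallmatrix}1-n\bar p_{\Pbb} & 2\bar q_{\Pbb}\\ n\bar p_{\Pbb} & 1-2\bar q_{\Pbb}\end{smallmatrix}\bigr)$, so your lower-left entry drops a factor of $n$ (your stated eigenvalues $1$ and $1-2\bar q_{\Pbb}-n\bar p_{\Pbb}$ are nonetheless the correct ones, as the determinant check confirms), and the closing worry about $\hat{{\Pi}}_c$ is unnecessary — once $Q^{m-1}\supket{\tilde\rho_0}$ is confined to the two-dimensional invariant subspace spanned by $e_0$ and $(0,1,\ldots,1)$, pairing with an arbitrary measurement functional already yields $A+B\lambda_0^m$, with no closeness of $\hat{{\Pi}}_c$ to ${\Pi}_c$ required.
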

The decay rate $1 - 2\bar{q}_{\Pbb}-n\bar{p}_{\Pbb}$ obtained from the LRB experiment does not provide sufficient information to fully determine $L_\ave$ and $S_\ave$. Rather, additional prior knowledge is required, such as the ratio of the leakage and the seepage rates. We postpone the proof of this corollary into Appendix \ref{app:lrb_barp}.

\subsection{Cross-talk-free leakage noise}

Previous studies have indicated that cross-talk in real devices can be significantly minimized~\cite{sung2021realization}. In the subsequent subsection, we demonstrate that the exponential decay can be simplified under the condition that leakage noise occurs independently and locally across different qubits. We make the assumption that the local noise adheres to Definition \ref{def:Single-site-leakage-noise} for each individual gate. It is important to note that in this context, the noise is inherently single-site, as each qubit possesses only one leakage site.
\begin{corollary}
By performing the LRB circuit in $n$-site cross-talk free system for the Pauli group, the expectation of the output probability for the computational subspace of the $k$-th qubit is equal to $p_{{\Pi}_{c_k}}(m) = A +B\lambda^m_{k}$,  and the average leakage and seepage rates
\begin{align}
    L_{\ave} &= 1 - \prod_{k = 1}^n \pbra{1 - {p_k}},\\
    S_{\ave} &=  \frac{2^n}{3^n - 2^n} \prod_{k = 1}^n \pbra{1 - p_k + q_k} -\frac{2^n}{3^n - 2^n} \prod_{k = 1}^n (1 - p_k) 
\end{align}
where  $\lambda_k = 1 - p_k - 2q_k$, and $A,B$ are some real numbers, $p_k,q_k$ are leakage rates associated with Eq. \eqref{eq:p_iq_i} in the $k$-th qubit.
\label{coro:no_crosstalkPauli}
\end{corollary}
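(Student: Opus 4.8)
The plan is to exploit the tensor-product structure implied by cross-talk-freeness: if leakage acts independently on each qubit, the global noise channel factorizes as $\Lambda = \bigotimes_{k=1}^n \Lambda_k$, where each $\Lambda_k$ is a single-qubit single-site leakage damping channel with leak/seep probabilities $p_k, q_k$ (in the sense of Eq.~\eqref{eq:p_iq_i} restricted to qubit $k$). First I would verify that the condensed-operator representation respects this factorization on the relevant subspaces. For a single qubit with leakage there are only two projector basis elements, $\widetilde\Pi_{c_k}$ and $\widetilde\Pi_{l_k}$, so the single-qubit $Q$-matrix is the $2\times 2$ Markov matrix $Q_k=\begin{pmatrix}1-p_k & q_k\\ p_k & 1-2q_k\end{pmatrix}$ (the factor $2$ in the seepage entry coming from $\dim\Hcal_{l_k}=1$ versus $\dim\Hcal_{c_k}=2$, exactly as in Eq.~\eqref{eq:Q_specificNoise} with $n=1$). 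Its eigenvalues are $1$ and $\lambda_k := 1-p_k-2q_k$, which already identifies the single decay rate claimed for the $k$-th marginal experiment $p_{\Pi_{c_k}}(m)$.

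Next I would run the single-qubit version of the LRB analysis of Theorem~\ref{thm:lrb_global} on qubit $k$ alone: measuring $\hat\Pi_{c_k}$ and twirling only the $k$-th Pauli gives $p_{\Pi_{c_k}}(m) = \supbra{\hat\Pi_{c_k}} Q_k^{m-1}\supket{\tilde\rho_0^{(k)}}$, and since $Q_k$ has spectrum $\{1,\lambda_k\}$ this collapses to $A + B\lambda_k^m$ with $\lambda_k = 1-p_k-2q_k$, as stated. Here one needs the observation that twirling the full Pauli group $\Pbb_n$ and then tracing out all but qubit $k$ is the same as twirling the single-qubit Pauli group on qubit $k$ — this follows from Lemma~\ref{lem:PerfectPauli} applied factor-by-factor, together with the fact that $\bar\Pcal = \bigotimes_k \bar\Pcal_k$ on $\mathrm{span}(\Ical)$. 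The cross-talk-free assumption is what guarantees $\bar\Pcal\circ\Lambda = \bigotimes_k (\bar\Pcal_k\circ\Lambda_k)$, so that the marginal dynamics on qubit $k$ are genuinely governed by $Q_k$ independently of the other qubits.

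For the global leakage and seepage rates I would compute directly from Theorem~\ref{thm:lrb_global}: $L_\ave = 1 - Q_{c^n,c^n}$, and $Q_{c^n,c^n} = \prod_k (Q_k)_{c_k,c_k} = \prod_k(1-p_k)$ by the product structure of $Q = \bigotimes_k Q_k$ on the computational-subspace entry, giving $L_\ave = 1-\prod_k(1-p_k)$. For the seepage rate I would start from its definition $S_\ave = \tr{\Pi_c\Lambda(\widetilde\Pi_l)}$. Writing $\widetilde\Pi_l = \frac{1}{d_l}(\Ibb - \Pi_c) = \frac{1}{3^n-2^n}\big(\bigotimes_k(\Pi_{c_k}+\Pi_{l_k}) - \bigotimes_k \Pi_{c_k}\big)$ and using $\Lambda = \bigotimes_k\Lambda_k$ with $\tr{\Pi_{c_k}\Lambda_k(\Pi_{c_k})} = 2(1-p_k)$ and $\tr{\Pi_{c_k}\Lambda_k(\Pi_{c_k}+\Pi_{l_k})} = 2(1-p_k)+2q_k = 2(1-p_k+q_k)$ (the $2q_k$ being the return weight, consistent with $\dim\Hcal_{l_k}=1$), the trace factorizes across qubits and yields $S_\ave = \frac{2^n}{3^n-2^n}\big(\prod_k(1-p_k+q_k) - \prod_k(1-p_k)\big)$ after the $2^n$ from $\dim\Hcal_c$ cancels against the normalization. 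I expect the bookkeeping of normalization constants and the factors of $2$ from $\dim\Hcal_{l_k}$ versus $\dim\Hcal_{c_k}$ in this last computation to be the main place where care is needed; the conceptual content — that cross-talk-freeness means everything tensorizes and each factor is the $n=1$ case of Theorem~\ref{thm:lrb_specific} — is otherwise routine.
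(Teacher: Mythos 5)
Your proposal is correct and follows essentially the same route as the paper: cross-talk-freeness makes the channel tensorize, each qubit's marginal LRB dynamics is the $n=1$ case of Theorem~\ref{thm:lrb_specific} (giving the single decay with $\lambda_k = 1-p_k-2q_k$), and $L_{\ave}$, $S_{\ave}$ are obtained by factorizing the traces $\tr{{\Pi}_{c_k}\Lambda_k({\Pi}_{c_k})}=2(1-p_k)$ and $\tr{{\Pi}_{c_k}\Lambda_k({\Pi}_{c_k}+{\Pi}_{l_k})}=2(1-p_k+q_k)$ exactly as in the paper's appendix computation. One small slip: the top-right entry of your single-qubit matrix $Q_k$ should be $2q_k$ rather than $q_k$ (so that its columns sum to one), which is what your own parenthetical remark and your stated eigenvalues $\{1,\,1-p_k-2q_k\}$ actually correspond to.
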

We postpone the proof of the corollary into Appendix \ref{app:no_crosstalkPauli}.
This corollary can be obtained by restricting the noise in Theorem \ref{thm:lrb_specific} to be the tensor product form of each local noise on a single qubit. Then if $p_k\approx q_k$ or we know the relationship between $p_k$ and $q_k$ with the analysis of the system, we can estimate $L, S$ by fitting $p_k$ from  $p_{{\Pi}_{c_k}}$ for all of $k\in [n]$ independently. We note that the cross-talk-free noise is different from the noise defined in Definition \ref{def:Single-site-leakage-noise}, since only a single qubit can leak in Definition \ref{def:Single-site-leakage-noise}.
By Corollary \ref{coro:no_crosstalkPauli}, the fitted curve associated with $p_{{\Pi}_c}$ will not follow a single exponential decay, since
\begin{equation}
p_{{\Pi}_c} (m)= p_{{\Pi}_{c_1}}(m)\cdots p_{{\Pi}_{c_n}}(m)= \pbra{A_1 + B_1 \lambda_k^m} \cdots \pbra{A_n + B_n \lambda_n^m}.
\end{equation}
We can check that when $n$ equals to $2$, there will be 3 exponents, $\lambda_1 = 1 - p_1 - 2q_1, \lambda_2 = 1 - p_2 - 2q_2$ and $\lambda_3 = (1 - p_1 - 2q_1)(1 - p_2 - 2q_2)$ with average leakage rate $L_{\ave} = 1 - (1 - p_1)(1 - p_2)$ and seepage rate $S_{\ave} = \frac{4}{5}(1 - p_1 + q_1)(1 - p_2 + q_2) -\frac{4}{5}(1 - p_1)(1 - p_2)$.

\section{Interleaved LRB protocol for specific target gates}
\label{sec:iLRB_protocol}

In this section, we focus on benchmarking specific target gates. 
Benchmarking the leakage rate of an arbitrary target gate $T$ differs from benchmarking the leakage rate of the Pauli group, as the target gate does not readily form the Pauli group. 
We propose an interleaved variant of the leakage for the previous interleaved randomized benchmarking protocol~\cite{magesan2012efficient}, named iLRB (interleaved leakage randomized benchmarking). 
We note that the target gate channel $\Tcal$ can be any gate scheme, provided that the associated leakage noise model conforms to the form discussed in this section. The iLRB protocol is outlined as follows:
\begin{itemize}
    \item [(1)] Sample a sequence of $m$ Paulis $\Pcal_1,\ldots, \Pcal_m$ from $\Pbb_n$ and perform them sequentially to the noisy initial state $\hat{\rho}_0$ interleaved by target gate $\Tcal$ to get $\hat\Pcal_{m}\circ\hat{\Tcal} \circ\cdots \circ\hat\Pcal_{1}\circ\hat{\Tcal}(\hat{\rho}_0)$. Measure the output states and estimate $p_{{\Pi}_c }\pbra{\Pcal_1, \ldots, \Pcal_m} = \tr{\hat{{\Pi}}_c\hat{\Pcal}_m\circ\hat{\Tcal}\circ\cdots \circ\hat{\Pcal}_1\circ\hat{\Tcal}\pbra{\hat{\rho}_0}}
    $ through repeated experiments.
    \item [(2)] Repeat Step (1) multiple times to estimate
    $p_{{\Pi}_c}(m)$, the expectation of $p_{{\Pi}_c }\pbra{\Pcal_1, \ldots, \Pcal_m}$ under random choices of $\Pcal_1,\ldots, \Pcal_m$.
    \item [(3)] Sample a sequence of $m$ Paulis $\Pcal_1,\ldots, \Pcal_m$ in $\Pbb_n$, and perform them sequentially to the prepared noisy initial state $\hat{\rho}_0$, i.e., $\hat\Pcal_{m} \circ\cdots \circ \hat\Pcal_{1}(\hat{\rho}_0)$. Measure the output states  and estimate $p_{{\Pi}_c,\Pbb }\pbra{\Pcal_1, \ldots, \Pcal_m} = \tr{\hat{{\Pi}}_c\hat{\Pcal}_m\circ\cdots \circ\hat{\Pcal}_1\pbra{\hat{\rho}_0}}
    $ through repeated experiments.
    \item [(4)] Repeat Step (3) multiple times to estimate
    $p_{{\Pi}_c,\Pbb}(m)$, the expectation of $p_{{\Pi}_c,\Pbb }\pbra{\Pcal_1, \ldots, \Pcal_m}$ under random choices of $\Pcal_1,\ldots, \Pcal_m$.
    \item[(5)] Repeat Steps (2), (4) for different $m$, and fit the exponential decay curves of $p_{{\Pi}_c }(m), p_{{\Pi}_c,\Pbb} (m)$ with respect of $m$.
\end{itemize}
When the leakage noise of the Pauli gates is negligible compared to that of the target gate $\mathcal{T}$, we can benchmark any target gate $\hat{\Tcal}=\Tcal\circ \Lambda_{\Tcal}$ where $\Lambda_{\Tcal}$ has the same leakage noise as in Definition \ref{def:Single-site-leakage-noise}, by only performing the first two steps of the above iLRB protocol. 
In this case, we can directly leverage Theorem \ref{thm:lrb_specific} to get the average leakage rate of $\Lambda_T$.

When the leakage noise of the Pauli gates is not negligible, however, steps (3) and (4) are needed to separate the target gate leakage from the Pauli gate leakage, and more assumptions on the target gate leakage noise are needed. We assume a specific case of the noise model in Definition \ref{def:Single-site-leakage-noise},
 where the target gate $\Tcal$ has the noisy implementation $\hat{\Tcal} =\Tcal\circ\Lambda_T=\Lambda_T\circ\Tcal$ and the noise $\Lambda_T$ is defined in Definition \ref{def:simplifiedSingle_iteNoise_iLRB} with the same value for all of $p_i,q_j$ in Eq. \eqref{eq:p_iq_i}. Similarly, 
 we assume that $\hat{\Pcal}=\Pcal\circ\Lambda_{\Pbb}$ with noise channel $\Lambda_{\Pbb}$ as defined in Definition \ref{def:simplifiedSingle_iteNoise_iLRB} also having same value for all of $p_i,q_j$ in Eq. \eqref{eq:p_iq_i}.
 
\begin{definition}
We define the \emph{simplified single-site leakage damping noise} model as a CPTP map $\Lambda$ such that
\begin{equation}
\Lambda(\rho) = E_0 \rho E_0^\dagger + \sum_{i = 1}^{n} E_{0i} \rho E_{0i}^\dagger + \sum_{i=1}^n E_{i0} \rho E_{i0}^\dagger
\end{equation}
for any input sate $\rho$,
where
\begin{align}
E_0 &= \sqrt{ 1 - np} \ket{u_0}\bra{u_0} + \sum_{i = 1}^n \sqrt{1 - p}\ket{u_i}\bra{u_i} + \sum_{i\in S} \ket{i}\bra{i}, \\
E_{0i} &= \sqrt{p}\ket{u_i}\bra{u_0},\quad
E_{i0} = \sqrt{p}\ket{u_0}\bra{u_i}\quad \forall i\in [n],
\end{align}
where $u_i\in B_i, u_0\in \cbra{0,1}^n, 0\leq np\leq 1, S = \cbra{0,1,2}^n\backslash \cbra{u_i|0\leq i\leq n}$ for any $i$, with $\bar{p} = \frac{p}{2^n}$.
\label{def:simplifiedSingle_iteNoise_iLRB}
\end{definition}
Definition \ref{def:simplifiedSingle_iteNoise_iLRB} is to be regarded as a particular case of Definition \ref{def:Single-site-leakage-noise}, with at most a single leak happening between each $\Hcal_{c^{i-1}lc^{n-i}}\subseteq \Hcal_{l,(1)}$ and $\Hcal_c$.
Such a simplified noise model has important applications such as measuring leakage for two-qubit gates on superconducting quantum chips. See more details in the next section.
{The requirement of the noise model for iLRB protocol can be further relaxed to more than a single leak between each $\Hcal_{c^{i-1}lc^{n-i}}$ and $\Hcal_c$ with the same leak probability.
}

With the assumption of the above noise model, the average leakage rate of target gate $\Tcal$ can be estimated with exponential decay curves of $p_{{\Pi}_c }(m), p_{{\Pi}_c,\Pbb} (m)$ obtained from iLRB protocol, as shown in the following theorem. Usually, the state preparation noise is negligible compared with gate and measurement noise, we also show that assuming state preparation is noiseless, we can further simplify the iLRB protocol to single-exponent decay curves in the following theorem.

\begin{theorem}
For any $n$-site target gate $\Tcal$ where its noisy implementation $\hat{\Tcal} = \Tcal\circ\Lambda_T = \Lambda_T\circ \Tcal$, $\Lambda_T$ and the noise of Pauli group $\Pbb_n$ both have the formations as in Definition \ref{def:simplifiedSingle_iteNoise_iLRB} with noise parameter $\bar{p}$ be $\epsilon_T, \bar{p}_{\Pbb}$ respectively, after performing the {iLRB} protocol, the expectation of the output probabilities
\begin{align}
    &p_{{\Pi}_c} = A_0 + A_1 \lambda_1^m + A_2 \lambda_2^m
    \label{eq:iLRB_decay_1}\\
     &p_{{\Pi}_c,\Pbb} = B_0 + B_1 \lambda_{\Pbb1}^m + B_2\lambda_{\Pbb2}^m
\label{eq:iLRB_decay_2}
\end{align}
where $\lambda_1 = 1- 2(\epsilon_T + \bar{p}_{\Pbb}) + 2^{n+1}\bar{p}_{\Pbb}\epsilon_T, \lambda_2 = 1 - (n + 2)(\bar{p}_{\Pbb}+\epsilon_T) + (n+1)(n+2)2^n\bar{p}_{\Pbb}\epsilon_T,\lambda_{\Pbb1} = 1 - 2{\bar{p}_{\Pbb}}, \lambda_{\Pbb2} = 1 - (n+2)\bar{p}_{\Pbb}$,
and the average leakage and seepage rates for target gate $T$ equal
$ L_{T} = {n\epsilon_T},
S_{T} = \frac{2^{n}n\epsilon_T}{3^n - 2^n}$ respectively. Assuming state preparation is noiseless,
we can further simplify Eqs. \eqref{eq:iLRB_decay_1}-\eqref{eq:iLRB_decay_2} to
\begin{align}
    &p_{{\Pi}_c} = A_0 + A_2 \lambda_2^m
    \label{eq:iLRB_decay_3}\\
     &p_{{\Pi}_c,\Pbb} = B_0 + B_2\lambda_{\Pbb2}^m.
\label{eq:iLRB_decay_4}
\end{align}
\label{thm:ilrb_multiq}
\end{theorem}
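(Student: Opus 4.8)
\textbf{Proof plan for Theorem \ref{thm:ilrb_multiq}.}

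The plan is to follow the same condensed-operator strategy used for Theorem \ref{thm:lrb_global} and Theorem \ref{thm:lrb_specific}, but now the relevant Markov transition matrix is built from the \emph{composition} of the target-gate noise $\Lambda_T$ and the Pauli noise $\Lambda_{\Pbb}$ rather than from a single noise channel. First I would write down the condensed-operator representation $Q_T$ and $Q_{\Pbb}$ of the two simplified single-site leakage damping channels of Definition \ref{def:simplifiedSingle_iteNoise_iLRB}. Because each of these channels only moves weight between $\Hcal_c$ and the $n$ one-qubit-leaked subspaces $\Hcal_{c^{i-1}lc^{n-i}}$ (and does so with a common rate for all sites), each $Q$ is an $(n+1)\times(n+1)$ matrix of the ``star'' form appearing in Eq. \eqref{eq:Q_specificNoise} but with all $p_i$ equal and all $q_i$ equal; that is, with $L_{\ave}$-type parameter $np$ on the diagonal corner and symmetric off-diagonal entries. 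For such a matrix the spectrum is immediate: the all-ones-type vector gives eigenvalue $1$, the ``balanced'' combinations spanning the leaked sites give an $(n-1)$-fold degenerate eigenvalue $1-p$ (equivalently $1-2q$-type), and the remaining ``radial'' eigenvector gives $1-(n+2)\bar p\cdot 2^n$-type, i.e. the $\lambda_{\Pbb2}$ value. I would record this spectral decomposition explicitly, since the degeneracy structure is what ultimately collapses the decay to few exponentials.

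Next, for the interleaved sequence I would adapt the derivation of Eqs. \eqref{eq:pauli_to_twirl}--\eqref{eq:gen_pauli_sec}: averaging over the Paulis turns each $\hat{\Pcal}_j = \Pcal_j\circ\Lambda_{\Pbb}$ into $\bar{\Pcal}\circ\Lambda_{\Pbb}$, and since $\hat{\Tcal}=\Tcal\circ\Lambda_T=\Lambda_T\circ\Tcal$ commutes and $\Tcal$ itself — being a permutation-type leakage-trivial unitary on the computational subspace conjugated appropriately — is absorbed into the twirl via $\bar{\Pcal}\circ\Tcal\circ\bar{\Pcal}=\bar{\Pcal}$ (this is the analogue of Lemma \ref{lem:PerfectPauli} applied after $\Tcal$), the length-$m$ interleaved sequence reduces to $\supbra{\hat{\Pi}_c}\,(Q_{\Lambda_T\circ\bar{\Pcal}\circ\Lambda_{\Pbb}})^{m-1}\supket{\tilde\rho_0}$ by Eqs. \eqref{eqn:channel_comp}--\eqref{eqn:braket_comp}. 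Thus the governing matrix is $Q_T Q_{\Pbb}$ (in condensed-operator form), and the reference (non-interleaved) run is governed by $Q_{\Pbb}$ alone, recovering Eq. \eqref{eq:iLRB_decay_2} with the claimed $\lambda_{\Pbb1},\lambda_{\Pbb2}$ directly from the star-matrix spectrum above. For Eq. \eqref{eq:iLRB_decay_1} I would diagonalize the product $Q_T Q_{\Pbb}$: because both matrices share the \emph{same} eigenbasis — the permutation symmetry among the $n$ leaked sites is common to both, as both have all-equal parameters — the product's eigenvalues are just the pairwise products of corresponding eigenvalues, giving $\lambda_1 = (1-2\epsilon_T)(1-2\bar p_{\Pbb}) + (\text{cross term}) = 1-2(\epsilon_T+\bar p_{\Pbb})+2^{n+1}\bar p_{\Pbb}\epsilon_T$ from the degenerate block, and $\lambda_2 = (1-(n+2)\epsilon_T\cdot 2^n\text{-scaled})(1-(n+2)\bar p_{\Pbb}\text{-scaled})$ expanding to the stated quartic-in-$\bar p$ expression from the radial block; the eigenvalue $1$ of each gives $A_0$. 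Then I would read off $L_T$ and $S_T$ by applying the formulas of Theorem \ref{thm:lrb_global} (or directly Theorem \ref{thm:lrb_specific} with $p_i\equiv\epsilon_T\cdot 2^n/2^n$, etc.) to $\Lambda_T$ alone, obtaining $L_T = \sum_i p_i = n\epsilon_T$ and $S_T = \frac{2^n}{3^n-2^n}\sum_i q_i = \frac{2^n n\epsilon_T}{3^n-2^n}$.

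Finally, for the noiseless-SPAM simplification \eqref{eq:iLRB_decay_3}--\eqref{eq:iLRB_decay_4}, I would compute the overlap coefficients $A_i = \supbraket{\hat\Pi_c|r_i}\supbraket{\ell_i|\tilde\rho_0}$ where $r_i,\ell_i$ are the right/left eigenvectors of $Q_T Q_{\Pbb}$; when $\hat\rho_0$ is exactly $\widetilde\Pi_c$ (noiseless preparation, so $\tilde\rho_0=\widetilde\Pi_c$ in the relevant subspace, modulo the measurement-side noise which only rescales $A_0,A_2$), the initial condensed vector is orthogonal to the left eigenvectors of the degenerate $\lambda_1$ / $\lambda_{\Pbb1}$ block — because that block consists of \emph{traceless} differences among the leaked-site projectors while $\widetilde\Pi_c$ has support only on the computational coordinate — forcing $A_1=B_1=0$. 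I expect the main obstacle to be exactly this coefficient-vanishing argument: one must verify carefully that \emph{both} the interleaving unitary $\Tcal$ and the measurement noise $\hat\Pi_c$ do not reintroduce a component along the $\lambda_1$ eigendirection, i.e. that the symmetry that kills the degenerate block at the initial state is preserved through the whole sequence and through the (possibly asymmetric) SPAM operators. Handling this likely needs the extra restriction flagged in the text that ``the noise that happened in computational subspace is required to exclude qubit $u_0$'', which is what guarantees the measurement functional $\supbra{\hat\Pi_c}$ stays in the span of $\{\supbra{\Pi_{c^n}},\supbra{\text{symmetric leaked combination}}\}$; I would make that assumption explicit and check it against Definition \ref{def:simplifiedSingle_iteNoise_iLRB}. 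The spectral-product computation itself is routine once the shared eigenbasis is established, so I would relegate the explicit $(n+1)\times(n+1)$ algebra and the coefficient bookkeeping to an appendix.
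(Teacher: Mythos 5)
There is a genuine gap, and it sits at the heart of the theorem: your identification of the governing matrix for the interleaved decay. After averaging over the i.i.d.\ Paulis, each interleaved segment contributes $\bar{\Pcal}\circ\Lambda_{\Pbb}\circ\Tcal\circ\Lambda_T$, so the twirling projector appears exactly once per segment, \emph{outside} the pair $\Lambda_{\Pbb}\circ\Lambda_T$; after absorbing $\Tcal$ (which is legitimate, since $\Tcal\circ\Lambda_T=\Lambda_T\circ\Tcal$ and $\Tcal$ fixes every $\widetilde{{\Pi}}_{\bm i}$), the per-step matrix is $Q_{\Lambda_{\Pbb}\circ\Lambda_T}$, the condensed representation of the \emph{composed} channel. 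You instead write the reduction as $\pbra{Q_{\Lambda_T\circ\bar{\Pcal}\circ\Lambda_{\Pbb}}}^{m-1}=\pbra{Q_{\Lambda_T}Q_{\Lambda_{\Pbb}}}^{m-1}$, i.e.\ you insert a twirl between the two noise channels inside a segment. By Eq.~\eqref{eqn:channel_comp} these two objects are not equal, and the difference is not negligible here: it changes the second-order cross terms of the eigenvalues. Concretely, your shared-eigenbasis/product-of-eigenvalues argument (which is valid for $Q_{\Lambda_T}Q_{\Lambda_{\Pbb}}$, since both are of the form $\Ibb+\text{const}\cdot K$ for a common $K$) yields $\lambda_1=(1-2\epsilon_T)(1-2\bar{p}_{\Pbb})=1-2(\epsilon_T+\bar{p}_{\Pbb})+4\bar{p}_{\Pbb}\epsilon_T$ and $\lambda_2=(1-(n+2)\epsilon_T)(1-(n+2)\bar{p}_{\Pbb})$, whereas the theorem (and the true matrix $Q_{\Lambda_{\Pbb}\circ\Lambda_T}$) has cross terms $2^{n+1}\bar{p}_{\Pbb}\epsilon_T$ and $(n+1)(n+2)2^{n}\bar{p}_{\Pbb}\epsilon_T$. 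The $2^n$ enhancement is a real physical effect of composing without an intermediate twirl: after a seepage event in Definition~\ref{def:simplifiedSingle_iteNoise_iLRB} the population sits on the single basis state $\ket{u_0}$ rather than on $\widetilde{{\Pi}}_c$, so the subsequent leak probability within the same segment is amplified by $2^n$; this is also why $Q_{\Lambda_{\Pbb}\circ\Lambda_T}$ acquires nonzero leaked-to-leaked entries $2^{n+1}\bar{p}_{\Pbb}\epsilon_T$ that the product $Q_{\Lambda_T}Q_{\Lambda_{\Pbb}}$ underestimates. To close the gap you must compute the matrix elements $\tr{{\Pi}_{\bm i}\,\Lambda_{\Pbb}\circ\Lambda_T(\widetilde{{\Pi}}_{\bm j})}$ of the composed channel directly and then extract its spectrum (the characteristic polynomial still factors nicely thanks to the permutation symmetry among the $n$ leaked sites), which is what the paper does.

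The remaining ingredients of your plan are sound and match the paper's route: the reference run is governed by $Q_{\Lambda_{\Pbb}}$ and Theorem~\ref{thm:lrb_specific} gives $\lambda_{\Pbb1},\lambda_{\Pbb2}$; the rates $L_T=n\epsilon_T$ and $S_T=\frac{2^n n\epsilon_T}{3^n-2^n}$ follow by applying the definitions to $\Lambda_T$ alone; and the noiseless-preparation simplification is exactly a left-eigenvector argument—the left eigenvectors of the degenerate block have vanishing component on the computational coordinate, so their overlap with the initial condensed vector $(1,0,\ldots,0)$ (and with $Q$ applied to it) is zero, killing $A_1$ and $B_1$ regardless of the measurement noise. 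Your worry that the measurement functional $\supbra{\hat{{\Pi}}_c}$ could reintroduce the degenerate exponent is unfounded: the coefficient factorizes as (measurement overlap with right eigenvector)$\times$(left-eigenvector overlap with the state), and only the state-side factor needs to vanish, which it does for arbitrary $\alpha_k$ in $\supbra{\hat{{\Pi}}_c}=\sum_k\alpha_k\supbra{{\Pi}_{c^{k-1}lc^{n-k}}}$.
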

\begin{proof}
By Theorem \ref{thm:lrb_specific}, we have $\lambda_{\Pbb1} = 1- 2\bar{p}_{\Pbb}$, and $\lambda_{\Pbb2} = 1- (n+2)\bar{p}_{\Pbb}$. 
{Since $\Tcal \circ \Lambda_T = \Lambda_T\circ \Tcal$}, and $\Lambda_{\Pbb}, \Lambda_{T}$ both have formations as in Definition \ref{def:simplifiedSingle_iteNoise_iLRB}, then
\begin{align}
   p_{{\Pi}_{c}}(m) &= \frac{1}{\abs{\Pbb_n}^m}
\sum_{\Pcal_1,\ldots , \Pcal_m \in \Pbb_n} \tr{\hat{{\Pi}}_c\hat{\Pcal}_m \circ\hat{\Tcal}\circ\cdots \circ\hat{\Pcal}\circ \hat{\Tcal}(\hat{\rho}_0)}\\
&= \tr{\hat{{\Pi}}_c \pbra{\frac{1}{\abs{\Pbb_n}}\sum_{\Pcal\in \Pbb_n} \Pcal \circ \Lambda_{\Pbb}\circ  \Tcal \circ \Lambda_{T}}^m (\hat{\rho}_0)}\\
&= \tr{\hat{{\Pi}}_c \pbra{\bar{\Pcal}\circ \Lambda_{\Pbb} \circ \Lambda_{T}}^m (\hat{\rho}_0)}\\
&= \tr{\hat{{\Pi}}_c \pbra{\bar{\Pcal} \circ\Lambda_{\Pbb} \circ \Lambda_{T}}^{m-1} \bar{\Pcal} \circ (\Lambda_{\Pbb} \circ \Lambda_{T}(\hat{\rho}_0))}\\
&= \supbra{\hat{{\Pi}}_c }Q_{\Lambda_{\Pbb}\circ \Lambda_{T}}^{m-1}\supket{\tilde{\rho}_0} 
\label{eq:p_I_m_deviation}
\end{align}
where $\supket{\tilde{\rho}_0} = \Lambda_{\Pbb} \circ \Lambda_{T}(\hat{\rho}_0)$.
Let $\Lambda := \Lambda_{\Pbb}\circ \Lambda_T$ with condensed-operator representation $Q$.
Since the $(i,j)$-th element of $Q$ is
$Q_{ij}=\tr{{\Pi}_{c^{i-1}lc^{n-i}}\Lambda_{\Pbb}\circ \Lambda_{T}\pbra{\widetilde{\Pi}_{c^{j-1}lc^{n-j}}}}$ for $i\in \cbra{0,1,\ldots, n}$,
then we have
\begin{align}
Q_{ij} &= 2^{n+1}\bar{p}_{\Pbb} \epsilon_T, \forall i\ne j\in [n]\\
Q_{0i} &= 2Q_{i0} = 2(\epsilon_T + \bar{p}_{\Pbb}) -(n+1)2^{n+1}\bar{p}_{\Pbb}\epsilon_T, \forall i\in [n]\\
Q_{ii} &= 1-\sum_{j\ne i}Q_{ji}, \forall i\in \cbra{0,1,...,n}.
\label{eq:Qmat_explanation}
\end{align}
We also provide the details for the representation of $Q$ in  Appendix \ref{app:condensend_continuousTwo}.
Let $\lambda$ be the eigenvalue of $Q$, with the representation of $Q$ we have
\begin{equation}
    \det(Q-\lambda {\Ibb}) =(1 - \lambda) \pbra{ 1- 2(\epsilon_T + \bar{p}_{\Pbb}) + 2^{n+1}\bar{p}_{\Pbb}\epsilon_T- \lambda}^{n-1} \pbra{1 - (n + 2)(\bar{p}_{\Pbb}+\epsilon_T) + (n+1)(n+2)2^n\bar{p}_{\Pbb}\epsilon_T-\lambda} = 0,
\label{eq:ilrb_eigenvalues_cal}
\end{equation}
which implies we have eigenvalues
\begin{align}
\lambda_1= 1- 2(\epsilon_T + \bar{p}_{\Pbb}) + 2^{n+1}\bar{p}_{\Pbb}\epsilon_T,
\end{align}
and
\begin{align}
  \lambda_2 = 1 - (n + 2)(\bar{p}_{\Pbb}+\epsilon_T) + (n+1)(n+2)\bar{p}_{\Pbb}2^n\epsilon_T,
\end{align}
and $\lambda_3 = 1$. Specifically, the multiplicity of $\lambda_1$ equals $n-1$.  We postpone the proof of Eq. \eqref{eq:ilrb_eigenvalues_cal} into Appendix \ref{app:ilrb_eigenvalues}.

The average leakage rate for $T$ gate can then be determined as $L_{T} = \tr{{\Pi}_l \Lambda_{T}({\Pi}_{c}/2^n)} = n\epsilon_T$, and seepage rate $S_{T} = \tr{{\Pi}_l \Lambda_{T}({\Pi}_{c}/(3^n - 2^n))} = \frac{2^n n\epsilon_T}{3^n - 2^n}$.

The single exponential decay result of this theorem for noiseless preparation noise can be obtained from Theorem \ref{thm:ilrb_multiq} and  the properties of the eigenstates for the eigendecomposition of the transition matrix $Q$. We postpone the proof into Appendix \ref{app:iLRB_sp_free}.
\end{proof}
By leveraging of Theorem \ref{thm:ilrb_multiq}, we can estimate $L_T, S_T$ using the fitted $\lambda_{i}$ estimated from iLRB protocol.

\section{Numerical results}
\label{sec:numerical_res}

In this section, we carry out the numerical experiments for the average leakage rate of the multi-qubit Pauli group with the LRB protocol proposed in Section~\ref{sec:leak_rateGen}. Our iLRB protocol proposed in Section \ref{sec:iLRB_protocol} can be applied to few-qubit cases, which is experimentally important to test the leakage and seepage of quantum gates. To support this, we show average leakage rates for iSWAP/SQiSW and CZ gates with prior noise according to the Hamiltonian of superconducting quantum devices in Appendix \ref{app:iswap-noise-model}.

\begin{figure}[t]
    \centering
    \includegraphics[trim = 0mm 46mm 0mm 50mm, clip=true,width = 0.85\textwidth]{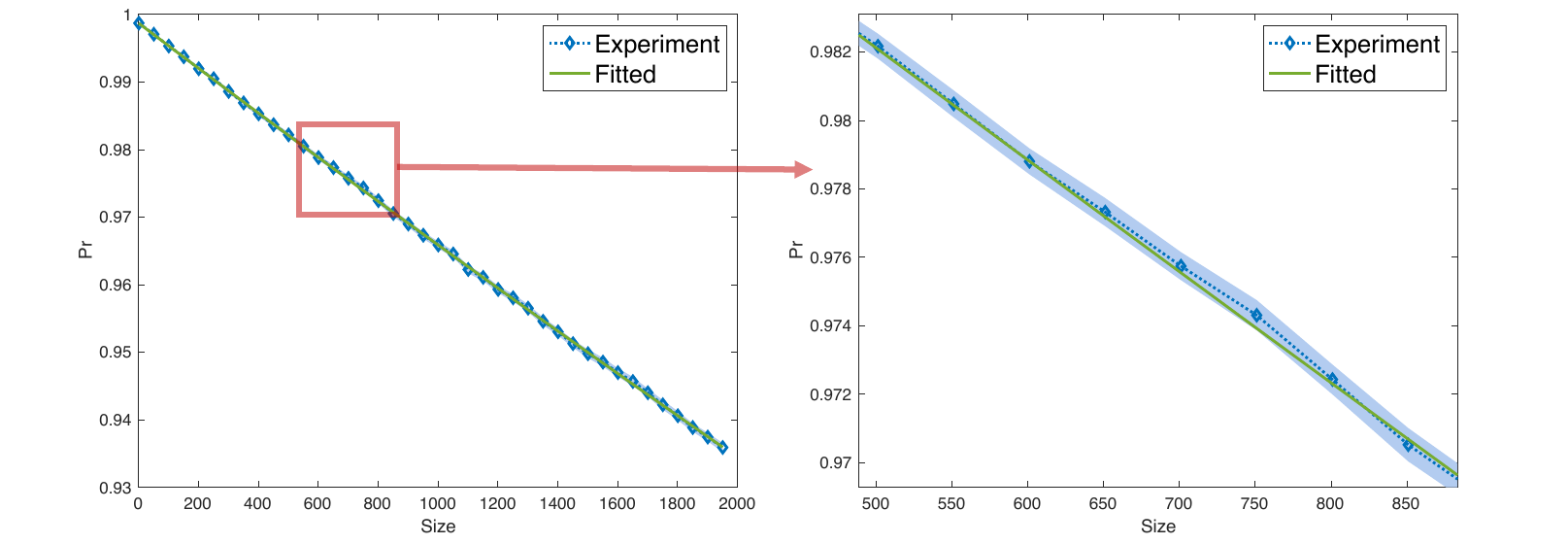}
    \caption{The probabilities of measuring outcomes in computational subspace with circuit size $m$. Here the probability is estimated over 200 randomly selected circuits. The vertical axis denotes the estimation for $p_{{\Pi}_c}$, and the horizontal axis denotes the size of Pauli gates sampled from $n$-qubit Pauli group. (b) is the zoom-in figure of the red box curves of (a).}
    \label{fig:bitDamp}
\end{figure}

\subsection{Average leakage rate for multi-qubit Pauli group}
\label{subsec:exp_multi_leakage}
In this subsection, we numerically implement the LRB protocol introduced in Section~\ref{sec:leak_rateGen} to estimate the average leakage rate of the Pauli group. We list two examples to show the robustness of our protocol.

Example 1 presents a simple noise model where the amplitude damping only happens in a pair of qubits between the set $B_i$ and $\cbra{0,1}^n$ for any $i\in [n]$ in noise model \ref{def:Single-site-leakage-noise}.
To show the robustness of our protocol, in Example 2 we give a more complex noise model that contains all of the amplitude dampings of the qubit pairs between the set $B_i$ and $\cbra{0,1}^n$ for any $i\in [n]$, and we additionally add the amplitude damping for qubit pairs both in the same $B_i$ or $\cbra{0,1}^n$.

\textbf{Example 1.}
For an $n$-qubit circuit, we select a specific form of the noise $\Lambda$ from the noise model in Definition \ref{def:Single-site-leakage-noise}. 
Let $f_i,g_i$ be $n$-trit string denoting basis from computational subspace and leakage subspace respectively, and $f_i := 0\ldots01_i 1_{i-1}0\ldots 0, g_i := 0\ldots02_i 0_{i-1}\ldots 0$  when $2\leq i\leq n$, and {$f_1:= 0...011=f_2, g_1 := 0...02$}. We define the noise model as
\begin{align}
\Lambda(\rho) =
E_0 \rho E_0 +
\sum_{1\leq i\leq n} F_{i} \rho F_i^\dagger
+ \sum_{1\leq i\leq n} G_{i} \rho G_i^\dagger
\end{align}
where 
\begin{align}
    E_0 = \sqrt{1 -p_{1} - p_{2}} \ket{f_1}\bra{f_1}  + \sum_{3\leq i\leq n} \sqrt{1 - p_{i}} \ket{f_i}\bra{f_i} + \sum_{1\leq i\leq n} \sqrt{1 - q_{i}} \ket{g_i}\bra{g_i} + \sum_{i\in S}\ket{i}\bra{i}
\end{align}
where $S = \cbra{0,1,2}^n\backslash \cbra{f_i,g_i|i\in[n]}$,
and
\begin{align}
    F_i = \sqrt{p_i}\ket{g_i}\bra{f_i}, \quad G_i = \sqrt{q_i}\ket{f_i}\bra{g_i}, \forall i\in [n],
\end{align}
where $p_{i},q_{i}$ are uniformly randomly picked from $[2.5\times 10^{-5},3.75\times 10^{-5}]$ for $i,j\in[n]$. We take the number of qubits $n = 4$ in the numerical experiment. 

To demonstrate the SPAM robustness of the LRB protocol, we choose a specific form of noise in the state preparation and measurement processes. Assume the state preparation process has the depolarizing noise in $\Hcal_c$ and $\Hcal_l$. The resulting initial state can be denoted as
 \begin{equation}
     \hat{\rho}_0 = (1-p_c-p_l)\rho_0 + p_c \frac{{\Pi}_c}{d_c} + p_l \frac{{\Pi}_l}{d_l}
     \label{eq:depolarize_pre}
 \end{equation}
 where $\rho_0 = \ket{0}\bra{0}$, and $p_c,p_l$ are the depolarize probabilities with $p_c + p_l\leq 1$.
The measurement noise is modeled as a perfect computational basis measurement followed by independent classical probabilistic transitions on each individual site. The probability transition matrix associated with site $j$ is denoted as
\begin{equation}
    \Lambda_{M,j} = \begin{pmatrix}
    1 - \eta_{j0}-\eta_{l_{j0}} & \eta_{j1} & \eta_{s_{j0}}\\
    \eta_{j0} & 1 - \eta_{j1} - \eta_{l_{j1}} & \eta_{s_{j1}}\\
    \eta_{l_{j0}} & \eta_{l_{j1}} & 1- \eta_{s_{j0}} - \eta_{s_{j1}}
    \end{pmatrix},
    \label{eq:meas_noise_model}
\end{equation}
where $\eta_{j0}, \eta_{j1}$ are  0-flip-to-1, and 1-flip-to-0 probabilities respectively, $\eta_{l_{ji}},\eta_{s_{ji}}$ are $i$-flip-to-2 and 2-flip-to-$i$ probabilities respectively, where $i\in \cbra{0,1}$ for the $j$-th qubit.

We set parameters $p_c = p_l = 0.0001$, and $\eta_{j0} = 0.05, \eta_{j1} = 0.1, \eta_{l_{j0}}=\eta_{s_{j0}} = 0.0001, \eta_{l_{j1}} = \eta_{s_{j1}} = 0.0005$ for any $j\in [n]$. The number of qubits $n = 4$.
We depict the probabilities of measuring outcomes in the computational subspace with the circuit size of Pauli gates as in Fig. \ref{fig:bitDamp}. Note that here we regard a Pauli gate as a series of Pauli X/Y/Z gates without interaction.
The {theoretical} leakage rate  $L_{\ave}=\sum_{i}p_i = 3.4\times 10^{-5}$, and seepage rate $S_{\ave}=\sum_{i}q_i=8\times 10^{-6}$. 
We fit the exponential decay curve with the LRB protocol proposed in Section \ref{sec:lrb_specific}. 
By Theorem \ref{thm:lrb_specific} and Corollary \ref{cor:lrb_arp} we see that if $p_i$ and $q_j$ are close to each other for all of $i,j\in [n]$, then the probability $p_{{\Pi}_c}$ will be approximately collapse to a two-exponential decay with $\lambda_1\approx 1 - 2\bar{p}_{\Pbb}$, and $\lambda_0 \approx 1 - (n+2)\bar{p}_{\Pbb}$.
When the state preparation noise is small,
we can approximate $\bar{p}_{\Pbb}$ via a single exponential decay $\bar{p}_{\Pbb}(m)=A+B\lambda_0^m$ for some constants $A,B$.
We fit the experimental data to a single exponential decay curve to obtain $\hat\lambda_0 = 0.999957\pm 1.2\times 10^{-5}$. 
Then the average error $\bar{p} = (1-\hat\lambda_0)/(n+2) = (7.14\pm 2.00)\times 10^{-6}$, and thus the estimated average leakage rate $\hat{L}_{\ave} = n\bar{p}=(2.86\pm 0.80)\times 10^{-5}$ and seepage rate $\hat{S}_{\ave} = \frac{n2^{n}\bar{p}}{3^n-2^{n}} =(7.03\pm 1.97)\times 10^{-6}$. The estimated results are consistent with the theoretical ones within the errors of statistics, which verify the validity of the LRB protocol.

\textbf{Example 2.} 
The SPAM noise is set the same way as in Example 1. To show the robustness of the LRB protocol, we choose noise $\Lambda$ which contains all of the flips (1) between subspace $\Hcal_{l,(1)}$ and computational subspace $\Hcal_c$, and (2) inside each subspace $\Hcal_{\bm k}$ for all $\bm k\in \cbra{c,l}^n$. We choose the number of qubits $n=3$. The noise strength $p_{ij}$ is picked uniformly and randomly from interval $ 10^{-3}[1,1 + 10^{-5}]$ for $(i,j)$ in $(\Hcal_c,\Hcal_{l,(1)})\cup (\Hcal_{l,(1)}, \Hcal_{c})$ and $p_{ij}$ is picked uniformly and randomly from interval $10^{-6}[1,1 + 10^{-5}]$ for $i$ and $j$ both in $\Hcal_{\bm k}$ and $i\ne j, \bm k\in \cbra{c,l}^n$.
 By Theorem \ref{thm:lrb_specific}, the theoretical average leakage and seepage rates are $L_{\ave} = \sum_{i = 1}^n p_i = 1.51\times 10^{-5}$ and $S_{\ave} = \frac{2^n}{3^n-2^n}\sum_{i = 1}^n q_i = 6.41\times 10^{-6}$ respectively. By Corollary \ref{cor:lrb_arp}, we fit the experimental data using a single exponential decay curve and obtain $
\hat{\lambda} = 0.999974\pm 1.046\times 10^{-5}$.
Then the average error $\bar{p} = (1 - \hat{\lambda})/(n+2) = (5.19\pm 2.09)\times 10^{-6}$, and the estimated average leakage rate $\hat{L}_{\ave} = (1.56\pm 0.63)\times 10^{-5}$ and average seepage rate $\hat{S}_{\ave} = (6.56\pm 2.64)\times 10^{-6}$. The numerical results validate the LRB protocol and demonstrate that the noise in the computational subspace does not affect the average leakage rate. We depict the probabilities of measuring outcomes in the computational subspace with the circuit size of Pauli gates as in Fig. \ref{fig:lrb_randp_decay}.

\begin{figure}[t]
    \centering
    \includegraphics[trim = 0mm 46mm 0mm 50mm, clip=true,width = 0.85\textwidth]{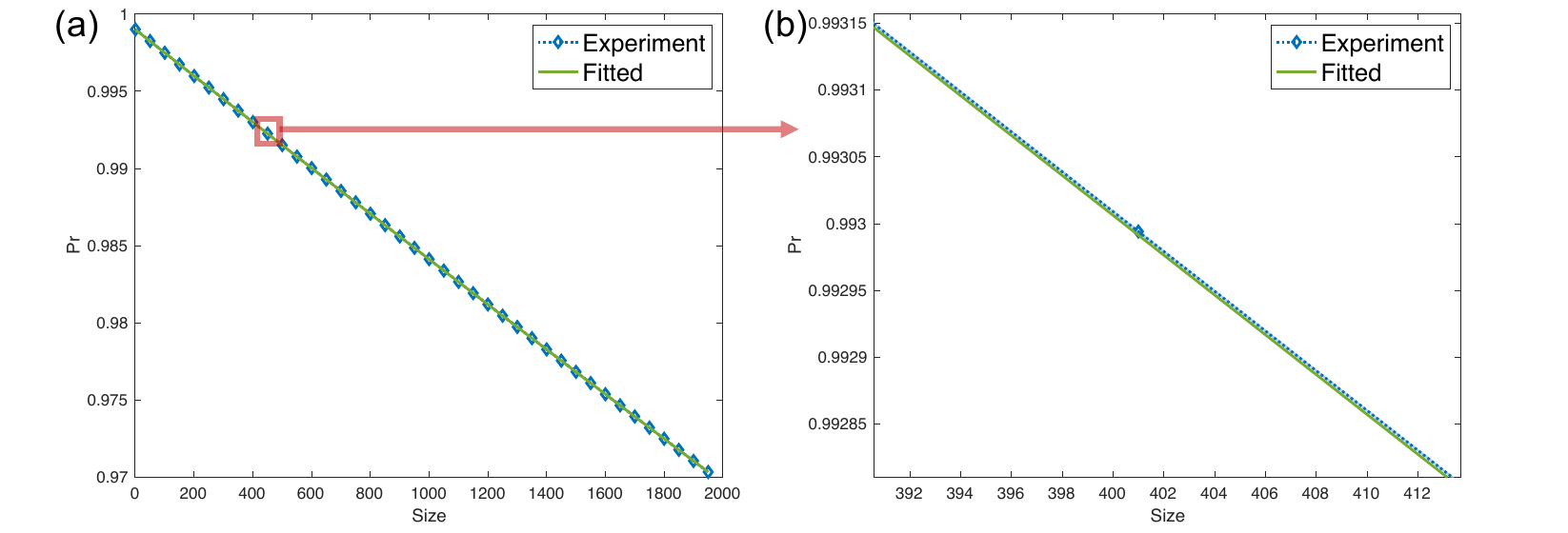}
    \caption{The probabilities of measuring outcomes in computational subspace with circuit size $m$. The estimation is over 200 randomly selected circuits. The vertical axis denotes the estimation for $p_{{\Pi}_c}$, and the horizontal axis denotes the Pauli gates sampled from $n$-qubit Pauli group. (b) is the zoom-in figure of the red box curves of (a).}
    \label{fig:lrb_randp_decay}
\end{figure}

\subsection{Average leakage rate for specific gates}
\label{subsec:two_qubit_num}

One important application of the iLRB protocol in Section~\ref{sec:iLRB_protocol} is measuring leakage of experimentally realized two-qubit quantum gates. Noise in real quantum gates can be very hard to characterize due to the complexity of gate schemes. For example, in the flux-tunable superconducting quantum devices, to implement a two-qubit $\iswap$ gate, the two qubits are brought to resonance adiabatically, left alone to evolve for some time duration, and finally detuned adiabatically back to their normal working frequencies~\cite{Krantz_2019}. Both the adiabatic evolution and the resonant evolution might lead to leakage and seepage. If one carries out the iLRB protocol for some specific gates proposed in Section~\ref{sec:iLRB_protocol}, one would theoretically get one decay curve that consists of multiple exponents. A general multi-exponential decay curve is hard to fit due to statistical errors in real quantum experiments. To simplify the problem, we focus on the leakage damping noise models given in Definitions \ref{def:simplifiedSingle_iteNoise_iLRB}(The explicit form of the two-qubit case is given below). It can make the data fitting and processing more manageable. These simplified noise models are supported by the Hamiltonian evolution of the target two-qubit gates.

\subsubsection{Average leakage rate analysis}
The leakage damping noise model for $\iswap/\SQiSW$ gate is shown below. This noise model is supported by qubits' Hamiltonian evolution. See more details in Appendix~\ref{app:iswap-noise-model}. 
\begin{align}
\begin{aligned}
  \Lambda_{\iswap}(\rho) &=
  E_0 \rho E_0^\dagger + \sum_{(k,k')\in \Scal}E_{kk'}\rho E_{kk'}^\dagger
\label{eq:noise_iswap_total}
\end{aligned}
\end{align}
where $\Scal = \cbra{(02,11),(11,02),(20,11),(11,20)}$, and
\begin{equation}
\begin{aligned}
    E_{kk'} &= \sqrt{\epsilon}\ket{k'}\bra{k}
    ,\forall (k,k')\in \Scal,\\
        E_0 &= \sqrt{\Ibb - \sum_{(k,k')\in \Scal}E_{kk'}^\dagger E_{kk'}},
    \label{eq:iswap_noise_cptp}
\end{aligned}
\end{equation}
where $\epsilon \in [0,1/2]$. This noise model contains one parameter $\epsilon$ that remained to be fitted by the iLRB experiment. Since this noise model belongs to the noise model in Def. \ref{def:simplifiedSingle_iteNoise_iLRB}, the average leakage rate of these gates can be formalized with Theorem \ref{thm:ilrb_multiq}.

Another commonly realized two-qubit gate in flux-tunable superconducting quantum devices is the CZ gate, of which leakage damping noise model reads (See Appendix~\ref{app:iswap-noise-model} for more details)
\begin{align}
\begin{aligned}
  \Lambda_{G}(\rho) &=
  E_0 \rho E_0 + \sum_{(k,k')\in \Scal}E_{kk'}\rho E_{kk'}^\dagger
\label{eq:noise_cz_total}
\end{aligned}
\end{align}
where  $\Scal = \cbra{(02,11),(11,02),(20,11),(11,20)}$ and
\begin{equation}
\begin{aligned}
    E_0 &= \sqrt{ 1- \epsilon_1 }\ket{02}\bra{02} + \sqrt{1 - \epsilon_1 - \epsilon_2 }\ket{11}\bra{11} + \sqrt{1 - \epsilon_2}\ket{20}\bra{20} + {\Pi}_{\Hcal\backslash\cbra{02,11,20}},\\
    E_{02,11}&=\sqrt{\epsilon_1}\ket{11}\bra{02},E_{11,02}=\sqrt{\epsilon_1}\ket{02}\bra{11},E_{20,11}=\sqrt{\epsilon_2}\ket{11}\bra{20},E_{11,20}=\sqrt{\epsilon_2}\ket{20}\bra{11}.
    \label{eq:cphase_noise_cptp}
\end{aligned}
\end{equation}
Similar to iSWAP/SQiSW gates, the noise model of the CZ gate learned from Hamiltonian evolution can be represented as noise model \eqref{eq:noise_cz_total}.
Since usually, the noise for single-qubit gates is much lower than that of the two-qubit gates, we make the assumption that Pauli gates are noiseless. Comparing Eq.~(\ref{eq:iswap_noise_cptp}) with Eq.~(\ref{eq:cphase_noise_cptp}), one finds the leakage damping noise model for $\iswap/\SQiSW$ gate can be treated as a special case with $\epsilon_1=\epsilon_2=\epsilon$. Thus for the more general leakage damping noise model in Eq.~(\ref{eq:cphase_noise_cptp}), we provide the following corollary for the data analysis after carrying out the iLRB protocol.
\begin{corollary}
For two-qubit target gate $\Tcal$ with noisy implementation $\hat{\Tcal} = \Tcal \circ\Lambda_{\text{G}}$, where $\Lambda_{\text{G}}$ has the form defined in Eq. \eqref{eq:noise_cz_total}, and we assume the Pauli gates are noiseless. Then by performing the iLRB protocol, the expectation of the output probability is $\Ebb\sbra{p_{{\Pi}_c}} = A + B_1 \lambda_1^m + B_2 \lambda_2^m$, where $\lambda_i\in\cbra{1-\frac{3}{8}\epsilon_1 - \frac{3}{8}\epsilon_2\pm \frac{1}{8}\sqrt{9\epsilon_1^2 -14\epsilon_1\epsilon_2 + 9\epsilon_2^2}}$, and the average leakage rates {$L =\frac{\epsilon_1 + \epsilon_2}{4},$ and $ S= \frac{\epsilon_1 + \epsilon_2}{5}$.}
\label{coro:leak_two_sec}
\end{corollary}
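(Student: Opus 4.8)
The plan is to reduce the iLRB expectation (with noiseless Paulis) to the condensed-operator form already established in Theorem \ref{thm:lrb_global}, to observe that the \emph{ideal} target gate drops out of that form entirely, and then to diagonalize the resulting $4\times 4$ transition matrix by hand.

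\textbf{Reduction.} Since the Pauli gates are noiseless, $\hat{\Pcal}_i=\Pcal_i$ and $\hat{\Tcal}=\Tcal\circ\Lambda_{\text{G}}$, so averaging each $\Pcal_i$ independently over $\Pbb_n$ and invoking Lemma \ref{lem:PerfectPauli} gives $p_{\Pi_c}(m)=\tr{\hat{\Pi}_c\,(\bar{\Pcal}\circ\Tcal\circ\Lambda_{\text{G}})^m(\hat{\rho}_0)}$. The key point is that $\bar{\Pcal}\circ\Tcal=\bar{\Pcal}$: the ideal unitary channel $\Tcal=\Pi_l\oplus U$ fixes every subspace $\Hcal_{\bm i}$, hence $\Tcal^\dagger(\Pi_{\bm i})=\Pi_{\bm i}$ for all $\bm i$, and therefore $\bar{\Pcal}(\Tcal(\rho))=\sum_{\bm i}\tr{\Tcal^\dagger(\Pi_{\bm i})\rho}\widetilde{\Pi}_{\bm i}=\bar{\Pcal}(\rho)$. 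Consequently $p_{\Pi_c}(m)=\tr{\hat{\Pi}_c\,(\bar{\Pcal}\circ\Lambda_{\text{G}})^m(\hat{\rho}_0)}$, and exactly the chain of equalities in the proof of Theorem \ref{thm:lrb_global} gives $p_{\Pi_c}(m)=\supbra{\hat{\Pi}_c}Q^{m-1}\supket{\tilde{\rho}_0}$ with $Q:=Q_{\Lambda_{\text{G}}}$ and $\tilde{\rho}_0:=\Lambda_{\text{G}}(\hat{\rho}_0)$. In short, for an ideal target gate the iLRB decay is governed purely by $\Lambda_{\text{G}}$.

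\textbf{The $Q$ matrix and its spectrum.} For $n=2$ the index set is $\{c,l\}^2=\{cc,cl,lc,ll\}$ and $Q_{\bm i,\bm j}=\tr{\Pi_{\bm i}\Lambda_{\text{G}}(\widetilde{\Pi}_{\bm j})}$. Because the Kraus operators in \eqref{eq:cphase_noise_cptp} only move population among $\ket{02},\ket{11},\ket{20}$ and act trivially elsewhere, reading off populations shows $Q$ is block-diagonal: the $\Hcal_{ll}$ sector is an isolated fixed point, and on the ordered basis $(cc,cl,lc)$ one obtains the column-stochastic matrix
\[
Q'=\begin{pmatrix}1-\tfrac{\epsilon_1+\epsilon_2}{4} & \tfrac{\epsilon_1}{2} & \tfrac{\epsilon_2}{2}\\[2pt] \tfrac{\epsilon_1}{4} & 1-\tfrac{\epsilon_1}{2} & 0\\[2pt] \tfrac{\epsilon_2}{4} & 0 & 1-\tfrac{\epsilon_2}{2}\end{pmatrix},
\]
which is precisely the $n=2$ instance of \eqref{eq:Q_specificNoise} with $p_1=q_1=\epsilon_2/4$, $p_2=q_2=\epsilon_1/4$ (indeed $\Lambda_{\text{G}}$ is a single-site leakage damping noise as in Definition \ref{def:Single-site-leakage-noise}). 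Writing $a=\epsilon_1/4$, $b=\epsilon_2/4$ and $\mu=1-\lambda$, expanding $\det(Q'-\lambda\Ibb)$ and using that $\mu=0$ is a root (column stochasticity) factors the characteristic polynomial as $\mu\,(\mu^2-3(a+b)\mu+8ab)$, so the nontrivial eigenvalues are $\mu=\tfrac12\big(3(a+b)\pm\sqrt{9a^2-14ab+9b^2}\big)$, i.e. $\lambda=1-\tfrac38\epsilon_1-\tfrac38\epsilon_2\mp\tfrac18\sqrt{9\epsilon_1^2-14\epsilon_1\epsilon_2+9\epsilon_2^2}$. Together with the two eigenvalue-$1$ directions (the stochastic fixed point of $Q'$ and the isolated $\Hcal_{ll}$ sector), the spectral decomposition of $Q$ in $\supbra{\hat{\Pi}_c}Q^{m-1}\supket{\tilde{\rho}_0}$ gives $p_{\Pi_c}(m)=A+B_1\lambda_1^m+B_2\lambda_2^m$ with $\lambda_{1,2}$ as stated, and Theorem \ref{thm:lrb_global} applied to $\Lambda_{\text{G}}$ yields $L=1-Q_{cc,cc}=\tfrac{\epsilon_1+\epsilon_2}{4}$ and $S=\tfrac{1}{3^2-2^2}\big(\dim(\Hcal_{cl})Q_{cc,cl}+\dim(\Hcal_{lc})Q_{cc,lc}\big)=\tfrac15(\epsilon_1+\epsilon_2)$.

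\textbf{Main obstacle.} The only genuinely delicate step is the reduction: establishing that an arbitrary ideal target gate disappears from the condensed-operator representation (via $\bar{\Pcal}\circ\Tcal=\bar{\Pcal}$), after which everything is the $3\times3$ linear algebra already set up for Theorems \ref{thm:lrb_global}--\ref{thm:ilrb_multiq}. One should also note the mild subtlety that the noisy effect $\hat{\Pi}_c$ may overlap the $\Hcal_{ll}$ fixed point, but this only feeds the constant $A$ and leaves the two-exponent form intact. As a consistency check, setting $\epsilon_1=\epsilon_2=\epsilon$ makes the discriminant equal $4\epsilon^2$, recovering $\lambda\in\{1-\tfrac{\epsilon}{2},\,1-\epsilon\}$ and $L=\epsilon/2$, in agreement with Theorem \ref{thm:ilrb_multiq} for the $\iswap/\SQiSW$ noise model at $n=2$ and $\bar{p}_{\Pbb}=0$.
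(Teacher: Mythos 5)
Your proposal is correct and follows essentially the same route as the paper: the paper's Appendix \ref{app:gen_noise_three_para} writes down the same $3\times 3$ condensed transition matrix (as the $\epsilon_3=0$ special case of the three-parameter model in Corollary \ref{coro:leak_three_para}), reads the stated eigenvalues off its characteristic polynomial, and obtains $L$ and $S$ from the entries of $Q$ exactly as you do. The only difference is one of detail, not of method: you explicitly justify the reduction $\bar{\Pcal}\circ\Tcal=\bar{\Pcal}$ (which the paper merely asserts via $\Ebb\sbra{\Pcal_j\circ\Tcal\circ\Lambda_T}=\Ebb\sbra{\Pcal_j\circ\Lambda_T}$) and carry out the factorization of the characteristic polynomial by hand, which is a welcome but not substantively different elaboration.
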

We postpone the proof of this corollary in Appendix \ref{app:gen_noise_three_para}.
{Here we only need the assumption that the noise of $T$ gate is right hand side of $\Tcal$, since $\Ebb\sbra{\Pcal_j\circ\Tcal \circ \Lambda_T} = \Ebb\sbra{\Pcal_j\circ\Lambda_T}$.} By Corollary \ref{coro:leak_two_sec}, we can get the average leakage rates
by fitting $\lambda_1,\lambda_2$ from the exponential curve $p_{{\Pi}_c}$.

\subsubsection{Numerical results for iSWAP leakage rate estimation}
Here we numerically analyze the average leakage rate for any two-qubit gates with leakage noise model $\Lambda_{\iswap}$.
To demonstrate the SPAM robustness of the iLRB protocol, we implement measurement noise which has the same setting as in subsection \ref{subsec:exp_multi_leakage}. Here we choose a smaller preparation noise with   $p_c = p_l = 10^{-6}$ in Eq. \eqref{eq:depolarize_pre}.
The leakage noise of the Pauli gate is chosen as $\bar{p}_{\Pbb} = 5\times 10^{-6}$. 
The noise rate of the target gate is chosen as $\bar{\epsilon}_{\iswap} = \epsilon_{\iswap}/4 = 5\times 10^{-5}$. Hence $L_{\iswap} = 2\bar\epsilon_{\iswap} =  10^{-4}, S_{\iswap} =\frac{8}{5}\bar\epsilon_{\iswap}= 8\times 10^{-5}$. By  Theorem \ref{thm:ilrb_multiq}, we have the theoretical average leakage and seepage rates equal
\begin{align*}
    L_{\iswap} = \frac{\lambda_{\Pbb}-\lambda}{2(3\lambda_{\Pbb}-2)},
     S_{\iswap} = \frac{2(\lambda_{\Pbb}-\lambda)}{5(3\lambda_{\Pbb}-2)}.
\end{align*}
\begin{figure}[thbp]
    \centering
    \includegraphics[trim = 0mm 48mm 0mm 50mm, clip=true,width = 0.85\textwidth]{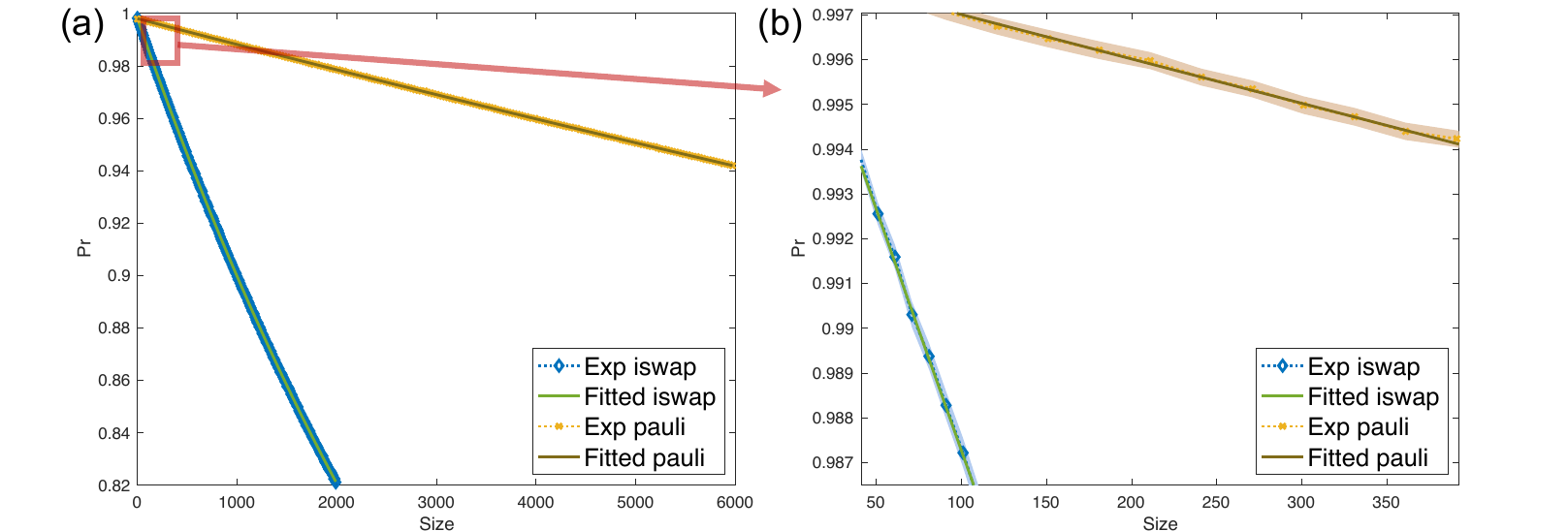}
    \caption{The probabilities of measuring computational subspace with the number of circuit size of iLRB protocol. (b) is the zoom-in view of (a) with an error band. Here the probability is estimated over 500 randomly selected circuits. The vertical axes for (a) and (b) denote the estimation for $p_{{\Pi}_c},p_{{\Pi}_c,\Pbb}$ respectively, and horizontal axes denote the size of Pauli gates sampled from $n$-qubit Pauli group. (b) is the zoom-in figure of the red box curves of (a).}
    \label{fig:iswap_fit}
\end{figure}

Figure \ref{fig:iswap_fit} gives the fitted curve from simulated experimental results. From the figure, we can fit the exponent $\lambda =
0.999782(2)$ and pauli noise $\lambda_{\Pbb}\approx 
0.999980(1)$. 
Hence the estimated average leakage and seepage rates are $\hat{L}_{\iswap} = 9.9(2)\times 10^{-5}$ and $ \hat{S}_{\iswap}=7.9(2)\times 10^{-5}$ respectively,
which verifies theoretical values.

\section{Discussion}
\label{sec:discuss}

In this paper, we proposed a framework of \emph{leakage randomized benchmarking} that addresses the limitations of previous proposals and is more versatile in its applicability to a wider range of gates. The LRB protocol is particularly suitable for multi-qubit scenarios in the presence of SPAM noise.
We presented an interleaved variant of the LRB protocol (iLRB) and conducted a thorough analysis of the leakage and seepage rates under various noise models, with a focus on the leakage-damping noise model and two-qubit gates in superconducting quantum devices.
We carried out numerical experiments and see a good agreement between the theoretical leakage/seepage rates and the numerical ones for multiple gates. As the iLRB protocol is sensitive only to leakage, rather than the specific logic gate in computational subspace, it can be easily extended to other two-qubit gates realized in experiments. We leave the experimental demonstration of the iLRB protocol for future work.

One major difference between LRB and RB protocols is that single-exponential decays are guaranteed under general assumptions for RB protocols if the computational space is sufficiently twirled. However, leakage subspaces are hardly affected by any gate schemes designed on purpose for the computational subspaces, causing LRB to exhibit much more complicated decay behavior. Alternatively, gates that can twirl the leakage subspace might lead to cleaner decay behavior, but would pose somewhat unrealistic assumptions on the gate implementation that might not be experiment-friendly. In our work, we choose not to pose assumptions about the gates themselves, but instead require prior knowledge of the leakage noise models. Such prior knowledge facilitates data processing and interpretation, but their validity needs to be established either experimentally or through first-principle error analysis. Although we have proposed two simplifications under which the LRB behaviors are better understood, a more case-by-case study might be needed for other physically oriented noise models.
As a complement, in Appendix \ref{app:gen_noise_three_para} we also analyze the leakage information we can gain in a more complex noise model.

{We have posed several intriguing open questions for future exploration:
\begin{itemize}
\item[(1)] 
Could we apply the LRB protocol discussed in Section \ref{sec:leak_rateGen} to compute the average leakage rate for the Pauli group, considering other Markovian and gate-independent, time-independent noise types aside from leakage damping noise?

Looking ahead, is it feasible to extend our protocol to address non-Markovian noise within the entire space ?
\item[(2)] Is the requirement for the noise channel and gate operation to commute essential when benchmarking any gate?
\item[(3)] The iLRB protocol aims to evaluate the leakage rate of the iSWAP/SQiSW gate. Experimental verification is anticipated as the next step in future work.
\end{itemize}
}

\section*{Acknowledgement}
We would like to thank Changyu Yao for the helpful discussions. This work was supported by Alibaba Group through the Alibaba Innovative Research Program, and the National Natural Science Foundation of China (Grant No.~12175003).


\begin{thebibliography}{10}

\bibitem{Harper_2020}
Robin Harper, Steven~T. Flammia, and Joel~J. Wallman.
\newblock Efficient learning of quantum noise.
\newblock {\em Nature Physics}, 16(12):1184–1188, Aug 2020.

\bibitem{Sun21Mitigating}
Jinzhao Sun, Xiao Yuan, Takahiro Tsunoda, Vlatko Vedral, Simon~C. Benjamin, and
  Suguru Endo.
\newblock Mitigating realistic noise in practical noisy intermediate-scale
  quantum devices.
\newblock {\em Phys. Rev. Appl.}, 15:034026, Mar 2021.

\bibitem{knill2008randomized}
Emanuel Knill, Dietrich Leibfried, Rolf Reichle, Joe Britton, R~Brad Blakestad,
  John~D Jost, Chris Langer, Roee Ozeri, Signe Seidelin, and David~J Wineland.
\newblock Randomized benchmarking of quantum gates.
\newblock {\em Physical Review A}, 77(1):012307, 2008.

\bibitem{carignan2015characterizing}
Arnaud Carignan-Dugas, Joel~J Wallman, and Joseph Emerson.
\newblock Characterizing universal gate sets via dihedral benchmarking.
\newblock {\em Physical Review A}, 92(6):060302, 2015.

\bibitem{PhysRevA.97.062323}
Winton~G. Brown and Bryan Eastin.
\newblock Randomized benchmarking with restricted gate sets.
\newblock {\em Physical Review A}, 97:062323, Jun 2018.

\bibitem{francca2018approximate}
Daniel~Stilck Fran{\c{c}}a and AK~Hashagen.
\newblock Approximate randomized benchmarking for finite groups.
\newblock {\em Journal of Physics A: Mathematical and Theoretical},
  51(39):395302, 2018.

\bibitem{helsen2019new}
Jonas Helsen, Xiao Xue, Lieven~MK Vandersypen, and Stephanie Wehner.
\newblock A new class of efficient randomized benchmarking protocols.
\newblock {\em npj Quantum Information}, 5(1):1--9, 2019.

\bibitem{claes2021character}
Jahan Claes, Eleanor Rieffel, and Zhihui Wang.
\newblock Character randomized benchmarking for non-multiplicity-free groups
  with applications to subspace, leakage, and matchgate randomized
  benchmarking.
\newblock {\em PRX Quantum}, 2(1):010351, 2021.

\bibitem{erhard2019characterizing}
Alexander Erhard, Joel~J Wallman, Lukas Postler, Michael Meth, Roman Stricker,
  Esteban~A Martinez, Philipp Schindler, Thomas Monz, Joseph Emerson, and
  Rainer Blatt.
\newblock Characterizing large-scale quantum computers via cycle benchmarking.
\newblock {\em Nature communications}, 10(1):1--7, 2019.

\bibitem{magesan2012efficient}
Easwar Magesan, Jay~M Gambetta, Blake~R Johnson, Colm~A Ryan, Jerry~M Chow,
  Seth~T Merkel, Marcus~P Da~Silva, George~A Keefe, Mary~B Rothwell, Thomas~A
  Ohki, et~al.
\newblock Efficient measurement of quantum gate error by interleaved randomized
  benchmarking.
\newblock {\em Physical Review Letters}, 109(8):080505, 2012.

\bibitem{chasseur15Complete}
T.~Chasseur and F.~K. Wilhelm.
\newblock Complete randomized benchmarking protocol accounting for leakage
  errors.
\newblock {\em Physical Review A}, 92:042333, Oct 2015.

\bibitem{wallman2015robust}
Joel~J. Wallman, Marie Barnhill, and Joseph Emerson.
\newblock Robust characterization of loss rates.
\newblock {\em Physical Review Letters}, 115:060501, Aug 2015.

\bibitem{Heinrich22General}
Markus Heinrich, Martin Kliesch, and Ingo Roth.
\newblock General guarantees for randomized benchmarking with random quantum
  circuits.
\newblock {\em arXiv preprint arXiv:2212.06181}, 2022.

\bibitem{Merkel2021Randomized}
Seth~T. Merkel, Emily~J. Pritchett, and Bryan~H. Fong.
\newblock Randomized benchmarking as convolution: Fourier analysis of gate
  dependent errors.
\newblock {\em Quantum}, 5:581, nov 2021.

\bibitem{flammia_direct_2011}
Steven~T. Flammia and Yi-Kai Liu.
\newblock Direct {Fidelity} {Estimation} from {Few} {Pauli} {Measurements}.
\newblock {\em Physical Review Letters}, 106(23):230501, June 2011.
\newblock Publisher: American Physical Society.

\bibitem{da_silva_practical_2011}
Marcus~P. da~Silva, Olivier Landon-Cardinal, and David Poulin.
\newblock Practical {Characterization} of {Quantum} {Devices} without
  {Tomography}.
\newblock {\em Physical Review Letters}, 107:210404, November 2011.
\newblock Publisher: American Physical Society.

\bibitem{PhysRevA.103.042604}
Matthew Ware, Guilhem Ribeill, Diego Rist\`e, Colm~A. Ryan, Blake Johnson, and
  Marcus~P. da~Silva.
\newblock Experimental pauli-frame randomization on a superconducting qubit.
\newblock {\em Physical Review A}, 103:042604, Apr 2021.

\bibitem{PhysRevLett.129.010502}
Feng Bao, Hao Deng, Dawei Ding, Ran Gao, Xun Gao, Cupjin Huang, Xun Jiang,
  Hsiang-Sheng Ku, Zhisheng Li, Xizheng Ma, Xiaotong Ni, Jin Qin, Zhijun Song,
  Hantao Sun, Chengchun Tang, Tenghui Wang, Feng Wu, Tian Xia, Wenlong Yu, Fang
  Zhang, Gengyan Zhang, Xiaohang Zhang, Jingwei Zhou, Xing Zhu, Yaoyun Shi,
  Jianxin Chen, Hui-Hai Zhao, and Chunqing Deng.
\newblock Fluxonium: An alternative qubit platform for high-fidelity
  operations.
\newblock {\em Physical Review Letters}, 129:010502, Jun 2022.

\bibitem{sung2021realization}
Youngkyu Sung, Leon Ding, Jochen Braum\"uller, Antti Veps\"al\"ainen, Bharath
  Kannan, Morten Kjaergaard, Ami Greene, Gabriel~O. Samach, Chris McNally,
  David Kim, Alexander Melville, Bethany~M. Niedzielski, Mollie~E. Schwartz,
  Jonilyn~L. Yoder, Terry~P. Orlando, Simon Gustavsson, and William~D. Oliver.
\newblock Realization of high-fidelity cz and $zz$-free iswap gates with a
  tunable coupler.
\newblock {\em Physical Review X}, 11:021058, Jun 2021.

\bibitem{Morvan21Qutrit}
A.~Morvan, V.~V. Ramasesh, M.~S. Blok, J.~M. Kreikebaum, K.~O'Brien, L.~Chen,
  B.~K. Mitchell, R.~K. Naik, D.~I. Santiago, and I.~Siddiqi.
\newblock Qutrit randomized benchmarking.
\newblock {\em Physical Review Letters}, 126:210504, May 2021.

\bibitem{Xue19Benchmarking}
X.~Xue, T.~F. Watson, J.~Helsen, D.~R. Ward, D.~E. Savage, M.~G. Lagally, S.~N.
  Coppersmith, M.~A. Eriksson, S.~Wehner, and L.~M.~K. Vandersypen.
\newblock Benchmarking gate fidelities in a $\mathrm{Si}/\mathrm{SiGe}$
  two-qubit device.
\newblock {\em Physical Review X}, 9:021011, Apr 2019.

\bibitem{wallman2016robust}
Joel~J Wallman, Marie Barnhill, and Joseph Emerson.
\newblock Robust characterization of leakage errors.
\newblock {\em New Journal of Physics}, 18(4):043021, 2016.

\bibitem{wood2018quantification}
Christopher~J Wood and Jay~M Gambetta.
\newblock Quantification and characterization of leakage errors.
\newblock {\em Physical Review A}, 97(3):032306, 2018.

\bibitem{PhysRevLett.116.020501}
Zijun Chen, Julian Kelly, Chris Quintana, R.~Barends, B.~Campbell, Yu~Chen,
  B.~Chiaro, A.~Dunsworth, A.~G. Fowler, E.~Lucero, E.~Jeffrey, A.~Megrant,
  J.~Mutus, M.~Neeley, C.~Neill, P.~J.~J. O'Malley, P.~Roushan, D.~Sank,
  A.~Vainsencher, J.~Wenner, T.~C. White, A.~N. Korotkov, and John~M. Martinis.
\newblock Measuring and suppressing quantum state leakage in a superconducting
  qubit.
\newblock {\em Physical Review Letters}, 116:020501, Jan 2016.

\bibitem{PhysRevA.96.022330}
David~C. McKay, Christopher~J. Wood, Sarah Sheldon, Jerry~M. Chow, and Jay~M.
  Gambetta.
\newblock Efficient $z$ gates for quantum computing.
\newblock {\em Physical Review A}, 96:022330, Aug 2017.

\bibitem{andrews2019quantifying}
Reed~W Andrews, Cody Jones, Matthew~D Reed, Aaron~M Jones, Sieu~D Ha, Michael~P
  Jura, Joseph Kerckhoff, Mark Levendorf, Se{\'a}n Meenehan, Seth~T Merkel,
  et~al.
\newblock Quantifying error and leakage in an encoded si/sige triple-dot qubit.
\newblock {\em Nature nanotechnology}, 14(8):747--750, 2019.

\bibitem{haffner2008quantum}
Hartmut H{\"a}ffner, Christian~F Roos, and Rainer Blatt.
\newblock Quantum computing with trapped ions.
\newblock {\em Physics reports}, 469(4):155--203, 2008.

\bibitem{nielsen2021gate}
Erik Nielsen, John~King Gamble, Kenneth Rudinger, Travis Scholten, Kevin Young,
  and Robin Blume-Kohout.
\newblock Gate set tomography.
\newblock {\em Quantum}, 5:557, oct 2021.

\bibitem{LevinPeresWilmer2006}
David~A. Levin, Yuval Peres, and Elizabeth~L. Wilmer.
\newblock {\em {Markov chains and mixing times}}.
\newblock American Mathematical Society, 2006.

\bibitem{nielsen2002quantum}
M.~A. Nielsen and I.~L. Chuang.
\newblock {\em Quantum Computation and Quantum Information: 10th Anniversary
  Edition}.
\newblock Cambridge University Press, New York, NY, USA, 10th edition, 2011.

\bibitem{Krantz_2019}
P.~Krantz, M.~Kjaergaard, F.~Yan, T.~P. Orlando, S.~Gustavsson, and W.~D.
  Oliver.
\newblock A quantum engineer{\textquotesingle}s guide to superconducting
  qubits.
\newblock {\em Applied Physics Reviews}, 6(2):021318, jun 2019.

\bibitem{chow2012universal}
Jerry~M Chow, Jay~M Gambetta, Antonio~D Corcoles, Seth~T Merkel, John~A Smolin,
  Chad Rigetti, S~Poletto, George~A Keefe, Mary~B Rothwell, John~R Rozen,
  et~al.
\newblock Universal quantum gate set approaching fault-tolerant thresholds with
  superconducting qubits.
\newblock {\em Physical Review Letters}, 109(6):060501, 2012.

\bibitem{wood2011tensor}
Christopher~J Wood, Jacob~D Biamonte, and David~G Cory.
\newblock Tensor networks and graphical calculus for open quantum systems.
\newblock {\em arXiv preprint arXiv:1111.6950}, 2011.

\bibitem{dankert2009exact}
Christoph Dankert, Richard Cleve, Joseph Emerson, and Etera Livine.
\newblock Exact and approximate unitary 2-designs and their application to
  fidelity estimation.
\newblock {\em Physical Review A}, 80(1):012304, 2009.

\bibitem{ambainis2000private}
Andris Ambainis, Michele Mosca, Alain Tapp, and Ronald De~Wolf.
\newblock Private quantum channels.
\newblock In {\em Proceedings 41st Annual Symposium on Foundations of Computer
  Science}, pages 547--553. IEEE, 2000.

\bibitem{helsen2022general}
J.~Helsen, I.~Roth, E.~Onorati, A.H. Werner, and J.~Eisert.
\newblock General framework for randomized benchmarking.
\newblock {\em PRX Quantum}, 3:020357, Jun 2022.

\bibitem{Martinis_2014}
John~M. Martinis and Michael~R. Geller.
\newblock Fast adiabatic qubit gates using only ${\ensuremath{\sigma}}_{z}$
  control.
\newblock {\em Physical Review A}, 90:022307, Aug 2014.

\bibitem{zener1932non}
Clarence Zener.
\newblock Non-adiabatic crossing of energy levels.
\newblock {\em Proceedings of the Royal Society of London. Series A, Containing
  Papers of a Mathematical and Physical Character}, 137(833):696--702, 1932.

\end{thebibliography}

\appendix

\section{The twirling of Pauli group in the Hilbert space $\Hcal$}
\label{app:proof_lem_PerfectPauli}

In this section, we prove Lemma \ref{lem:PerfectPauli}, i.e. $\frac{1}{|\Pbb_n|}\sum_{\Pcal\in \Pbb_n}\Pcal=\bar{\Pcal}$. This is an extension to the result that Pauli twirl turns any state in the computational subspace to a maximally mixed state.

\begin{proof}[Proof of Lemma \ref{lem:PerfectPauli}.]
We first prove the case $n=1$. For any single qubit state $\rho$ with leakage, we have
\begin{align}
    &\frac{1}{|\Pbb_n|}\sum_{\Pcal\in \Pbb_n}\Pcal(\rho)\\
    =&\frac{1}{|\Pbb_n|}\sum_{U\in \pm\{1,i\}\times\{I,X,Y,Z\}} (U\oplus {\Pi}_l)\rho(U^\dag\oplus {\Pi}_l)\\
    =&\frac{1}{|\Pbb_n|}\sum_{U\in \pm\{1,i\}\times\{I,X,Y,Z\}}\left( U\rho U^\dag +  {\Pi}_l\rho U^\dag+ U\rho {\Pi}_l+ {\Pi}_l\rho {\Pi}_l\right)\\
    =&\tr{\rho {\Pi}_c}\widetilde{{\Pi}}_c + 0 + 0 + \tr{\rho {\Pi}_l}\widetilde{{\Pi}}_l\\
    =&\bar{\Pcal}(\rho).
\end{align}

The middle two terms vanish since $\sum_U U =\sum_U U^\dag = 0 $; the first term follows the twirling property of the Pauli group in the computational subspace and the last one from that $\Hcal_l$ is one-dimensional. For general $n$, we then have
\begin{align}
    &\frac{1}{|\Pbb_n|}\sum_{\Pcal\in \Pbb_n}\Pcal (\cdot)\\
    &=\bigotimes_k \left(\frac{1}{|\Pbb|}\sum_{\Pcal_k\in \Pbb_n}  \Pcal_i (\cdot)\right)_k\\
    &=\bigotimes_k \left( \sum_{i_k\in\{c,l\}}\tr{{\Pi}_{i_k} \cdot}\widetilde{{\Pi}}_{i_k}\right)_k\\
    &=\sum_{\bm i\in\{c,l\}^n}\bigotimes_k (\tr{{\Pi}_{i_k} \cdot}\widetilde{{\Pi}}_{i_k})_k\\
    &=\sum_{\bm i\in\{c,l\}^n} \left(\tr{\left(\bigotimes_k{\Pi}_{(i_k)_k}\right) \cdot}\left(\bigotimes_k\widetilde{{\Pi}}_{(i_k)_k}\right)\right)\\
    &=\sum_{\bm i\in\{c,l\}^n}\tr{{\Pi}_{\bm i}\cdot}\widetilde{{\Pi}}_{\bm i}=\bar{\Pcal}(\cdot),
\end{align}
for any $n$-qubit quantum state $\rho$. We here denote by $\Ccal_k$ a quantum operation $\Ccal$ acting on the $k$th qubit.
\end{proof}

\section{Complete Proof of Theorem \ref{thm:lrb_specific}}
\label{app:lrb_specific}

\begin{proof}
    We prove Theorem ~\ref{thm:lrb_specific} under more general cases where $n=1$ or $q_i=q_{i+1}$ for some $i$. While the eigenvalues of such matrices can be derived from the continuity of roots of polynomials with respect to the coefficients, we prove here that $Q$ is always diagonalizable, even if algebraic multiplicities occur. 
    
    The matrix $Q$ is defined by the two vectors $\Vec{p}$ and $\Vec{q}$. In the following, we use the notation $Q(\Vec{p},\Vec{q})$ in case $\Vec{p}$ and $\Vec{q}$ are to be explicitly specified.
    \begin{itemize}
        \item $n=1$: In this case we have $Q=\begin{bmatrix}
            1-p_1 & 1-x_1\\p_1&x_1
        \end{bmatrix}$ where $x_i=1 - 2q_i$; the two eigenvalues are $1$ and $x_1-p_1$. Note that $x_1-p_1=1$ iff $Q=I$, and therefore $Q$ is always diagonalizable.
        \item There exists $i$ such that $q_i=q_{i+1}$. We prove that $Q$ is similar to $Q'=Q(\Vec{p'},\Vec{q'})$, where $p'_i=p_i+p_{i+1}, p'_{i+1}=0, q'_i=q_i, q'_{i+1} = 0$, and $p'_j=p_j, q'_j=q_j$ for all $j\not\in\cbra{i,i+1}$. Such similarity is given by the following transformation $Q'=AQA^{-1}$, where
        $$A:=\begin{bmatrix}
            1&&&&&\\&1&&&&\\&&\ddots&&&\\&&&1&1&\\&&&p_{i+1}&-p_i&\\&&&&&\ddots\\
        \end{bmatrix}.$$
    \end{itemize}
By repeatedly applying such similar transformations and rearranging the rows and columns, we can reduce $Q$ to a canonical form $\begin{bmatrix}
        Q^*&0\\0&\Sigma
    \end{bmatrix},$ where $\Sigma$ is diagonal, and $Q^*=Q(\Vec{p}^*,\Vec{q}^*)$ where no pairs of entries in $\Vec{q}^*$ collide. $Q^*$ is diagonalizable and hence so is $Q$.
\end{proof}

\section{Proof of Corollary \ref{cor:lrb_arp}}
\label{app:lrb_barp}
By Theorem \ref{thm:lrb_global}, after performing $n$-site LRB protocol, the expectation of the probability for measuring computational basis equals 
\begin{align}
p_{{\Pi}_c} = \supbraket{\hat{{\Pi}}_c|Q_{\Lambda}^{m-1}|\tilde{\rho}_0}.
\end{align}
By eigen-decomposing matrix $Q_{\Lambda}$,
\begin{align}
Q_{\Lambda} = V \Sigma V^{-1},
\end{align}
where $\Sigma$ is the diagonal matrix contains all of the eigenvalues of $Q_{\Lambda}$, and $V$ is the matrix contains all of the associated eigenstates. By Theorem \ref{thm:lrb_specific} we see that there are three different eigenvalues,
\begin{itemize}
    \item [(1)] $\lambda_{0} = 1 - 2\bar{q} - n\bar{p}$ with multiplicity one.
Let the associated eigenstate be $\vec{v}=(v_0,v_1,\ldots, v_n)$;
\item [(2)] $\lambda_{1} = 1 - 2\bar{q}$ with multiplicity $n-1$. Let the associated eigenstates be $u^{(s)} = (u_0^{(s)},u_1^{(s)},\ldots, u_n^{(s)})^T$, where $s\in [n-1]$;
\item [(3)] $\lambda_{2} = 1$ with multiplicity one. Let the associated engenstates be $\vec{w}=(w_0,w_1,\ldots, w_n)$.
\end{itemize}
Hence we have $\Sigma = \diag(\lambda_0,\ldots, \lambda_0, \lambda_1, 1)$. 
Let $Q'$ be the matrix obtained by adding all of the rows $\cbra{1,\ldots, n+1}$ into the $0$-th row of $Q-\lambda \Ibb$, and adding the $k+1$-th row into the $k$-th row for any $k\in \cbra{0,\ldots, n}$ (We assume the index of elements is in $(\cbra{0,\ldots, n},\cbra{0,\ldots, n})$).
By the definition of $Q$, for any $\lambda \in \cbra{\lambda_0,\lambda_1, \lambda_2}$,
\begin{align}
\pbra{Q-\lambda{\Ibb}}\vec{q} &=Q'\vec{q} \\
&=\begin{pmatrix}
1-\lambda & 1-\lambda & 1-\lambda & \ldots & 1-\lambda & 1-\lambda\\
0 & 1 - 2\bar{q}-\lambda & -(1 - 2\bar{q}-\lambda) & \ldots &0 & 0\\
0 & 0 & 1 - 2\bar{q}-\lambda & \ldots &0 & 0\\
0 & 0 & 0 & \ldots &1 - 2\bar{q}-\lambda & -(1 - 2\bar{q}-\lambda)\\
\bar{p} & 0 & 0 & \ldots & 0& 1 - 2\bar{q}-\lambda
\end{pmatrix}\vec{q}\\
&=0,
\end{align}
where  $\vec{q}:=(q_0,\ldots, q_n)\in \Rbb^{n+1}$ is the eigenstate associated with eigenvalue $\lambda$. Then
\begin{align}
&(1-\lambda)(q_0 + \ldots + q_n) = 0,\\
&(1-2\bar{q}-\lambda) (q_i-q_{i+1}) = 0, \forall i\in [n-1],\\
&\bar{p}q_0 + (1-2\bar{q}-\lambda) q_n = 0.
\end{align}
By substituting $\lambda$ into the above equations, we have
\begin{align}
&v_0 + nv_1 = 0, v_i=v_j, \forall j\in[n]\\
& u_0^{(s)} = 0, u_1^{(s)} + \ldots + u_n^{(s)} = 0\\
&w_0 = 2w_1, w_i=w_j, \forall j\in[n]
\end{align}
Therefore, all of $u^{(s)}$ are orthogonal to $\vec{w}$ and $\vec{v}$. Let $V = \sbra{\vec u^{(1)},\vec u^{(2)}, \ldots, \vec u^{(n-1)}, \vec v,\vec w}$. Then $V_{0k} =V_{k0}^{-1}= 0$ for $k\in \cbra{0,1,\ldots, n-2}$. We also give the matrix representation of $V$ and $V^{-1}$ as follows,
\begin{align}
V = \begin{pmatrix}
-n & 0 & 0 & \cdots  & 0 & 0 & 2\\
1 &  1 & 1 & \cdots & 1 &1 & 1\\
1 & -1 & 0 & \cdots & 0&0 & 1\\
1 & 0 & -1 & \cdots & 0&0 & 1\\
\vdots & \vdots & \vdots & \vdots & \vdots & \vdots & \vdots\\
1 & 0 & 0 & \cdots & -1& 0 & 1\\
1 & 0 & 0 & \cdots & 0& -1 & 1
\end{pmatrix},\quad 
V^{-1} = \begin{pmatrix}
-\frac{1}{n+2} & \frac{2}{n(n+2)} & \frac{2}{n(n+2)} & \frac{2}{n(n+2)} & \cdots & \frac{2}{n(n+2)} & \frac{2}{n(n+2)}\\
0 &  \frac{1}{n} & \frac{1}{n}-1 &  \frac{1}{n}& \cdots & \frac{1}{n} & \frac{1}{n}\\
0 &  \frac{1}{n}  &  \frac{1}{n}& \frac{1}{n}-1 & \cdots & \frac{1}{n} & \frac{1}{n}\\
\vdots & \vdots & \vdots & \vdots & \vdots & \vdots & \vdots\\
0 & \frac{1}{n}& \frac{1}{n} & \frac{1}{n} & \cdots & \frac{1}{n} & \frac{1}{n}-1\\
 \frac{1}{n+2}&  \frac{1}{n+2}&\frac{1}{n+2} & \frac{1}{n+2} & \cdots & \frac{1}{n+2} & \frac{1}{n+2}
\end{pmatrix}.
\end{align}
With the state preparation noise-free assumption, we let the vector representation for $\supket{\rho_0}$ be $(1,0,\ldots, 0)$. Since there exist some coefficients $\cbra{\alpha_k}_{k = 1}^n$ such that $\supbra{\hat{{\Pi}}_c} = \sum_{k = 0}^n \alpha_k \supbra{{\Pi}_{c^{k-1}lc^{n-k}}}$, then
\begin{align}
p_{{\Pi}_c}(m)&= \supbraket{\hat{{\Pi}}_c|Q_{\Lambda}^{m-1}|\tilde{\rho}_0}\\
& = \sum_{j=0}^n \alpha_j Q_{\Lambda}^{m-1}(j,0)\\
&=  \sum_{j=0}^n \alpha_j \sum_{k=0}^n V_{jk} \Sigma_k^{m-1} V_{jk}^{-1}\\
&= \sum_{j=0}^n \alpha_j\pbra{ \lambda_0^{m-1}\sum_{k=0}^{n-2} V_{jk}V_{k0}^{-1} + \lambda_1^{m-1} V_{j(n-1)} V_{(n-1)0}^{-1} + V_{jn}V_{n0}^{-1} }\\
&= A_0 + A_1 \lambda_0^{m},
\end{align}
for some real constants $A_0,A_1$.

\section{Proof of Corollary \ref{coro:no_crosstalkPauli}}
\label{app:no_crosstalkPauli}

\begin{proof}[Proof of Corollary \ref{coro:no_crosstalkPauli}.]
Let $p_j,q_j$ be the average leakage rate and seepage rate in the $j$-th qubit defined as in Eq. \eqref{eq:p_iq_i}. Since the noise has no cross-talk, the average leakage rate can be calculated as
\begin{align}
    L_{\ave} &= \tr{{\Pi}_l \Lambda\pbra{\widetilde{\Pi}_c}}\\
    &=\tr{{\Pi}_l \bigotimes_j \Lambda_j\pbra{\widetilde{\Pi}_{c_j}}}\\
    &=1 - \prod_{j=1}^n\tr{{\Pi}_{c_j}\Lambda\pbra{\widetilde{\Pi}_{c_j}}}\\
    &= 1 - \prod_{j = 1}^n \pbra{1 - p_{j}}.
\end{align}
Similarly, the average seepage rate
\begin{align}
    S_{\ave} &= \tr{{\Pi}_c \Lambda\pbra{\frac{{\Pi}_l}{d_l}}}\\
    &=\tr{{\Pi}_c \Lambda\pbra{\frac{\Ibb- {\Pi}_c}{d_l}}}\\
    &= \frac{1}{d_l}\bigotimes_{j = 1}^n \tr{{\Pi}_{c_j}\Lambda_j\pbra{{\Pi}_{c_j} + {\Pi}_{l_j}}} - \frac{d_c}{d_l} \prod_{j = 1}^n (1 - p_j)\\
    &= \frac{1}{d_l} \prod_{j = 1}^n \pbra{2(1 - p_j) + 2q_j} -\frac{d_c}{d_l} \prod_{j = 1}^n (1 - p_j)\\
    &= \frac{2^n}{3^n - 2^n} \prod_{j = 1}^n \pbra{1 - p_j + q_j} -\frac{2^n}{3^n - 2^n} \prod_{j = 1}^n (1 - p_j).
\end{align}
where ${\Pi}_{c_j}$ and ${\Pi}_{l_j}$ denote the projector for computational and leakage subspaces in the $j$-th site respectively.
\end{proof}

\section{Condensed representation for two continuous noise channels}
\label{app:condensend_continuousTwo}

This section will give the condensed representation for the two continuous noise channels with the same formation as in Definition \ref{def:simplifiedSingle_iteNoise_iLRB}.

Let $\Lambda_P$ and $ \Lambda_T$ be two noise channel as defined in Definition \ref{def:simplifiedSingle_iteNoise_iLRB}, then
\begin{align}
\Lambda_{s} \pbra{{\Pi}_{c^{i-1}lc^{n-i}}} &= p_{i0}^s\ket{u_0}\bra{u_0} - p_{i0}^s\ket{u_i}\bra{u_i} + {\Pi}_{c^{i-1}lc^{n-i}}, \forall i\in [n],\\
\Lambda_{s}\pbra{{\Pi}_c} &=-\sum_{j=1}^n p_{0j}^s\ket{u_0}\bra{u_0}+\sum_{j}p_{0j}^s\ket{u_j}\bra{u_j} + {\Pi}_c,
\end{align}
where $u_i \in B_i$, and $p_{i0}^s,p_{0i}^s$ in $[0,1]$ are probabilities  for $s\in \cbra{P,T}$. Hence, we have
\begin{align}
&\tr{{\Pi}_{c^{j-1}lc^{n-j}}\Lambda_P \circ \Lambda_T \pbra{\widetilde{\Pi}_{c^{i-1}lc^{n-i}}}} = p_{0j}^{P}p_{i0}^{T}/\dim\pbra{{\Pi}_i}=2^{n+1}p_j^Pq_{i}^T, \forall i\ne j\in [n]\\
&\tr{{\Pi}_0 \Lambda_{P}\circ \Lambda_{T}(\widetilde{\Pi}_{c^{i-1}lc^{n-i}})} = \pbra{(1-p_{i0}^T)p_{i0}^P +  p_{i0}^T(1-\sum_{j=1}^n p_{0j}^P)}2^{1-n}= 2(1-2^n q_i^T)q_i^P + 2q_i^T (1-2^n \sum_{j=1}^n  p_j^P),\forall i\in [n]\\
&\tr{{\Pi}_{c^{i-1}lc^{n-i}} \Lambda_{P}\circ \Lambda_{T}(\widetilde{\Pi}_0)} = \pbra{(1-\sum_{j=1}^n p_{0j}^T)p_{0i}^P + p_{0i}^T(1-p_{i0}^P)}2^{-n} = (1-2^n\sum_{j=1}^n p_j^T)p_i^P + p_i^T(1-2^nq_{i}^P),\forall i\in [n]
\label{eq:tr_two_noise}
\end{align}
where $i\ne j$ and $i,j\in \cbra{1,...,n}$.
Let $Q$ be the condensed representation of $\Lambda_P \circ \Lambda_T$.
By the definition of $Q$ in Eq. \eqref{eq:channel_COR}, $Q_{ij}=\tr{{\Pi}_{c^{i-1}lc^{n-i}}\Lambda_P \circ \Lambda_T \pbra{\widetilde{\Pi}_{c^{j-1}lc^{n-j}}}}$ for $i,j\in \cbra{0,1,\ldots, n}$.
When $p_{0j}^s=p_{j0}^s=\frac{\bar{p}_s}{2^n}$ for any $j\in [n]$, the elements of $Q$ have the following formations
\begin{align}
Q_{ij} &= 2^{n+1}\bar{p}_P \bar{p}_T, \forall i\ne j\in [n]\\
Q_{0i} &= 2Q_{i0} = 2(\bar{p}_T + \bar{p}_P) -(n+1)2^{n+1}\bar{p}_P\bar{p}_T, \forall i\in [n]\\
Q_{ii} &= 1-\sum_{j\ne i}Q_{ji}.
\end{align}

\section{Eigenvalues for iLRB protocol}
\label{app:ilrb_eigenvalues}

Let $Q$ be defined as
\begin{align}
Q_{ij} &= 2^{n+1}\bar{p}_{\Pbb}\epsilon_T, \forall i\ne j\in [n]\\
Q_{0i} &= 2Q_{i0} = 2(\epsilon_T + \bar{p}_{\Pbb}) -(n+1)2^{n+1}\bar{p}_{\Pbb}\epsilon_T, \forall i\in [n]\\
Q_{ii} &= 1-\sum_{j\ne i}Q_{ji}, \forall i\in \cbra{0,1,\ldots, n}.
\end{align} 

In the following, we will prove that
\begin{equation}
    \det(Q-\lambda {\Ibb}) = (1 - \lambda) \pbra{(1 - 2^n\bar{p}_{\Pbb})(1-2\epsilon_{\Pbb}) - \lambda}^{n-1} \pbra{\pbra{1 - (n + 1)2^n\bar{p}_{\Pbb}}\pbra{1 - (n + 2)\epsilon_T}-\lambda}
\label{eq:ilrb_eigenvalues}
\end{equation}
Since the summation of any columns of $Q$ equals one, i.e., $\sum_{k=0}^n Q_{kj} = 1$ for any $j$, where $Q_{kj}$ is the $(k,j)$th element of $Q$, where $k,j\in \cbra{0,\ldots, n}$.
Let $Q'$ be the matrix obtained by adding all of the rows in set $\cbra{1,\ldots, n+1}$ into the $0$-th row of $Q-\lambda \Ibb$, and then adding the $k+1$-th row into the $k$-th row for any $k\in \cbra{0,\ldots, n}$.
we can simplify $\det(Q-\lambda {\Ibb})$ to
\begin{align}
    \det(Q-\lambda {\Ibb}) &= \det(Q')\\
    &=\det\begin{pmatrix}
    1-\lambda & 1-\lambda & 1-\lambda  & \ldots &1-\lambda &1-\lambda\\
    0 & A-\lambda & -(A-\lambda) &\ldots & 0 & 0\\
    0 & 0 & A-\lambda &\ldots & 0 &0\\
    \vdots & \vdots & \vdots & \vdots & \vdots& \vdots\\
    0 & 0 & 0 & \ldots & A-\lambda & -(A-\lambda)\\
    B & C & C & \ldots & C & D -\lambda
    \end{pmatrix} 
    \label{eq:ilrb_condensed_simplify}\\
    &= 
   (1-\lambda)(A-\lambda)^{n-1} \det\begin{pmatrix}
    1 & 1 & 1 &\ldots & 1 & 1 \\
    0 & 1 & -1&\ldots & 0 & 0 \\
    0 &  0& 1&\ldots & 0 & 0 \\
    \vdots & \vdots& \vdots& \vdots& \vdots& \vdots \\
    0 & 0 & 0 & \ldots & 1 & -1\\
    0 & 0 & 0 & \ldots & 0 & D-\lambda - B + (n-1)(C-B)
    \end{pmatrix},
\end{align}
where $A = D-C, B = \epsilon_T + \bar{p}_{\Pbb} -(n+1)2^{n}\bar{p}_{\Pbb}\epsilon_T, C=2^{n+1}\bar{p}_{\Pbb}\epsilon_T, D = 1- 2(\epsilon_T + \bar{p}_{\Pbb}) + 2^{n+2}\bar{p}_{\Pbb}\epsilon_T$. It is easy to check
\begin{align}
    \det(Q-\lambda {\Ibb}) &= (1-\lambda)\pbra{(n-1)C(A-\lambda)^{n-1}+ (D-\lambda)(A-\lambda)^{n-1} - nB (A-\lambda)^{n-1}}\\
    &= (1-\lambda)(A-\lambda)^{n-1}\pbra{(n-1)C + D -nB -\lambda}\\
    &=(1 - \lambda) \pbra{ 1- 2(\epsilon_T + \bar{p}_{\Pbb}) + 2^{n+1}\bar{p}_{\Pbb}\epsilon_T- \lambda}^{n-1} \pbra{\pbra{1 - (n + 2)(\bar{p}_{\Pbb}+\epsilon_T) + (n+1)(n+2)2^n\bar{p}_{\Pbb}\epsilon_T}-\lambda}.
\end{align}

\section{iLRB protocol with free-preparation noise}
\label{app:iLRB_sp_free}

\begin{proof}[Proof of the single decay for Theorem \ref{thm:ilrb_multiq}]

Let the eigenstate for $\lambda_0=1$ be $\vec{w}=(w_1,w_2,\ldots, w_n)$, the eigenstates for eigenvalue $\lambda_1 = 1 - 2(\epsilon_T + \bar{p}_{\Pbb}) + 2^{n+1}\bar{p}_{\Pbb}$ be
$\vec{u}^{(s)} = (u_0,u_1\ldots, u_n)^T$ for $s\in [n-1]$, and the eigenstate for 
$$\lambda_{2} = \pbra{1 - (n + 2)(\bar{p}_{\Pbb}+\epsilon_T) + (n+1)(n+2)2^n\bar{p}_{\Pbb}\epsilon_T}$$ be $\vec{v} = (v_0,v_1, \ldots, v_n)$.
By Eq. \eqref{eq:ilrb_condensed_simplify}, we see that $\vec{u}^{(s)}$ have the properties
\begin{align}
&\sum_{k = 0}^n u_k = 0,\\
&B u_0+ C\sum_{k = 1}^n u_k = 0,
\end{align}
where $B = \epsilon_T + \bar{p}_{\Pbb} -(n+1)2^{n}\bar{p}_{\Pbb}\epsilon_T, C=2^{n+1}\bar{p}_{\Pbb}\epsilon_T$. 
Hence we have $u_0=0$ and $\sum_{k = 1}^n u_k = 0$. Similarly, we have
\begin{align}
w_0 &= 2w_n, w_k = w_n \forall k \in [n],\\
v_0 &= -nv_n, v_k = v_n \forall k\in [n].
\end{align}
Therefore, all of the $n-1$ vectors $\vec{u}^{(s)}$ are orthogonal to $\vec{w}$ and $\vec{v}$. Let 
\begin{align}
V = \sbra{\vec{w}, \vec{u}^{(1)}, \ldots, \vec{u}^{(n-1)},\vec{v}},   
\end{align}
then $V^{-1}_{j0} = 0$ for $j\in [n-1]$. Let the vector representation for $\supket{\rho_0} $ be $\pbra{1,0,\ldots, 0}$. Let the $(n+1)\times (n+1)$ diagonal matrix $\Sigma = \diag(1,\lambda_1,\lambda_1,\ldots, \lambda_1, \lambda_2)$. 
Since there exist some coefficients $\cbra{\alpha_k}_{k = 0}^n$ such that $\supbra{\hat{{\Pi}}_c} = \sum_{k = 0}^n \alpha_k \supbra{{\Pi}_{c^{k-1}lc^{n-k}}}$, then we have
\begin{align}
p_{{\Pi}_c}(m)&=\supbra{\hat{{\Pi}}_c} Q_{\Lambda_\Pbb \circ \Lambda_{T}}^{m-1}  \supket{\rho_0} \\
& = \sum_{j=0}^n \alpha_j Q_{\Lambda_\Pbb \circ \Lambda_{T}}^{m-1}(j,0)\\
&=  \sum_{j=0}^n \alpha_j \sum_{k=0}^n V_{jk} \Sigma_k^{m-1} V_{jk}^{-1}\\
&= \sum_{j=0}^n \alpha_j\pbra{V_{j0}V_{00}^{-1} + \lambda_1^{m-1}\sum_{k=1}^{n-1} V_{jk}V_{k0}^{-1} + \lambda_2^{m-1} V_{jn} V_{n0}^{-1}}\\
&= A_0 + A_1 \lambda_2^{m},
\end{align}
where $\Sigma_k$ be the $k$-th diagonal element of the diagonal matrix $\Sigma$, and $A_0,A_1$ are some real numbers.
Similarly, we have $p_{{\Pi}_c,\Pbb}(m) = B_0 + B_1 \lambda_{\Pbb}^m$, where $\lambda_\Pbb = 1 - (n+2)\bar{p}_{\Pbb}$. 
\end{proof}

\section{Gate representations
}
\label{app:gate_reps}

Here we provide the matrices representation of two-qubit gates iSWAP, SQiSW, and CZ operating on the entire Hilbert space $\Hcal$ as follows.

\begin{align}
    &\iswap = \begin{pmatrix}
    1 & 0 & 0 & 0 & 0 & 0 & 0 & 0 & 0\\
    0 & 0 & 0 & 1 & 0 & 0 & 0 & 0 & 0\\
    0 & 0 & 1 & 0 & 0 & 0 & 0 & 0 & 0\\
    0 & 1 & 0 & 0 & 0 & 0 & 0 & 0 & 0\\
    0 & 0 & 0 & 0 & 1 & 0 & 0 & 0 & 0\\
    0 & 0 & 0 & 0 & 0 & 1 & 0 & 0 & 0\\
    0 & 0 & 0 & 0 & 0 & 0 & 1 & 0 & 0 \\
    0 & 0 & 0 & 0 & 0 & 0 & 0 & 1 & 0\\
    0 & 0 & 0 & 0 & 0 & 0 & 0 & 0 & 1
    \end{pmatrix},\quad
    \SQiSW = \begin{pmatrix}
    1 & 0 & 0 & 0 & 0 & 0 & 0 & 0 & 0\\
    0 & \frac{\sqrt{1}}{2} & 0 & \frac{i}{\sqrt{2}} & 0 & 0 & 0 & 0 & 0\\
    0 & 0 & 1 & 0 & 0 & 0 & 0 & 0 & 0\\
    0 & \frac{i}{\sqrt{2}} & 0 & \frac{1}{\sqrt{2}} & 0 & 0 & 0 & 0 & 0\\
    0 & 0 & 0 & 0 & 1 & 0 & 0 & 0 & 0\\
    0 & 0 & 0 & 0 & 0 & 1 & 0 & 0 & 0\\
    0 & 0 & 0 & 0 & 0 & 0 & 1 & 0 & 0 \\
    0 & 0 & 0 & 0 & 0 & 0 & 0 & 1 & 0\\
    0 & 0 & 0 & 0 & 0 & 0 & 0 & 0 & 1
    \end{pmatrix}.\\
   &\text{CZ} = \diag\pbra{1,1,1,1,-1,1,1,1,1}.
\end{align}

\section{Leakage damping noise model of $\iswap/\SQiSW$ and CZ gate}\label{app:iswap-noise-model}
$\iswap$ and CZ gate are the most commonly realized two-qubit gates in the modern flux-tunable superconducting quantum devices~\cite{Krantz_2019}, In this appendix, we will introduce how to extract the leakage damping noise model of these two gates according to the effective Hamiltonian of the superconducting quantum system.

The Hamiltonian of a two-qubit quantum system in flux-tunable superconducting quantum devices reads~\cite{Krantz_2019}
\begin{align}
    H = \sum_{j} \omega^A_j\ket{j}_A\bra{j}_A+\omega^B_j\ket{j}_B\bra{j}_B+g(a_A^{\dagger}a_B+a_B^{\dagger}a_A)
    \label{eq:effective-hamiltonian}
\end{align}
where $\omega^A_j,\omega^B_j$ are the energies (with $\hbar=1$) of the $j$-th excited states of qubit A and qubit B, $a_{A/B}^{(\dagger)}$ are the annihilation(creation) operator to the quantum harmonic oscillator eigenstates of the two qubits, and $g$ is the coupling strength of the two qubits. 
In the idle case, the two qubits are detuned so that they have different energy spectra and the coupling term can be omitted, i.e., $g=0$. When implementing some two-qubit gates, the energy spectra of the two qubits can be tuned by the external magnetic flux and we have $g\neq 0$.
In the above Hamiltonian Eq.~\eqref{eq:effective-hamiltonian}, we have used the rotating wave approximation(RWA)~\cite{Krantz_2019}, which drops fast rotating terms, so that $H$ can be decomposed into several orthogonal subspaces
\begin{displaymath}
H= \left( \begin{array}{cccc}
    H_0 &  & &\\
     & H_1 & &\\
    &  & H_2 &\\
    & & & \ddots
    \end{array} \right),
\end{displaymath}
and $H_0, H_1, H_2$ are the ground state Hamiltonian, single excitation Hamiltonian and double excitation Hamiltonian respectively with
\begin{align}
H_{0}=\left( \begin{array}{c}
        \omega_{00} 
        \end{array} \right),
H_1=\left( \begin{array}{cc}
        \omega_{01} & g \\
        g & \omega_{10} \\
        \end{array} \right),
H_2=\left( \begin{array}{ccc}
    \omega_{11} & \sqrt{2}g & \sqrt{2}g \\
    \sqrt{2}g & \omega_{02} & 0 \\
    \sqrt{2}g & 0 & \omega_{20} \\
        \end{array} \right).
\end{align}
Here we denote $\omega_{ij}:= \omega_i^A+\omega_j^B$. Notice that with RWA, we do not need to consider the interaction between, e.g., state $\ket{00}$ and $\ket{11}$, and only the double excitation Hamiltonian $H_2$ leads to leakage and seepage when some two-qubit quantum gates are implemented. In other words, the leakage amplitude damping noise model of the two-qubit quantum gate would only involve the states $\ket{11},\ket{20},\ket{02}$.

When implementing the iSWAP gate, the two qubits are working at the same frequency, which means the energy spectra of the two qubits are the same. In that case, by assuming the ground state energy $\omega_0^{A}=\omega_0^{B}=0$, we can denote $\omega_1^{A}=\omega_1^{B}=\omega$, and $\omega_2^{A}=\omega_2^{B}=2\omega-\eta$, where $\eta$ is conventionally called \textit{anharmonicity} which quantifies the difference between energy gap $\omega_1^{(A/B)}-\omega_0^{(A/B)}$ and energy gap $\omega_2^{(A/B)}-\omega_1^{(A/B)}$. Then, the double excitation Hamiltonian can be rewritten as
\begin{align}
    H_2=\left( \begin{array}{ccc}
    2\omega & \sqrt{2}g & \sqrt{2}g \\
    \sqrt{2}g & 2\omega-\eta & 0 \\
    \sqrt{2}g & 0 & 2\omega-\eta \\
        \end{array} \right)
        \label{eq:double-excitation-Hamiltonian}
\end{align}
With this effective Hamiltonian, we can see explicitly the symmetry between states $\ket{20}$ and $\ket{02}$. With these intuitions, we can write the leakage damping noise model of $\iswap$ gate as
\begin{align}
\begin{aligned}
  \Lambda_{\iswap}(\rho) &=
  E_0 \rho E_0^\dagger + \sum_{(k,k')\in \Scal}E_{kk'}\rho E_{kk'}^\dagger
\label{eq:noise_iswap_total_app}
\end{aligned}
\end{align}
where $\Scal = \cbra{(02,11),(11,02),(20,11),(11,20)}$,
\begin{equation}
    E_0 = \sqrt{ 1- \epsilon_{02,11}}\ket{02}\bra{02} + \sqrt{1 - \epsilon_{11,20}-\epsilon_{11,02} }\ket{11}\bra{11} + \sqrt{1 - \epsilon_{20,11}}\ket{20}\bra{20} + {\Pi}_{\Hcal\backslash\cbra{02,11,20}},
    \label{eq:iswap_noise_cptp_app}
\end{equation}
and $E_{kk'} = \sqrt{\epsilon_{kk'}}\ket{k'}\bra{k}$ for $(k,k')\in \Scal$. The symmetry between $\ket{02}$ and $\ket{20}$ implies $\epsilon_{20,11}=\epsilon_{02,11}$ and $\epsilon_{11,20}=\epsilon_{11,02}$.

To see how the leakage damping noise model can be obtained from the Hamiltonian Eq.~\eqref{eq:double-excitation-Hamiltonian}, within the double excitation subspace, we assume the initial density matrix can be parameterized as
\begin{align}
    \rho_0\equiv \rho_{11}\ket{11}\bra{11}+\rho'\ket{02}\bra{02}+\rho'\ket{20}\bra{20}=
    \left( \begin{array}{ccc}
        \rho_{11} &  &  \\
         & \rho' &  \\
         &  & \rho' \\
            \end{array} \right).
\end{align}
Here we assume the coefficients for $\ket{02}$ and $\ket{20}$ are the same, due to the symmetric structure in the Hamiltonian (\ref{eq:double-excitation-Hamiltonian}). Additionally, we assume that the evolution of the initial state follows the Schr\"odinger equation. Thus we have
\begin{align}
    e^{-iH_2t}\rho_0 e^{iH_2t}
    &= \left( \begin{array}{ccc}
        (1-2\hat{\epsilon})\rho_{11}+2\hat{\epsilon}\rho' & * & * \\
        * & \rho' (1-{\hat{\epsilon}})+{\hat{\epsilon}}\rho_{11} & * \\
        * & * & \rho'(1-{\hat{\epsilon}})+{\hat{\epsilon}}\rho_{11} \\
    \end{array} \right)\label{eq:raw-leakage-model}\\
    \hat{\epsilon}:=\hat{\epsilon}(t)&=\frac{16g^2}{16g^2+\eta^2}(1-\cos(\sqrt{16g^2+\eta^2}t))\label{eq:expression-of-epsilon}
\end{align}
Here we use $(*)$ to denote some irrelevant off-diagonal entries in the density matrix. If we focus on the diagonal entries of the resulting density matrix, compared with $\Lambda_{\iswap}(\rho_0)$, it can be realized that we can identify $\hat{\epsilon}=\epsilon_{kk'},  \forall(k,k')\in\Scal$. Considering the definition of the average leakage and seepage rate in Eq.~\eqref{eq:leakage_seepage_rate}, if all the above approximations and assumptions hold, the leakage damping noise model and the real quantum gate would have the same average leakage and seepage rate, which verifies the reasonability of the leakage damping noise model Eq. \eqref{eq:noise_iswap_total} for $\iswap$ gate in the main text. The $\SQiSW$ gate is similar to the $\iswap$ gate, with an only difference at the evolution time $t$ in Eq. \eqref{eq:expression-of-epsilon}. Thus the leakage damping noise model of $\SQiSW$ gate can also be described in Eq. \eqref{eq:noise_iswap_total}, where the free parameter $\epsilon$ is different from the one for $\iswap$ gate.

To quantify the magnitude of $\hat{\epsilon}$, notice that for $\iswap$ gate, the evolution time $t=2\pi/g$~\cite{Krantz_2019}, so $\hat{\epsilon}$ in Eq. \eqref{eq:expression-of-epsilon} is determined by the anharmonicity $\eta$ and coupling strength $g$ of the two-qubit system. Taking experimental data from, e.g., Ref.~\cite{PhysRevLett.129.010502}, where $\eta=-2\pi\times 1.87\mathrm{GHz}$ and $g=2\pi\times 11.2\mathrm{MHz}$, we have $\hat{\epsilon}_{\iswap}\sim 2.8\times 10^{-4}$. We will use this magnitude of $\epsilon$ in our iLRB numerical experiments for the $\iswap$ gate.

\begin{figure}
    \centering
    \includegraphics[width=0.5\textwidth]{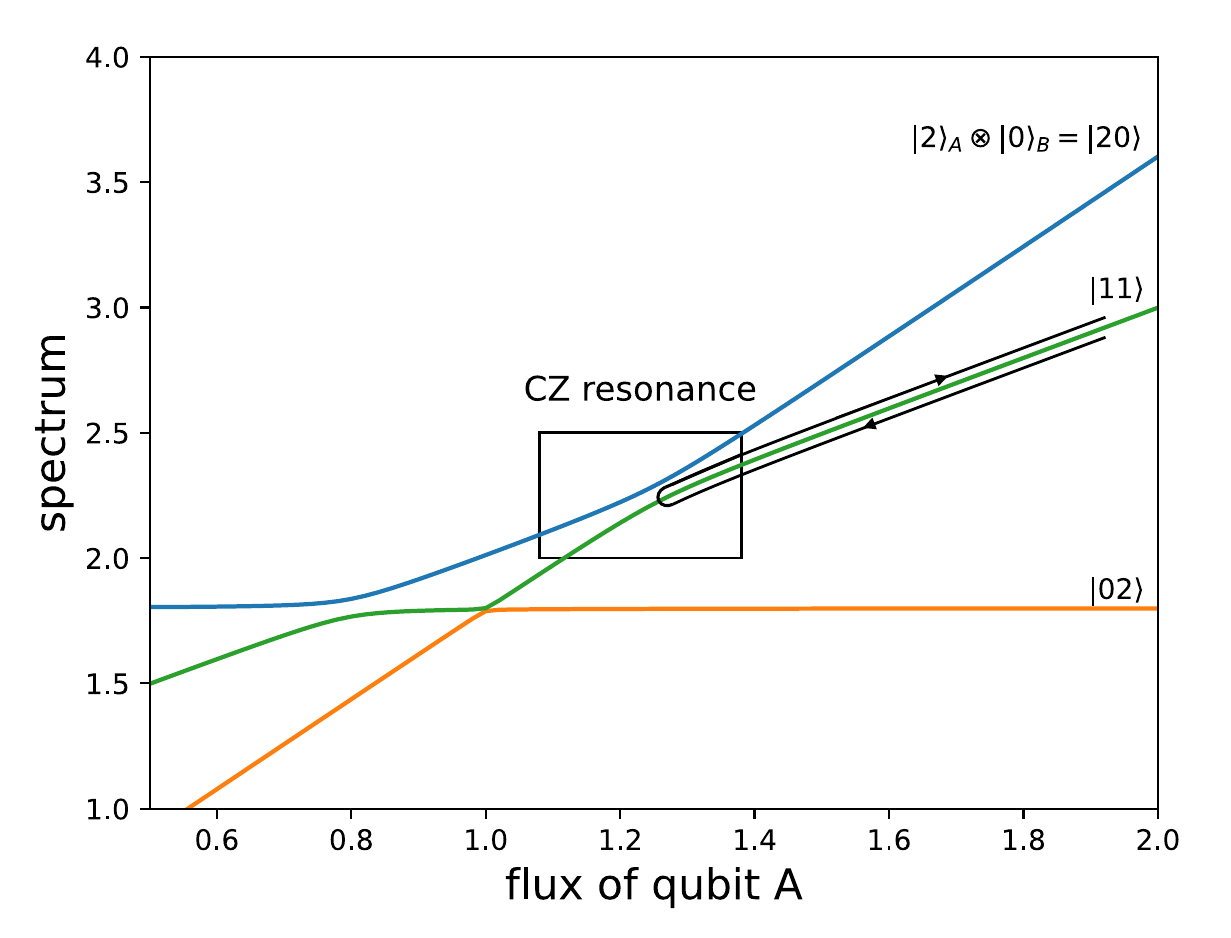}
    \caption{Illustration of the spectrum of the double excitation Hamiltonian $H_2$, as a function of the local magnetic flux of qubit $A$. The trajectory of implementing the CZ gate is denoted as the black curve. At the CZ resonance point, the energy of $\ket{11}$ is close to that of $\ket{20}$. Thus state $\ket{11}$ has larger probability of leaking to $\ket{20}$, compared with leaking to state $\ket{02}$. More details about this figure can be found in~\cite{Krantz_2019}.}
    \label{fig:H2_spectrum}
\end{figure}

The leakage damping noise model of the CZ gate can be obtained by generalizing that of the $\iswap$ gate. When implementing the CZ gate in superconducting quantum devices, the effective Hamiltonian is still in Eq. \eqref{eq:effective-hamiltonian} under RWA, thus the corresponding leakage damping noise model involves states $\ket{11},\ket{20},\ket{02}$. Different from the $\iswap$ gate, here, we do not take the two qubits to resonance. To realize a CZ gate, by tuning the eigenenergy of one of the qubits, one would bring the state $\ket{11}$ to resonate with, e.g. $\ket{20}$, to accumulate phase on $\ket{11}$ at the CZ resonance point (See figure~\ref{fig:H2_spectrum}), which means the energies of $\ket{11}$ and $\ket{20}$ are very close during the tuning and at the resonance point. Thus state $\ket{11}$ has larger probability of leaking to $\ket{20}$, compared with leaking to state $\ket{02}$~\cite{Martinis_2014}. Thus, different from $\iswap$ gate, if we still write the leakage damping noise model of CZ gate as the form in Eq.~\eqref{eq:noise_iswap_total_app}, we have $\epsilon_{20,11}\neq \epsilon_{02,11}$ and $\epsilon_{11,20}\neq \epsilon_{11,02}$. Further, the explicit Hamiltonian evolution for the CZ gate can be described by the Landau-Zener transition~\cite{Martinis_2014,zener1932non}. It tells us that we can identify $\epsilon_{11,02}=\epsilon_{02,11}$ and $\epsilon_{11,20}=\epsilon_{20,11}$. Thus the operators in the Eq.~\eqref{eq:noise_iswap_total_app} can be parameterised as 
\begin{equation}
\begin{aligned}
    E_0 &= \sqrt{ 1- \epsilon_1 }\ket{02}\bra{02} + \sqrt{1 - \epsilon_1 - \epsilon_2 }\ket{11}\bra{11} + \sqrt{1 - \epsilon_2}\ket{20}\bra{20} + {\Pi}_{\Hcal\backslash\cbra{02,11,20}},\\
    E_{02,11}&=\sqrt{\epsilon_1}\ket{11}\bra{02},E_{11,02}=\sqrt{\epsilon_1}\ket{02}\bra{11},E_{20,11}=\sqrt{\epsilon_2}\ket{11}\bra{20},E_{11,20}=\sqrt{\epsilon_2}\ket{20}\bra{11}.
    \label{eq:cphase_noise_cptp_app}
\end{aligned}
\end{equation}
This noise model contains two parameters $\epsilon_1:=\epsilon_{11,02}=\epsilon_{02,11}$ and $\epsilon_2:=\epsilon_{11,20}=\epsilon_{20,11}$, which remain to be fitted by the iLRB experiments.

\section{Generalized noise model for 2-qubit gate}
\label{app:gen_noise_three_para}
Here we consider the noise model which contains the flip between the sites in leakage subspace $\Hcal_l$:
\begin{align}
\begin{aligned}
  \Lambda (\rho) &=
  E_0 \rho E_0^\dagger + \sum_{(k,k')\in \Scal}E_{kk'}\rho E_{kk'}^\dagger
\label{eq:noise_more_gen_total}
\end{aligned}
\end{align}
where $\Scal = \cbra{(02,11),(11,02),(20,11),(11,20),(12,21)}$,
\begin{equation}
    E_0 = \sqrt{ 1- \epsilon_1 }\ket{02}\bra{02} + \sqrt{1 - \epsilon_1 - \epsilon_2 }\ket{11}\bra{11} + \sqrt{1 - \epsilon_2}\ket{20}\bra{20} + \sqrt{1 - \epsilon_3}\ket{12}\bra{12} + \sqrt{1 - \epsilon_3}\ket{21}\bra{21} +  {\Pi}_{\Hcal\backslash\cbra{02,11,12,20,21}},
    \label{eq:more_gen_noise_cptp}
\end{equation}
and $E_{kk'} = \sqrt{\epsilon}\ket{k'}\bra{k}$ for $(k,k')\in \Scal$.
We give the average leakage rate with iLRB protocol with noise model $\Lambda$ defined in Eq. \eqref{eq:noise_more_gen_total}.

\begin{corollary}
For any two-qubit target gate $T$ with noise model $\Lambda$ in Eq. \eqref{eq:noise_more_gen_total}, with the assumption that Pauli group is noiseless, after performing the iLRB protocol, the expectation of the output probability $p_{{\Pi}_c}(m) = A + B_1 \lambda_1^m + B_2 \lambda_2^m$, where  
\begin{equation}
 \lambda_i\in\cbra{1-\frac{3}{8}\epsilon_1 - \frac{3}{8}\epsilon_2- \frac{1}{2}\epsilon_3 \pm \frac{1}{8}\sqrt{9\epsilon_1^2 + 9\epsilon_2^2 + 16\epsilon_3^2 -14\epsilon_1\epsilon_2 -8\epsilon_1\epsilon_3 - 8\epsilon_2\epsilon_3}},   
    \label{eq:three_para_lam}
\end{equation}
and the average leakage rates {$L =\frac{\epsilon_1 + \epsilon_2 + \epsilon_3}{4},$ and $ S= \frac{\epsilon_1 + \epsilon_2 + \epsilon_3}{5}$.}
\label{coro:leak_three_para}
\end{corollary}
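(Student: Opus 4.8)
The plan is to imitate the proof of Corollary~\ref{coro:leak_two_sec}, now carrying an extra subspace coupling induced by the $\ket{12}\!\leftrightarrow\!\ket{21}$ Kraus pair. Because $\Tcal$ is two-qubit ($n=2$) and the Pauli gates are assumed noiseless, I would first reduce the iLRB output to a single matrix power: by Lemma~\ref{lem:PerfectPauli} and the identity $\bar{\Pcal}\circ\Tcal=\bar{\Pcal}$ — which holds since $\Tcal$ acts unitarily inside $\Hcal_c$ and as the identity on the leakage levels (cf.\ Appendix~\ref{app:gate_reps} and the remark after Corollary~\ref{coro:leak_two_sec}) — each averaged block $\Ebb_{\Pcal}[\Pcal\circ\Tcal\circ\Lambda]$ collapses to $\bar{\Pcal}\circ\Lambda$, so the computation in the proof of Theorem~\ref{thm:lrb_global} applies verbatim and gives $p_{{\Pi}_c}(m)=\supbra{\hat{\Pi}_c}\,Q^{m-1}\,\supket{\tilde{\rho}_0}$, where $Q=Q_\Lambda$ and $\tilde{\rho}_0=\Lambda(\hat{\rho}_0)$.

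Next I would write $Q$ explicitly. For $n=2$ the subspaces are $\Hcal_{cc}=\Hcal_c$, $\Hcal_{cl}=\spans\{\ket{02},\ket{12}\}$, $\Hcal_{lc}=\spans\{\ket{20},\ket{21}\}$ and $\Hcal_{ll}=\spans\{\ket{22}\}$, and $Q_{\bm i,\bm j}=\tr{{\Pi}_{\bm i}\Lambda(\widetilde{\Pi}_{\bm j})}$ is obtained by pushing each $\widetilde{\Pi}_{\bm j}$ through the Kraus operators in Eq.~\eqref{eq:noise_more_gen_total}. The point to watch is that $\ket{12}$ lies in $\Hcal_{cl}$ while $\ket{21}$ lies in $\Hcal_{lc}$, so the new pair couples $\Hcal_{cl}$ and $\Hcal_{lc}$ (not $\Hcal_c$) and contributes off-diagonal entries $Q_{cl,lc}=Q_{lc,cl}=\epsilon_3/2$; the $\ket{22}$ sector decouples with eigenvalue $1$. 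This yields a $4\times4$ column-stochastic $Q$ whose only nontrivial block acts on $\{cc,cl,lc\}$:
\begin{equation}
Q_3=\begin{pmatrix}
1-\tfrac{\epsilon_1+\epsilon_2}{4} & \tfrac{\epsilon_1}{2} & \tfrac{\epsilon_2}{2}\\[2pt]
\tfrac{\epsilon_1}{4} & 1-\tfrac{\epsilon_1+\epsilon_3}{2} & \tfrac{\epsilon_3}{2}\\[2pt]
\tfrac{\epsilon_2}{4} & \tfrac{\epsilon_3}{2} & 1-\tfrac{\epsilon_2+\epsilon_3}{2}
\end{pmatrix}.
\end{equation}

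Then I would read off the spectrum. The $\Hcal_{ll}$ block and the all-ones left eigenvector both give eigenvalue $1$, so $\lambda=1$ has multiplicity two and only feeds the constant term; the other two eigenvalues $\lambda_1,\lambda_2$ are the roots of $\det(Q_3-\lambda\Ibb)$ after dividing out $(1-\lambda)$. The cleanest route is to write $Q_3=\Ibb+M$, note that $M$ has zero column sums (hence $0$ is an eigenvalue of $M$), and compute $\mathrm{tr}\,M=-\tfrac34(\epsilon_1+\epsilon_2)-\epsilon_3$ and the sum of $2\times2$ principal minors $E_2=\tfrac12(\epsilon_1\epsilon_2+\epsilon_1\epsilon_3+\epsilon_2\epsilon_3)$; then $\lambda_{1,2}=1+\tfrac12\mathrm{tr}\,M\pm\tfrac12\sqrt{(\mathrm{tr}\,M)^2-4E_2}$, and simplifying the discriminant gives precisely Eq.~\eqref{eq:three_para_lam}. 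The form $p_{{\Pi}_c}(m)=A+B_1\lambda_1^m+B_2\lambda_2^m$ then follows by diagonalizing $Q$ in the generic case; coincident-eigenvalue configurations follow by continuity of the roots, just as in Appendix~\ref{app:lrb_specific}. Finally, the leakage and seepage rates come straight from the definition~\eqref{eq:leakage_seepage_rate} by evaluating $\tr{{\Pi}_l\Lambda(\widetilde{\Pi}_c)}$ and $\tr{{\Pi}_c\Lambda(\widetilde{\Pi}_l)}$ on the Kraus form, exactly as in the proof of Corollary~\ref{coro:leak_two_sec}.

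The obstacle here is bookkeeping rather than any new idea: one must keep straight that $\ket{12}$ and $\ket{21}$ sit in different single-site-leaked subspaces (so $Q_3$ is only ``almost'' symmetric), and then carry the discriminant simplification through to recognize the radical $9\epsilon_1^2+9\epsilon_2^2+16\epsilon_3^2-14\epsilon_1\epsilon_2-8\epsilon_1\epsilon_3-8\epsilon_2\epsilon_3$. I would also verify at the outset that $\Lambda$ in Eq.~\eqref{eq:noise_more_gen_total} is genuinely CPTP — in particular that the reverse hop $(21,12)$ is included alongside $(12,21)$ — since column-stochasticity of $Q$, on which the whole argument rests, requires trace preservation.
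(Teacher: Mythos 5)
Your proposal is correct and follows essentially the same route as the paper's own (very terse) proof of Corollary~\ref{coro:leak_three_para}: reduce the iLRB average to powers of the condensed-operator matrix of $\Lambda$ (the ideal target drops out because it acts as the identity on every leaked subspace), write that matrix on the $\{cc,cl,lc\}$ block, and read the eigenvalues and the rates off it. Two remarks from the comparison: first, your $Q_3$, with $\epsilon_3/2$ in \emph{both} coupling entries between $\Hcal_{cl}$ and $\Hcal_{lc}$ (and the reverse hop $(21,12)$ added, as you rightly insist, for trace preservation), is the correct column-stochastic matrix — the matrix displayed in the paper's proof has $2\epsilon_3$ in the $(cl,lc)$ entry, which breaks column-stochasticity and generically does not even have eigenvalue $1$, so it is a typo — and your $\mathrm{tr}\,M$/principal-minor computation does reproduce Eq.~\eqref{eq:three_para_lam} exactly. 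Second, you leave the final evaluation implicit: carrying it out, the definition~\eqref{eq:leakage_seepage_rate} gives $L_{\ave}=(\epsilon_1+\epsilon_2)/4$ and $S_{\ave}=(\epsilon_1+\epsilon_2)/5$, because the $\epsilon_3$ process only exchanges population between $\ket{12}$ and $\ket{21}$, both inside $\Hcal_l$; this agrees with the paper's proof body but not with the $(\epsilon_1+\epsilon_2+\epsilon_3)$ expressions printed in the corollary statement, so you should state these values explicitly and flag that discrepancy.
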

\begin{proof}
By the definition of $\Lambda$ in Eq. \eqref{eq:noise_more_gen_total}, we have
\begin{equation}
Q = \begin{pmatrix}
1 - \epsilon_1/4 - \epsilon_2/4 & \epsilon_1/2 & \epsilon_2/2 \\
\epsilon_1/4 & 1 - \epsilon_1/2 - \epsilon_3/2 & 2\epsilon_3 \\
\epsilon_2/4 & \epsilon_3/2 & 1 - \epsilon_2/2 - \epsilon_3/2
\end{pmatrix}
\end{equation}
with eigenvalues as shown in Eq. \eqref{eq:three_para_lam}. $L_{\ave} = 1 -Q_{cc,cc} = \frac{\epsilon_1 + \epsilon_2}{4}$ and $S_{\ave}= \frac{2}{5} (Q_{cc,cl} + Q_{cc,lc}) = \frac{\epsilon_1 + \epsilon_2}{5}$.
\end{proof}
Note that we can hardly determine all of these three parameters together. Nevertheless, we can determine the value of $L$ and $S$ if we have the prior knowledge of $\epsilon_1,\epsilon_2,\epsilon_3$. For instance, if we know that the noise has similar leakage for leakage subspace $\ket{02}$ and $\ket{20}$, we have $\bar{\epsilon}\approx \epsilon_1\approx \epsilon_2$ and $L\approx 5/8 + \lambda_1/8 - 3\lambda_2/4$ if $\bar{\epsilon}\geq 2\epsilon_3$, and $L\approx 5/8 + \lambda_2/8 - 3\lambda_1/4$ otherwise.

Corollary \ref{coro:leak_two_sec} can be obtained by letting $\epsilon_3 = 0$ in Corollary \ref{coro:leak_three_para}.

\end{document}